\documentclass[11pt]{llncs}

\usepackage{color}
\usepackage{url}

\usepackage[usenames,dvipsnames]{xcolor}
\usepackage{multirow}
\usepackage{graphics}
\usepackage{a4wide}
\usepackage{pdfpages}

\usepackage{todonotes}

\usepackage{caption}
\usepackage{graphicx}
\usepackage{booktabs}
\usepackage{enumitem}

\newcommand{\TBR}{{\rm TBR}}
\newcommand{\MP}{{\rm MP}}

\newcommand{\MAF}{{\rm MAF}}

\newcommand{\steven}[1]{\textcolor{black}{#1}}

\newcommand{\purple}{\textcolor{black}}



\newcommand{\stevenred}{\textcolor{black}}
\newcommand{\sidmablue}{\textcolor{black}}
\newcommand{\revision}[1]{\textcolor{black}{#1}}

\newtheorem{observation}{Observation}

\usepackage{amsmath}
\usepackage{graphicx}

\title{Deep kernelization for the Tree Bisection and Reconnnect (TBR) distance in phylogenetics}
\author{Steven Kelk\inst{1}, Simone Linz\inst{2}, Ruben Meuwese\inst{1} }

\institute{Department of Advanced Computing Science, Maastricht University, The Netherlands,\\ \email{steven.kelk@maastrichtuniversity.nl, ruben.meuwese@maastrichtuniversity.nl}
\and School of Computer Science, University of Auckland, New Zealand,\\
\email{s.linz@auckland.ac.nz}}

\providecommand{\keywords}[1]{\textit{Keywords:} #1}

\begin{document}
\maketitle

\begin{abstract}
We describe a kernel of size $9k-8$ for the NP-hard problem of computing the Tree Bisection and Reconnect (TBR) distance $k$ between two unrooted binary phylogenetic trees. \revision{To achieve this, we extend} the existing portfolio of reduction rules with three novel new reduction rules. Two of the rules are based on the idea of topologically transforming the trees in a distance-preserving way in order to guarantee execution of earlier reduction rules. The third rule extends the local neighbourhood approach introduced in \cite{kelk2020new} to more global structures, allowing new situations to be identified when deletion of a leaf definitely reduces the TBR distance by one. The bound on the kernel size is tight up to an additive term.
Our results also apply to the equivalent problem of computing a Maximum Agreement Forest (MAF) between two unrooted binary phylogenetic trees. We anticipate that our results will be more widely applicable for computing agreement-forest based dissimilarity measures.
\end{abstract}

\keywords{phylogenetics, agreement forest, TBR distance, kernelization, fixed parameter tractability.}

\section{Introduction}
\begin{figure}[t]
\center
\scalebox{1}{\input{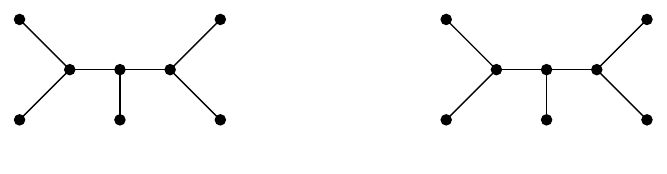_t}}
\caption{Two unrooted binary phylogenetic trees on $X=\{a,b,c,d,e\}$.}
\label{fig:trees}
\end{figure}

A phylogenetic tree is essentially a tree in the usual graph-theoretical sense whose leaves are bijectively labeled by a set of labels $X$ \cite{SempleSteel2003}. Such trees have a central role in the study of evolution. \revision{The label set} $X$ represents a set of contemporary species, the \sidmablue{unlabeled interior vertices} of the tree represent hypothetical (extinct) ancestors of $X$ and the topology of the tree encodes the history of branching events, such as speciation, which caused those ancestors to diversify into the set of species $X$. A central challenge in the field of phylogenetics is to accurately infer such trees from data obtained solely from $X$, such as DNA data \cite{felsenstein2004inferring}. However, it is not uncommon to obtain
different trees for the same set $X$; this can be methodological (e.g. different objective functions or multiple optima) or due to the fact that some species have multiple distinct tree signals woven into their
genome \cite{som2015causes}.  This motivates the use of distance measures in phylogenetics, which rigorously quantify the dissimilarity of two phylogenetic trees. Such distance measures can communicate important information about
the biological significance of the observed differences \cite{Whidden2014} and can help us to understand the behaviour of tree-construction algorithms that traverse the space of phylogenetic trees by applying local rearrangement operations  \cite{john2017shape,money2012characterizing}. Distances can also be used as part of the toolkit for constructing non-treelike hypotheses of evolution, known as phylogenetic networks \cite{HusonRuppScornavacca10}.

In this article we are concerned with one such distance, Tree Bisection and Reconnect (TBR) distance, which is a metric on the space of unrooted (i.e. undirected) binary phylogenetic trees (see Figure \ref{fig:trees}). This distance represents the minimum number of times a subtree of one tree has to be detached, and reattached elsewhere, in order to transform it into the other tree (see Figure \ref{fig:tbr}). It is NP-hard to compute 
\cite{AllenSteel2001,hein1996complexity}. The problem has an equivalent, alternative formulation using \emph{agreement forests}. An agreement forest is a partition of $X$ such that the spanning trees induced by the blocks of the partition are disjoint in both trees \emph{and} the induced
spanning trees have the same topology in both trees, up to suppression of degree 2 vertices. An agreement forest with a minimum number of blocks is called a \emph{maximum agreement forest} (MAF); it is well-known that the TBR distance ($d_{\TBR}$) is equal to the number of blocks in a MAF ($d_{\MAF}$) minus 1 \cite{AllenSteel2001}. In the last \revision{twenty years} maximum agreement forests have received sustained attention from the mathematics, computer
science and bioinformatics communities, see e.g. \revision{\cite{AllenSteel2001,extremal2019,chen2015parameterized,downey2013fundamentals,kelk2016monadic,olver2022duality,whidden2013fixed}}. 
One response to the NP-hardness of computing $d_{\TBR}$ is \emph{kernelization}. Here the goal is to apply polynomial-time preprocessing rules such that $d_{\TBR}$ is preserved, or decreased in a controlled fashion,
such that the reduced trees  have at most $f(d_{\TBR})$ leaves for some function $f$ that depends only on $d_{\TBR}$. For further background on kernelization we refer to the book \cite{kernelization2019}. The core idea is that,
if $d_{\TBR}$ is small, then the reduced trees (known as the \emph{kernel}) will be small even if $|X|$ is very large \sidmablue{and $d_{\TBR}$} can be computed on these small trees using optimized exponential-time algorithms. The use of kernelization in this context is not coincidental: phylogenetics continues to be a rich source of open problems in, and application opportunities for, parameterized complexity \cite{bulteau2019parameterized}. \revision{Indeed, the applicability of techniques from parameterized algorithmics to problems in phylogenetics (e.g. agreement forests) have already been mentioned in \cite{downey2013fundamentals,Niedermeier:2002wh}.}

In 2001 it was shown in \cite{AllenSteel2001} that the \emph{subtree} and \emph{chain} reduction rules suffice to obtain a kernel of size at most $28k$, where $k$ is $d_{\TBR}$. These function by reducing common pendant subtrees and common chains (i.e. caterpillar-like regions), respectively. Almost 20 years later the present authors
proved that the same reduction rules actually yield a kernel of size at most $15k-9$, and in fact that this is tight \cite{tightkernel}. A critical insight in \cite{tightkernel} was that computation of $d_{\TBR}$ (or  $d_{\MAF}$) can equivalently be viewed as the problem of adding the labels $X$, and a set of \emph{breakpoints} (essentially: edge cuts), to an (unknown) cubic multigraph, known as a \emph{generator}, such that the original two trees can be retrieved \stevenred{(see Figure \ref{fig:breakpoints})}. This insight was subsequently leveraged in \cite{kelk2020new} to design five new reduction rules which, when added to the subtree and chain reduction rules,
yield a tight kernel of size $11k-9$. An empirical follow-up showed that the new rules in \cite{kelk2020new} have added reductive power in practice \cite{van2022reflections}, and recently similar techniques have been used to design new
reduction rules for distances and agreement forests on \emph{rooted} trees \cite{kelk2022cyclic}.

\purple{Following the reduction in the size of the $d_{\TBR}$ kernel from $28k$ to $15k-9$, and then to $11k-9$,  it is natural to ask: can we do better than $11k-9$? In this article we answer the question affirmatively: we give a kernel of size $9k-8$, which is tight up to an additive term. We note that such an ongoing research effort is certainly not unprecedented in the parameterized complexity literature. For example, in a sequence of articles the kernel for the (unrelated) Feedback Vertex Set problem on planar graphs was progressively reduced from $112k$ to $13k$,  where $k$ is the size of a feedback vertex set of the input \cite{longabu2012,bodlaender2008long,bonamy201613k,xiao2014long}. Our result fits in this tradition.}


\purple{To obtain a kernel of size $9k-8$} we use the analytical and counting bottlenecks identified in \cite{kelk2020new} as a starting point, and use these to guide the design of three \revision{new}
reduction rules. The reduction rules have a \stevenred{very}
different \sidmablue{flavor} to what has come before. The
first new reduction rule addresses the following bottleneck: some of the topological structures that contribute heavily to the $11k-9$ bound, and which we thus wish to target for reduction, could potentially be leveraged
by a depth-bounded branching algorithm that recursively cuts edges in the input trees to obtain an agreement forest. However, the cuts applied by such a direct, non-preprocessing algorithm yield a different, more general problem,
on forests rather than trees, which is analytically far harder to deal with from a kernelization perspective. The first new reduction rule, Reduction 8, circumvents this by applying a $d_{\TBR}$-preserving transformation to one of the
trees, such that the classical subtree reduction rule can be applied and the number of leaves can be reduced; in this way we stay in the world of trees. The transformation itself requires a very careful analysis of the way common chains behave when
one of the chains is `interrupted' in the other tree. Essentially, the transformation works by deleting an edge in one of the trees and replacing it with an edge that is  `buried' inside an artificially lengthened common chain, which ensures that $d_{\TBR}$ does
not change.  \stevenred{Notably, the artificially lengthened chain is obtained by reversing the classical chain reduction rule\sidmablue{.} Reduction 8 is thus an example of where newer reduction rules make progress by undoing earlier reduction rules (see \cite{figiel2022there} for related discussions).}

The second new reduction rule, Reduction 9, works by identifying other topological structures which contribute heavily to the $11k-9$ bound, and transforming them into structures that
can be attacked by Reduction 8. Reduction 9 only applies when the region surrounding the topological structure contains many leaves\revision{. Conversely,} if Reduction 9 does not apply, the region is sparse. Reduction
10 is similar in spirit to Reduction 9, but is more direct: if it triggers, it is parameter reducing i.e. $d_{\TBR}$ is definitely reduced by 1. Once Reductions 8--10 no longer apply (or the earlier reduction rules),
there is extensive sparsity in the underlying generator, which we use to obtain the new bound of $9k-8$. We show that this bound is (essentially) tight by describing irreducible pairs of trees with TBR distance
$k$ that have $9k-9$ leaves.

We anticipate that the new reduction rules will yield new advances for other agreement-forest based distances in phylogenetics, contribute to a deeper understanding of the combinatorics of agreement forests, and facilitate the ongoing advancement of kernelization within phylogenetics. 

\section{Preliminaries}

\subsection{Notation and terminology}
Our notation closely follows \cite{kelk2020new}. Throughout this paper, $X$ denotes a non-empty finite set of \emph{taxa}.\\

\noindent{\bf Phylogenetic trees.} An {\it unrooted binary phylogenetic tree} $T$ on $X$  is a  simple, connected, and undirected tree whose leaves are bijectively labeled with $X$ and whose other vertices all have degree 3. The set $X$ is often referred to as the {\it leaf set} of $T$. See Figure \ref{fig:trees} for an example of two unrooted binary phylogenetic trees on $X=\{a,b,c,d,e\}$. For simplicity and since \revision{all} phylogenetic trees in this paper are unrooted and binary, we refer to an unrooted binary phylogenetic trees as a {\it phylogenetic tree}.
Two leaves, say $a$ and $b$, of $T$ are called a {\it cherry} $\{a,b\}$ of $T$ if they are adjacent to a common vertex. Moreover, for each $x\in X$, we use $p_x$ to denote the unique neighbor of $x$ in $T$ and refer to $p_x$ as the {\it parent} of $x$.

For $X'  \subseteq X$, we write $T[X']$ to denote the unique, minimal subtree of $T$ that connects all elements in $X'$. For brevity we call $T[X']$ the \emph{embedding} of  $X'$ in $T$. For an edge $e$ of $T$, we say that $T[X']$ {\it uses} $e$, if $e$ is an edge of $T[X']$.
Furthermore, we refer to the phylogenetic tree on $X'$ obtained from $T[X']$ by suppressing degree-2 vertices as  the {\it restriction of $T$ to $X'$}  and we denote this by $T|X'$. \\

\noindent{\bf Subtrees and chains.}
Let $T$ be a phylogenetic tree on $X$.  We say that a subtree of $T$ is {\it pendant} if it can be detached from $T$ by deleting a single edge. For $n\geq 2$, let $C = (\ell_1,\ell_2\ldots,\ell_n)$ be a sequence of distinct taxa in $X$. 
We call $C$ an $n$-chain of $T$ if there exists a walk \sidmablue{$p_{\ell_1},p_{\ell_2},\ldots,p_{\ell_n}$} in $T$ and the elements in \sidmablue{$p_{\ell_2},p_{\ell_3},\ldots,p_{\ell_{n-1}}$} are all pairwise distinct. Note that $\ell_1$ and $\ell_2$ may have a common parent or $\ell_{n-1}$ and $\ell_n$ may have a common parent. Furthermore, if  \sidmablue{$p_{\ell_1} = p_{\ell_2}$ or $p_{\ell_{n-1}} = p_{\ell_n}$} holds, then $C$ is said to be {\it pendant} in $T$. To ease reading, we sometimes write $C$ to denote the set $\{\ell_1,\ell_2,\ldots,\ell_n\}$. It will always be clear from the context whether $C$ refers to the associated sequence or set of taxa. If a pendant subtree $S$ (resp. an $n$-chain $C$) exists in two phylogenetic trees $T$ and $T'$ on $X$, we say that $S$ (resp. $C$) is a {\it common} subtree (resp. chain) of $T$ and $T'$. \\

\noindent{\bf Tree bisection and reconnection.} Let $T$ be a phylogenetic tree on $X$. Apply the following three-step operation to $T$:
\begin{enumerate}
\item Delete an edge in $T$ and suppress any resulting degree-2 vertex. Let $T_1$ and $T_2$ be the two resulting phylogenetic trees.
\item If $T_1$ (resp. $T_2$) has at least one edge, subdivide an edge in $T_1$ (resp. $T_2$) with a new vertex $v_1$ (resp. $v_2$) and otherwise set $v_1$ (resp. $v_2$) to be the single isolated vertex of $T_1$ (resp. $T_2$).
\item Add a new edge $\{v_1,v_2\}$ to obtain a new phylogenetic tree $T'$ on $X$.
\end{enumerate}
We say that $T'$ has been obtained from $T$ by a single  {\it tree bisection and reconnection (TBR) operation}.
Furthermore, we define the TBR {\it distance}  between two phylogenetic trees $T$ and $T'$ on $X$,  denoted by $d_\TBR(T,T')$, to be the minimum number of TBR operations that are required to transform $T$ into $T'$. 
To illustrate, the trees $T$ and $T'$ in Figure \ref{fig:tbr} have a TBR distance of 1.
It is well known that $d_\TBR$ is a metric~\cite{AllenSteel2001}. By building on an earlier result by Hein et al.~\cite[Theorem 8]{hein1996complexity}, Allen and Steel~\cite{AllenSteel2001} showed that computing the TBR distance is an NP-hard problem. \\

\begin{figure}[t]
\center
\scalebox{1}{\input{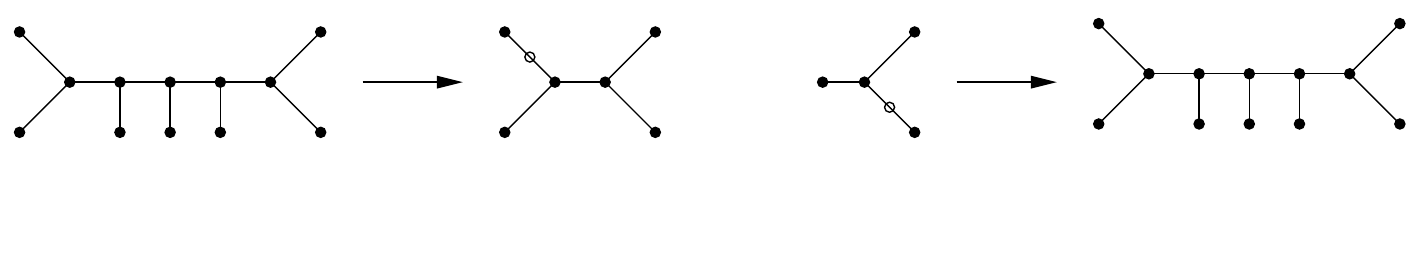_t}}
\caption{A single TBR operation that transforms $T$ into $T'$. First, \sidmablue{up to ignoring the open circle degree-2 vertices,} $T_1$ and $T_2$ are obtained from $T$ by deleting the edge $\{u_1,u_2\}$ in $T$. Second, $T'$ is obtained from $T_1$ and $T_2$ by subdividing an edge in both trees as indicated by the open circles $v_1$ and $v_2$ and adding a new edge $\{v_1,v_2\}$.}
\label{fig:tbr}
\end{figure}

\noindent {\bf Agreement forests.}
Let $T$ and $T'$ be two phylogenetic trees  on $X$. Furthermore, let $F = \{B_0, B_1,B_2,\ldots,B_k\}$ be a partition of $X$, where each block $B_i$ with $i\in\{0,1,2,\ldots,k\}$ is  referred to as a \emph{component} of $F$. We say that $F$ is an \emph{agreement forest} for $T$ and $T'$ if the following conditions hold. 
\begin{enumerate}
\item [(1)] For each $i\in\{0,1,2,\ldots,k\}$, we have $T|B_i = T'|B_i$.
\item [(2)] For each pair $i,j\in\{0,1,2,\ldots,k\}$ with $i \neq j$, we have that
$T[B_i]$ and $T[B_j]$ are vertex-disjoint in $T$, and $T'[B_i]$ and $T'[B_j]$ are vertex-disjoint in $T'$. 
\end{enumerate}
\noindent
Let $F=\{B_0,B_1,B_2,\ldots,B_k\}$ be an agreement forest for $T$ and $T'$. The \emph{size} of $F$ is simply its number of components; i.e. $k+1$. Moreover, an agreement forest with the minimum number of components (over all agreement forests for $T$ and $T'$) is called a \emph{maximum agreement forest (MAF)} for $T$ and $T'$. The number of components of a maximum agreement forest for $T$ and $T'$ is denoted by $d_\MAF(T,T')$. The following theorem is well known.

\begin{theorem}\cite[Theorem 2.13]{AllenSteel2001}
Let $T$ and $T'$ be two phylogenetic trees  on $X$. Then $$d_\TBR(T,T') = d_\MAF(T,T') - 1.$$
\end{theorem}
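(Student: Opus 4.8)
The plan is to prove the two inequalities $d_\TBR(T,T') \le d_\MAF(T,T') - 1$ and $d_\TBR(T,T') \ge d_\MAF(T,T') - 1$ separately, each by induction on the relevant parameter. Throughout I would exploit that $d_\TBR$ is a metric, so that the triangle inequality is available, and I would keep the two characterisations of a block $B_i$ of an agreement forest carefully apart: its embedding $T[B_i]$ (which governs the vertex-disjointness condition (2)) and its restriction $T|B_i$ (which governs the topology-agreement condition (1)). The base case of both inductions is the same trivial observation: a single-block agreement forest $\{X\}$ exists if and only if $T|X = T'|X$, i.e.\ $T = T'$, which is exactly $d_\TBR(T,T') = 0$.

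For the upper bound I would induct on the number of components $m = d_\MAF(T,T')$. Given a maximum agreement forest $F = \{B_0,\ldots,B_k\}$ with $m = k+1 \ge 2$ components, I would contract each embedding $T[B_i]$ to a single vertex; since the embeddings are vertex-disjoint and $T$ is connected, this yields a tree on $m$ super-vertices, which necessarily has a leaf. The corresponding block, say $B_i$, is then pendant in $T$ and can be detached by deleting a single edge. I would apply one TBR operation to $T$ that detaches $B_i$ and reattaches it at the position prescribed by $T'$, producing a tree $T^*$ in which $B_i$ has been merged with a neighbouring block $B_j$, the reattachment being chosen precisely so that $T^* | (B_i \cup B_j) = T' | (B_i \cup B_j)$. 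Then $(F \setminus \{B_i,B_j\}) \cup \{B_i \cup B_j\}$ is an agreement forest for $T^*$ and $T'$ with $m-1$ components, so by induction $d_\TBR(T^*,T') \le m-2$; combined with $d_\TBR(T,T^*) \le 1$ and the triangle inequality this gives $d_\TBR(T,T') \le m-1 = d_\MAF(T,T')-1$.

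For the lower bound I would induct on $d = d_\TBR(T,T')$. For $d \ge 1$ I would pick an intermediate tree $T^*$ with $d_\TBR(T,T^*) = 1$ and $d_\TBR(T^*,T') = d-1$; by induction there is an agreement forest $F^*$ for $T^*$ and $T'$ with at most $d$ components. The single TBR operation separating $T$ and $T^*$ deletes one edge and reinserts it elsewhere, so I would argue that this perturbation invalidates the agreement-forest conditions for at most one block of $F^*$ — the unique block whose embedding in $T^*$ runs through the reattachment edge — and that splitting that block into two along the deleted edge restores both conditions. This produces an agreement forest for $T$ and $T'$ with at most $d+1$ components, whence $d_\MAF(T,T') \le d+1$, i.e.\ $d_\TBR(T,T') \ge d_\MAF(T,T')-1$.

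The main obstacle in both directions is the same: controlling the interaction between the cut-and-reconnect of a single TBR move and the two agreement-forest conditions, and in particular tracking how the suppression of degree-2 vertices alters the restrictions $T|B_i$. I would expect the crux to be the lower-bound claim that a single TBR affects only one block of $F^*$ and that splitting that block along the deleted edge simultaneously repairs vertex-disjointness (condition (2)) and topological agreement (condition (1)); making this precise requires a careful case analysis of where the two endpoints of the reattachment edge fall relative to the embeddings of the blocks of $F^*$.
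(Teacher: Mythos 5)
The paper does not prove this statement at all: it is imported verbatim as Theorem 2.13 of Allen and Steel (2001), so there is no in-paper proof to compare against. Your proposal is, in substance, the standard argument for that classical result: the lower bound on $d_\MAF$ via ``one TBR move perturbs at most one block, because the embeddings of the blocks of $F^*$ in $T^*$ are pairwise vertex-disjoint and hence at most one of them uses the reattachment edge, and splitting that block along the induced bipartition of $X$ repairs both conditions''; and the upper bound via detaching a pendant block and merging it with a block of $T'$. Both directions are sound in outline, and your identification of the crux (tracking conditions (1) and (2) through the suppression of degree-2 vertices) is accurate.

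One step of the upper bound is genuinely underspecified as written. You say $B_i$ is ``merged with a neighbouring block $B_j$, the reattachment being chosen precisely so that $T^*|(B_i\cup B_j)=T'|(B_i\cup B_j)$'', but condition (1) is not the only thing at stake: the merged block must also satisfy condition (2), and $T'[B_i\cup B_j]$ contains the whole path in $T'$ from $T'[B_i]$ to $T'[B_j]$. For an arbitrary ``neighbouring'' $B_j$ that path may pass through the embedding $T'[B_\ell]$ of a third block, destroying vertex-disjointness in $T'$. The fix is to choose $B_j$ minimizing the distance in $T'$ between $T'[B_i]$ and $T'[B_j]$; minimality forces the connecting path to be internally disjoint from every other embedding (otherwise the block it first enters would be strictly closer). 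With that choice made explicit, and with the routine verification that the reattachment point can always be taken to subdivide an edge of $T[B_j]$ (the attachment vertex of the path on $T'[B_j]$ necessarily has degree 2 there, since the degree-1 vertices of $T'[B_j]$ are leaves of $T'$), your induction goes through. The remaining issues you flag for the lower bound are indeed only bookkeeping.
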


\steven{A maximum agreement forest for the trees $T$ and $T'$ shown in Figure \ref{fig:tbr}, which have TBR distance 1, therefore contains two components. \revision{Here} $F=\{\{a,b,c,d\},\{e,f,g\}\}$ \revision{is the unique maximum agreement forest for $T$ and $T'$.}}


\subsubsection{Phylogenetic networks.}
An {\it unrooted binary phylogenetic network} $N$ on $X$  is a  simple, connected, and undirected graph whose leaves are bijectively labeled with $X$ and whose other vertices all have degree 3.  Let $E$ and $V$ be the edge and vertex set of $N$, respectively. As with phylogenetic trees, we refer to an unrooted binary phylogenetic network simply as a {\it phylogenetic network}. Furthermore, we define the {\it reticulation number} of a phylogenetic network $N$ as the number of edges in $E$ that need to be deleted from $N$ to obtain a spanning tree. More formally, we have $r(N) = |E|-(|V|-1)$. If $r(N)=0$, then $N$ is simply a phylogenetic tree on $X$.


Let $N$ be a phylogenetic network on $X$, and let $T$ be a phylogenetic tree on $X$. 
We say that $N$ \sidmablue{\it displays} $T$ if, up to suppressing degree-two vertices, $T$ can be obtained from $N$ by deleting edges and vertices, in which case, the resulting subgraph of $N$ is an {\it image} 
of $T$ in $N$. Observe that an image of $T$ in $N$ is a subdivision of $T$. See \stevenred{Figure \ref{fig:breakpoints} for an example \sidmablue{of the notion of displaying}.}

\subsubsection{Generators.}

Let $k$ be a positive integer.
For $k\geq 2$, a {\it $k$-generator} (or short {\it generator} when $k$ is clear from the context) is a connected cubic multigraph with edge set $E$ and vertex set $V$ such that $k=|E|-(|V|-1)$. 
The edges of a generator are  called its {\it sides}. Intuitively, given a phylogenetic network $N$ with $r(N)=k$,
we can obtain a $k$-generator by, repeatedly, deleting all (labeled and unlabeled) leaves and suppressing any resulting degree-2 vertices. We say that the generator obtained in this way {\it underlies} $N$. \sidmablue{An example of a 2-generator is shown in Figure \ref{fig:breakpoints}.} Now, let $G$ be a $k$-generator, let $\{u,v\}$ be a side of $G$, and let $Y$ be a set of leaves. The operation of subdividing $\{u,v\}$ with $|Y|$ new vertices and, for each such new vertex $w$, adding a new edge $\{w,\ell\}$, where $\ell\in Y$ and $Y$ bijectively labels the new leaves, is referred to as {\it attaching} $Y$  to  $\{u,v\}$ or as {\it decorating $\{u,v\}$ with $Y$}. 
Lastly, if at least one new leaf is attached to each loop and to each pair of parallel edges in $G$, then the resulting graph is a phylogenetic network $N$ with $r(N)=k$. Note that $N$ has no pendant subtree with more than a single leaf.

Hence, we have the following observation.

\begin{observation}\label{ob:gen}
Let $N$ be a phylogenetic network that has no pendant subtree with at least two leaves, and let $G$ be a generator. 
Then $G$ underlies $N$ if and only if $N$ can be obtained from $G$ by attaching a (possibly empty) set of leaves to each side of $G$.
\end{observation}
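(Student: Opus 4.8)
The plan is to prove Observation \ref{ob:gen} as a biconditional, establishing each direction separately. Both directions hinge on the defining operational relationship between a generator and the network it underlies, namely the repeated deletion of leaves and suppression of degree-2 vertices described just before the statement.

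First I would prove the reverse direction (``if''), which is the more routine one. Suppose $N$ is obtained from a generator $G$ by attaching a (possibly empty) set of leaves to each side. I would show that running the leaf-deletion-and-suppression process on $N$ recovers exactly $G$, so that $G$ underlies $N$ by definition. The key point is that attaching a set $Y$ to a side $\{u,v\}$ subdivides that side and hangs pendant leaves off it; deleting all those leaves and suppressing the resulting degree-2 vertices contracts the subdivided path back into the single original side $\{u,v\}$, returning the vertices $u$ and $v$ and the structure of $G$ untouched. Since $N$ has no pendant subtree with at least two leaves, every leaf of $N$ is attached directly to a vertex that lies on a subdivided side (no leaf hangs off another leaf's subtree), so the process terminates precisely at $G$ and not at some proper minor of it.

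Second I would prove the forward direction (``only if''). Suppose $G$ underlies $N$, i.e. $G$ is obtained from $N$ by repeatedly deleting leaves and suppressing degree-2 vertices. I would argue by reversing this reduction process step by step. Each deletion of a leaf $\ell$ together with the subsequent suppression of its former neighbor $p_\ell$ (now degree 2) can be undone by taking the edge created by the suppression, subdividing it with a new vertex, and re-attaching $\ell$; this is exactly one elementary ``attach a leaf'' operation in reverse. Here I would use the hypothesis that $N$ has no pendant subtree with at least two leaves: this guarantees that in each reduction step the deleted leaf's neighbor genuinely becomes degree 2 (rather than remaining part of a larger pendant structure), so that the reduction is a clean sequence of leaf-removals whose inverse is a clean sequence of leaf-attachments. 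Grouping all leaves that end up attached to the same side of $G$ into a single set $Y$, I obtain $N$ from $G$ by attaching one (possibly empty) set of leaves per side, as required.

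The main obstacle I expect is bookkeeping rather than conceptual: I must verify that the order in which leaves are deleted during the reduction does not affect which side of $G$ a given leaf is ultimately associated with, so that the partition of $N$'s leaves into per-side sets $Y$ is well-defined. This requires checking that suppressions commute appropriately and that no leaf's attachment point migrates between distinct sides of $G$ under different deletion orders; intuitively each leaf sits on a uniquely determined subdivided path whose endpoints are fixed vertices of $G$, but this needs to be stated carefully. The hypothesis that $N$ has no pendant subtree with more than a single leaf is precisely what rules out the degenerate cases (such as cherries of unlabeled vertices or stacked pendant leaves) that would otherwise make the correspondence ambiguous, and I would invoke it at exactly that point.
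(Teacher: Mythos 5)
Your proposal is correct and follows essentially the same reasoning the paper relies on: the paper states this observation without an explicit proof, treating it as immediate from the preceding definitions of \emph{underlies} and \emph{attaching}, and your two-direction unwinding (reversing the leaf-deletion/suppression process, with the no-large-pendant-subtree hypothesis ensuring each leaf hangs directly off a subdivided side so the reduction terminates at a cubic multigraph in one clean round) is precisely the argument being left implicit.
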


\subsubsection{Unrooted minimum hybridization.} In \cite{van2018unrooted}, it was shown that computing the TBR distance for a pair of phylogenetic trees $T$ and $T'$ on $X$ is equivalent to computing the minimum number of extra edges required to simultaneously explain $T$ and $T'$. More precisely, we set
$$h^u(T, T') = \min_ N\{r(N)\},$$ where the minimum is taken over all phylogenetic networks $N$ on $X$ that display $T$ and $T'$ \stevenred{(and possibly other \sidmablue{phylogenetic} trees)}. The value $h^u(T, T')$ is known as the {\it (unrooted) hybridization number} of $T$ and $T'$ \cite{van2018unrooted}.\\

\noindent The aforementioned equivalence is given in the next theorem that was established in~\cite[Theorem 3]{van2018unrooted}.

\begin{theorem}\label{t:tbr-equiv}
Let $T$ and  $T'$ be two phylogenetic trees on $X$. Then $$d_\TBR(T,T')=h^u(T,T').$$
\end{theorem}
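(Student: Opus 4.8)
The plan is to establish the two inequalities $h^u(T,T') \le d_\TBR(T,T')$ and $d_\TBR(T,T') \le h^u(T,T')$ separately. For the second inequality I will route through the agreement-forest characterisation, invoking the identity $d_\TBR = d_\MAF - 1$ (Theorem 2.13 of \cite{AllenSteel2001}), so that it suffices to exhibit an agreement forest whose size is controlled by the reticulation number of an optimal network.

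For the upper bound $h^u \le d_\TBR$, let $k = d_\TBR(T,T')$ and fix a shortest sequence of TBR operations $T = T_0, T_1, \ldots, T_k = T'$. I would build, by induction on $i$, a single phylogenetic network $N_i$ that displays all of $T_0, \ldots, T_i$ and satisfies $r(N_i) \le i$, starting from $N_0 = T_0$. For the inductive step, recall that the operation taking $T_i$ to $T_{i+1}$ deletes an edge $\{u_1,u_2\}$ and reconnects the detached subtree by subdividing two edges with vertices $v_1, v_2$ and adding the edge $\{v_1,v_2\}$. Rather than \emph{deleting} $\{u_1,u_2\}$, I keep it: inside the image of $T_i$ in $N_i$ I simply subdivide the two relevant edges with $v_1,v_2$ and add $\{v_1,v_2\}$. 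This adds two vertices and three edges, so $r(N_{i+1}) = r(N_i) + 1 \le i+1$, while every interior vertex still has degree $3$. Deleting the new edge $\{v_1,v_2\}$ and suppressing recovers $N_i$, so $N_{i+1}$ still displays $T_0, \ldots, T_i$; deleting $\{u_1,u_2\}$ and suppressing instead realises exactly the TBR reconnection, so $N_{i+1}$ also displays $T_{i+1}$. Taking $N = N_k$ yields a network displaying both $T$ and $T'$ with $r(N) \le k$, hence $h^u(T,T') \le k = d_\TBR(T,T')$.

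For the lower bound $d_\TBR \le h^u$, let $N$ attain the minimum, so $r(N) = h^u(T,T')$ and $N$ displays both $T$ and $T'$. Since $r(N)$ is the cycle rank of $N$, I can remove a set of edges to break all cycles; more carefully, I would select the cuts guided by fixed images of $T$ and of $T'$ in $N$, so that deleting them and suppressing degree-$2$ vertices partitions $X$ into at most $r(N)+1$ blocks. The intended claim is that this partition is an agreement forest: each block induces the same restriction in $T$ and in $T'$ because, inside $N$, it sits on a commonly displayed subtree, and distinct blocks have vertex-disjoint embeddings because the cuts separated their supports. This gives $d_\MAF(T,T') \le r(N)+1$, whence $d_\TBR(T,T') = d_\MAF(T,T') - 1 \le r(N) = h^u(T,T')$ by Theorem 2.13 of \cite{AllenSteel2001}. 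Together with the upper bound this yields the claimed equality.

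The main obstacle is this lower bound, specifically the passage from an optimal network to an agreement forest. The delicate points are (i) choosing which edges to cut, since the images of $T$ and of $T'$ need not use the same edges of $N$, so the cuts must respect both embeddings simultaneously, and (ii) verifying both agreement-forest conditions for the resulting partition, in particular that the two embeddings of every block coincide up to suppression of degree-$2$ vertices and that the embeddings of different blocks remain vertex-disjoint in \emph{each} of $T$ and $T'$. By contrast, the upper-bound construction is essentially mechanical, once one checks that retaining the deleted edge never raises an interior degree above $3$ and increases the reticulation number by exactly one per TBR operation.
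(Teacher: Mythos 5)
First, a framing remark: the paper does not prove Theorem~\ref{t:tbr-equiv} itself; it imports it from \cite[Theorem 3]{van2018unrooted}. So your attempt cannot be compared against an in-paper argument, only assessed on its own terms. Your direction $h^u(T,T')\le d_\TBR(T,T')$ is essentially correct and complete: retaining the bisected edge, placing $v_1,v_2$ on the paths of the image of $T_i$ corresponding to the subdivided edges, and adding $\{v_1,v_2\}$ raises the cycle rank by exactly one per operation, keeps all interior degrees equal to three, and the two alternative edge deletions recover $N_i$ and realise $T_{i+1}$ respectively. (You should add a word on the degenerate case where one side of the bisection is a single leaf, so that $v_1$ must subdivide the pendant edge of the network rather than coincide with the leaf, but this is routine.)

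The genuine gap is the lower bound $d_\TBR(T,T')\le h^u(T,T')$. You state the plan --- cut $r(N)$ edges ``guided by'' the two images and claim the leaf sets of the resulting pieces form an agreement forest --- and then explicitly defer the two points on which the entire inequality rests: which edges to cut, and why both agreement-forest conditions hold. As written this is a statement of intent, not a proof, and the choice of cuts is not a cosmetic detail: an arbitrary feedback edge set of $N$ will in general destroy one of the two images. The standard way to close the gap (and precisely the machinery this paper sets up in its preliminaries for other purposes) is to extend the images $I$ of $T$ and $I'$ of $T'$ to spanning trees $R$ and $R'$ of $N$. Since $N$ has $|V|-1+r(N)$ edges, the spanning forest $R\cap R'$ has at most $r(N)+1$ components; take the blocks $B$ to be the leaf sets of these components. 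Each component is simultaneously a subtree of $R$ and of $R'$, and in a tree the minimal subtree connecting a given leaf set is unique and contained in any connected subgraph containing those leaves; hence the minimal connecting subtrees of $B$ inside $I$ and inside $I'$ coincide as subgraphs of $N$, which gives $T|B=T'|B$, and the vertex-disjointness of distinct components gives the disjointness of the embeddings in both $T$ and $T'$. Without this (or an equivalent) argument the inequality $d_\MAF(T,T')\le r(N)+1$, and hence the appeal to \cite[Theorem 2.13]{AllenSteel2001}, is unsupported.
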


This means that $d_\TBR(T,T') = k$ if and only if there exists a phylogenetic network $N$ with $r(N)=k$ that displays both $T$ and $T'$. Such an $N$ can be obtained from its underlying generator, which has exactly $3(k-1)$ sides~\cite[Lemma 1]{tightkernel},
by attaching taxa to sides. The articles \cite{tightkernel,kelk2020new} use this fact
extensively to derive a bound on the size of the kernelized instance. We will use the same generator-based framework for our results. \\

\begin{figure}[t]
\center
\scalebox{0.9}{\input{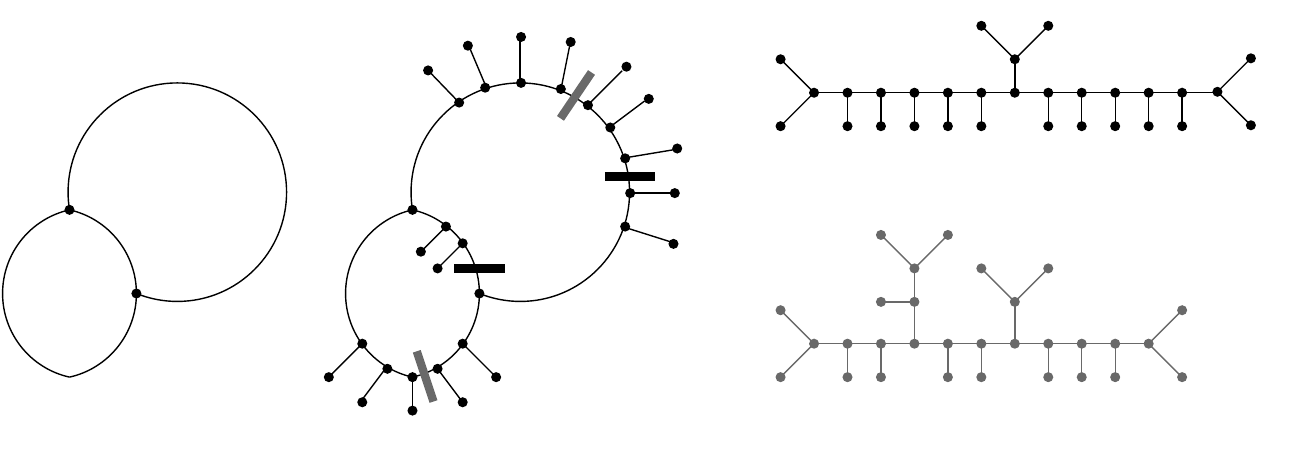_t}}
\caption{\sidmablue{The generator $G$ (left) underlying the phylogenetic network $N$ (middle) that displays $T$ and $T'$ (right)}. An \revision{image} of $T$ (respectively, $T'$) can be obtained by deleting the $r(N)=d_{\TBR}(T,T')=2$ black (respectively, \sidmablue{gray}) breakpoints. The generator underlying $N$ has three sides: a 2-breakpoint side with 9 taxa, $1234|567|89$, and
two 1-breakpoint sides. Note that due to the common chain $C=(1,2,3,4)$ these trees could be reduced further by Reduction 2.}
\label{fig:breakpoints}
\end{figure}

\noindent{\bf Parameterized algorithms.} A \emph{parameterized problem} is a problem for which the inputs are of the form $(x,k)$, where $k$ is a non-negative integer, called the \emph{parameter}. A parameterized problem is \emph{fixed-parameter tractable} (FPT) if there exists an algorithm that solves\footnote{Note that the formalism described here actually concerns \emph{decision} (i.e. yes/no) problems, which in the context of the current article is most naturally ``Is $d_{\TBR}(T,T') \leq k$?''. An FPT algorithm for answering this question can easily be transformed into an algorithm for computing $d_{\TBR}$ with similar asymptotic time complexity by increasing $k$ incrementally from 0 until a yes-answer is obtained.} any instance $(x,k)$ in $f(k)\cdot |x|^{O(1)}$ time, where $f(\cdot)$ is a computable function depending only on $k$. A parameterized problem has a \emph{kernel} of size $g(k)$, where $g(\cdot)$ is a computable function depending only on $k$, if there exists a polynomial time algorithm transforming any instance $(x,k)$ into an equivalent problem $(x',k')$, with $|x'|,k' \leq g(k)$. Informally, this polynomial-time algorithm usually consists of reduction rules that are applied to an instance $(x,k)$ to transform it into an equivalent but smaller instance $(x',k')$.
If $g(k)$ is a polynomial in $k$ then we call this a \emph{polynomial kernel}; if $g(k) = O(k)$ then it is a \emph{linear kernel}. It is well-known that a parameterized problem is fixed-parameter tractable if and only if it has a (not necessarily polynomial) kernel. For more background information on fixed parameter tractability and kernelization, we refer the reader to standard texts such as~\cite{Cygan:2015:PA:2815661,downey2013fundamentals,kernelization2019}. 

Let $T$ and $T'$ be two phylogenetic trees on $X$. To compute $d_\TBR(T,T')$, we take $d_{\TBR}$ as the parameter $k$ and take $|X|$, the number of leaves, as the size of the instance $|x|$. The reduction rules described in the following section produce a linear kernel and run in $\text{poly}(|X|)$ time.

\subsection{Seven reductions to kernelize the TBR distance}
\label{subsc:def_reducrules}

We start this section by describing the existing seven reductions that have previously been used to establish kernelization results for computing the TBR distance. These existing reductions will be extended to ten reductions in Section~\ref{sec:new-rules}. \\

Let $T$ and $T'$ be two phylogenetic trees on $X$. The seven reductions are as follows.\\

\noindent {\bf Reduction 1.}~\cite{AllenSteel2001} If $T$ and $T'$ have a maximal common pendant subtree $S$ with at least two leaves, then reduce $T$ and $T'$ to $T_r$ and $T'_r$, respectively, by replacing $S$ with a single leaf with a new label.

\noindent {\bf Reduction 2.}~\cite{AllenSteel2001} If $T$ and $T'$ have a maximal common $n$-chain $C=(\ell_1,\ell_2,\ldots,\ell_n)$ with $n\geq 4$, then reduce $T$ and $T'$ to $T_r=T|X\setminus \{\ell_4,\ell_5,\ldots,\ell_n\}$ and $T_r'=T'|X\setminus \{\ell_4,\ell_5,\ldots,\ell_n\}$, respectively.

\noindent {\bf Reduction 3.}~\cite{kelk2020new} If $T$ and $T'$ have a common 3-chain $C=(\ell_1,\ell_2,\ell_3)$ such that $\{\ell_1,\ell_2\}$ is a cherry in $T$ and $\{\ell_2,\ell_3\}$ is a cherry in $T'$, then reduce $T$ and $T'$ to $T_r=T|X\setminus C$ and $T_r'=T'|X\setminus C$, respectively.

\noindent {\bf Reduction 4.}~\cite{kelk2020new} If $T$ and $T'$ have a common 3-chain $C=(\ell_1,\ell_2,\ell_3)$ such that $\{\ell_2,\ell_3\}$ is a cherry in $T$ and $\{\ell_3,x\}$ is a cherry in $T'$ with $x\in X\setminus C$, then reduce $T$ and $T'$ to $T_r=T|X\setminus \{x\}$ and $T_r'=T'|X\setminus \{x\}$, respectively.

\noindent {\bf Reduction 5.}~\cite{kelk2020new} If $T$ and $T'$ have two common 2-chains $C_1=(\ell_1,\ell_2)$ and $C_2=(\ell_3,\ell_4)$ such that $T$ has cherries $\{\ell_2,x\}$ and $\{\ell_3,\ell_4\}$, and $T'$ has cherries $\{\ell_1,\ell_2\}$ and $\{\ell_4,x\}$ with $x\in X\setminus (C_1\cup C_2)$, then reduce $T$ and $T'$ to $T_r=T|X\setminus \{x\}$ and $T_r'=T'|X\setminus \{x\}$, respectively.

\noindent {\bf Reduction 6.}~\cite{kelk2020new} If $T$ and $T'$ have two common 3-chains $C_1=(\ell_1,\ell_2,\ell_3)$ and $C_2=(\ell_4,\ell_5,\ell_6)$ such that $T$ has cherries $\{\ell_2,\ell_3\}$ and $\{\ell_4,\ell_5\}$, and $(\ell_1,\ell_2,\ldots,\ell_6)$ is a 6-chain of $T'$, then reduce $T$ and $T'$ to $T_r=T|X\setminus \{\ell_4,\ell_5\}$ and $T_r'=T'|X\setminus \{\ell_4,\ell_5\}$, respectively.

\noindent {\bf Reduction 7.}~\cite{kelk2020new} If $T$ and $T'$ have common chains $C_1=(\ell_1,\ell_2,\ell_3)$ and $C_2=(\ell_4,\ell_5)$ such that $T$ has cherries $\{\ell_2,\ell_3\}$ and $\{\ell_4,\ell_5\}$, and $(\ell_1,\ell_2,\ldots,\ell_5)$ is a 5-chain of $T'$, then reduce $T$ and $T'$ to $T_r=T|X\setminus \{\ell_4\}$ and $T_r'=T'|X\setminus \{\ell_4\}$, respectively.\\

\noindent An example of Reduction 7 is illustrated in Figure~\ref{fig:reduction7}.

Reduction 1 is known as {\it subtree reduction} while Reduction 2 is known as {\it chain reduction} in the literature. Now, suppose that two phylogenetic trees $T_r$ and $T_r'$ have a common 3-chain  $C=(\ell_1,\ell_2,\ell_2)$. We refer to the reverse of Reduction 2 which is the process of obtaining $T$ and $T'$ from $T_r$ and $T_r'$, respectively, as {\it extending $C$ to an $n$-chain} for $n>3$. We will always explicitly say in which order and
to which end of $C$ we add the new leaves $\ell_4,\ell_5,\ldots,\ell_n$.

The following  lemma and theorem summarize results established in~\cite{AllenSteel2001,tightkernel,kelk2020new}.

\begin{figure}[t]
\center
\scalebox{1}{\input{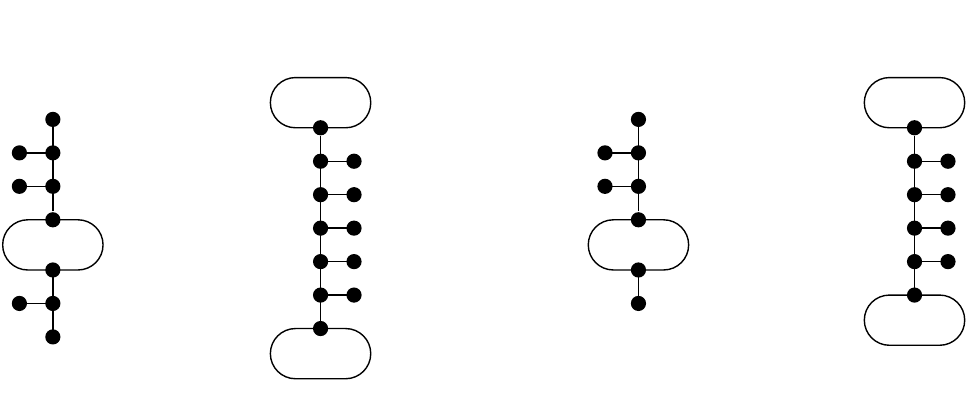_t}}
\caption{An example of Reduction 7. Ovals indicate subtrees. }
\label{fig:reduction7}
\end{figure}

\begin{lemma}\label{l:7-reductions}
Let $T$ and $T'$ be two phylogenetic trees on $X$.
If $T_r$ and $T_r'$ are two phylogenetic trees obtained from $T$ and $T'$, respectively, by  a single application of Reduction 1,2, 6, or 7, then $d_\TBR(T,T') = d_\TBR(T_r,T_r')$. Moreover, if $T_r$ and $T_r'$ are two trees obtained from $T$ and $T'$, respectively, by a single application of Reduction 3, 4, or 5, then $d_\TBR(T,T')-1 = d_\TBR(T_r,T_r')$.
\end{lemma}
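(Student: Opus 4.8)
The plan is to pass from $d_\TBR$ to $d_\MAF$ using the identity $d_\TBR = d_\MAF - 1$ stated above, and to prove the corresponding statements for the size of a maximum agreement forest: Reductions 1, 2, 6, 7 leave $d_\MAF$ unchanged, while Reductions 3, 4, 5 decrease $d_\MAF$ by exactly one. Agreement forests are the right language here, because the effect of contracting a common subtree, deleting chain links, or deleting a single leaf on a partition of $X$ is completely explicit. For each rule I would establish the two inequalities separately, each via an explicit map between agreement forests of $(T,T')$ and of $(T_r,T_r')$ that changes the number of components in the claimed way.

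Consider first the distance-preserving Reductions 1, 2, 6, 7, where the goal is $d_\MAF(T,T') = d_\MAF(T_r,T_r')$. One inequality comes from \emph{projection}: given a maximum agreement forest $F$ for $T$ and $T'$, restrict every component to the surviving taxa (dropping the deleted chain links, or collapsing the contracted subtree $S$ to its replacement leaf). Since restricting the label set of the blocks can never increase the number of components and preserves the two defining agreement-forest conditions, this yields an agreement forest for $T_r$ and $T_r'$ of no larger size. The reverse inequality comes from \emph{lifting}: given a maximum agreement forest $F_r$ for $T_r$ and $T_r'$, reinsert the removed structure (re-expand the replacement leaf into $S$ for Reduction 1, re-lengthen the $3$-chain back to an $n$-chain for Reduction 2, restore $\ell_4,\ell_5$ resp. $\ell_4$ for Reductions 6, 7) into a suitable block. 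The content of these rules is the \emph{canonical-form lemma} needed to make one of the two maps component-preserving: for Reduction 1 one shows a maximum agreement forest may be assumed to keep all of $S$ inside a single block, and for the chain rules one shows that a common chain of length at least three is ``absorbing'', so the extra links can always be placed in an already-present chain-block without spawning a new component. Reductions 6 and 7 are analogous but require tracking the interaction of two chains and their cherries.

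For the distance-reducing Reductions 3, 4, 5 the target is $d_\MAF(T,T') = d_\MAF(T_r,T_r') + 1$. The inequality $d_\MAF(T,T') \le d_\MAF(T_r,T_r') + 1$ is the easy lifting step: take a maximum agreement forest for the reduced pair and adjoin the deleted taxon $x$ (for Reductions 4 and 5) or the deleted $3$-chain $C$ (for Reduction 3) as one additional component, then verify that the result is an agreement forest for $T$ and $T'$. The essential direction is $d_\MAF(T_r,T_r') \le d_\MAF(T,T') - 1$, which is where the prescribed cherry conflict does the work: the clashing cherries (e.g. $\{\ell_1,\ell_2\}$ in $T$ against $\{\ell_2,\ell_3\}$ in $T'$ for Reduction 3, and the analogous configurations for 4 and 5) force, in \emph{every} agreement forest for $T$ and $T'$, the presence of a component that is rendered superfluous once the target structure is deleted; equivalently, some maximum agreement forest isolates the target in its own block, so removing it strictly lowers the component count by one.

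The main obstacle throughout is establishing these canonical-form and exchange lemmas rigorously: one must argue that an arbitrary maximum agreement forest can be rearranged, without increasing its number of components, into one that respects the local topology being reduced, and then rule out the configurations that would otherwise obstruct a component-preserving reinsertion (for 1, 2, 6, 7) or prevent the saving of a component (for 3, 4, 5). These arguments are routine for Reductions 1 and 2 but genuinely intricate for the chain-interaction rules 6, 7 and the distance-reducing rules 3, 4, 5, where the full case analyses --- tracking exactly how a chain is split across blocks and how the conflicting cherries constrain the blocks --- are carried out in \cite{AllenSteel2001,tightkernel,kelk2020new}.
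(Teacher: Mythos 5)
The paper gives no proof of this lemma: it is explicitly presented as a summary of results established in \cite{AllenSteel2001,tightkernel,kelk2020new}, so there is no in-paper argument to compare against. Your outline --- translating to $d_\MAF$ via $d_\TBR=d_\MAF-1$, a projection (restriction) map for one inequality, a lifting map guarded by a canonical-form/chain-preservation lemma for the other, and for Reductions 3--5 a maximum agreement forest that isolates the targeted structure as its own component --- is exactly the strategy of those references and of the paper's own proofs of the analogous new rules (compare Lemma~\ref{lem:13kill} and Theorem~\ref{thm:22to13}). The one imprecision is your claim that the clashing cherries force a superfluous component in \emph{every} agreement forest; what is needed (and what you then correctly restate) is only that \emph{some} maximum agreement forest isolates the target, which for Reduction~3 follows from the chain preservation theorem together with the observation that no block strictly containing $C$ can satisfy $T|B=T'|B$, and, like the paper, you ultimately defer the remaining case analyses for Reductions 4--7 to the cited works.
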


\begin{theorem}\label{t:old-kernel}
Let $S$ and $S'$ be two phylogenetic trees on $X$  that cannot be reduced by Reduction 1 or 2, and let $T$ and $T'$ be two phylogenetic trees on $Y$ that cannot be reduced by any of Reductions 1--7. If $d_\TBR(S,S')\geq 2$, then $|X|\leq 15d_\TBR(S,S')-9$. Furthermore, if $d_\TBR(T,T')\geq 2$, then $|Y|\leq 11d_\TBR(T,T')-9$.
\end{theorem}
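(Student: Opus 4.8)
The plan is to recover both bounds through the generator-based counting framework set up above, which is exactly the route taken in the works \cite{AllenSteel2001,tightkernel,kelk2020new} that the statement summarizes. First I would invoke Theorem~\ref{t:tbr-equiv}: writing $k=d_\TBR(S,S')$, there is a phylogenetic network $N$ on $X$ with $r(N)=k$ that displays both $S$ and $S'$, and I would fix such an $N$ minimizing $r(N)$. Let $G$ be its underlying generator. Since $G$ is a connected cubic multigraph with $|E|-(|V|-1)=k$, the handshake identity $2|E|=3|V|$ forces $|V|=2(k-1)$ and $|E|=3(k-1)$, so $G$ has exactly $3(k-1)$ sides; the hypothesis $d_\TBR\geq 2$ is what guarantees $G$ is a genuine (non-degenerate) generator with this many sides, since $k=1$ would give $|V|=0$.

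Next, because Reduction~1 does not apply, $S$ and $S'$ share no common pendant subtree with two or more leaves, so I may take $N$ to have no pendant subtree with at least two leaves; by Observation~\ref{ob:gen} every taxon of $X$ is then attached to a side of $G$. The crucial structural step is to read off, on each side, the \emph{breakpoints}: recovering an image of $S$ (respectively $S'$) from $N$ amounts to deleting $k$ edges, and the deleted edges lying between the taxa of a side cut that side into maximal runs of consecutive taxa. I would argue that each such run is a common chain of $S$ and $S'$ read in the same linear order, so that---since Reduction~2 does not apply---each run contains at most $3$ taxa. Writing $b_s$ for the number of breakpoints interior to side $s$, the taxa on $s$ number at most $3(b_s+1)$.

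Summing over all sides then gives $|X|\le 3\sum_s (b_s+1)=3B+9(k-1)$, where $B=\sum_s b_s$ is the total number of interior breakpoints. Since recovering $S$ and $S'$ costs $k$ edge deletions each, $B\le 2k$, so the number of chain segments is at most $B+3(k-1)\le 5k-3$; at $3$ taxa apiece this yields $|X|\le 15k-9$, the first bound. For the sharper $11k-9$ bound under Reductions~1--7 I would keep the same skeleton but replace the blanket ``$\le 3$ taxa per segment'' estimate by a finer, amortized account: Reductions~3--7 are precisely designed to forbid the local configurations (short common chains meeting at, or running across, generator vertices in specific cherry patterns) that would otherwise allow every segment to be full and every breakpoint to be ``wasted''. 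One charges the surplus to nearby generator vertices and incident sides and shows that each of the $2(k-1)$ vertices can absorb only a bounded deficit, which is what brings the $15k-9$ estimate down to $11k-9$.

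The main obstacle, and the part I expect to be genuinely delicate, is this last discharging step: verifying that the seven reductions jointly eliminate every configuration that inflates the count, and doing the bookkeeping around each of the $2(k-1)$ generator vertices without double-counting, is a substantial case analysis. Establishing the clean claim that an uninterrupted run of side-taxa is a \emph{common} chain in a consistent order (rather than merely a chain in each tree separately) is the other point that needs care, since it is what ties the non-applicability of Reduction~2 to the generator picture and makes the whole count go through.
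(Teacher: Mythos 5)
First, a remark on provenance: the paper does not actually prove Theorem~\ref{t:old-kernel}; it imports it from \cite{AllenSteel2001,tightkernel,kelk2020new}, and the intended argument is visible in Lemma~\ref{lemma:foundations} and the paragraph following it. Your treatment of the $15k-9$ half is essentially the correct (and the cited) argument: $G$ has $3(k-1)$ sides and there are exactly $2k$ breakpoints in total; each breakpoint-free run of consecutive taxa on a side is a \emph{common} chain of the two trees (both images traverse that whole stretch of $P_S$, so its taxa occur consecutively and in the same order in both trees, up to reversal, which is immaterial for chains), hence has at most $3$ taxa by irreducibility under Reduction~2; summing $3(b_s+1)$ over sides gives $3(2k+3(k-1))=15k-9$. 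The point you flag as delicate is exactly the right one to worry about, and it does go through.

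The gap is in the $11k-9$ half. The improvement does not come from an amortized discharging argument over the $2(k-1)$ generator vertices; as described, that scheme has no identifiable invariant --- you do not say what deficit a vertex absorbs, why it is bounded, or how double-counting is avoided --- and it is not the structure of the actual proof. The real mechanism is a \emph{per-side} cap, namely Lemma~\ref{lemma:foundations}: once Reductions 1--7 are exhausted, every side of $G$ carries at most four taxa, every $0$-breakpoint side at most three, and the four-taxon sides are exactly of types $1|3$, $2|2$, $2|1|1$. This is proved by a local case analysis showing that each breakpoint pattern a five-taxon side could exhibit (e.g.\ $0|5$, $1|4$, $2|3$, $2|2|1$) forces one of Reductions 1--7 to fire on the taxa of that side; this is where Reductions 3--7 are actually used, and it is the step your proposal defers entirely. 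Once the cap is in hand, no discharging is needed: writing $n_i$ for the number of sides with $i$ breakpoints, one has $n_0+n_1+n_2=3k-3$ and $n_1+2n_2=2k$, so the number of taxa is at most $3n_0+4n_1+4n_2 = 3(n_0+n_1+n_2)+(n_1+n_2)\le 9k-9+2k=11k-9$. Without the per-side analysis, the second bound is not established; note in particular that refining your ``$\le 3$ per segment'' count cannot work segment-by-segment, since a $2|3$ or $2|2|1$ side already respects that segment bound yet must still be capped at four taxa overall.
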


Note that each of Reductions 3, 4, and 5 triggers a \emph{parameter reduction}, whereby the TBR distance is reduced by one. In these cases, an element of $X$ is located which definitely comprises a singleton component in some maximum agreement forest, and whose deletion thus lowers the TBR distance by 1. Reductions 1, 2, 6 and 7, on the other hand, preserve the TBR distance. Reduction 6 and 7 work by truncating short chains, i.e. chains which escape Reduction 2, to be even shorter.

The following minor observation is worth noting.

\begin{observation}
\label{obs:pendancy}
Assume that Reductions 1--7 have been applied to exhaustion. 
Suppose $T$ and $T'$ have a common chain $C = (b,c,d)$ that is pendant in $T'$. Then $C$ is not pendant in $T$.
\end{observation}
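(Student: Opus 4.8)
The plan is to argue by contradiction: assume that $C=(b,c,d)$ is pendant in both $T'$ and $T$, and show that this would force one of Reductions 1--7 to still be applicable, contradicting the hypothesis that they have been applied to exhaustion. First I would unfold the definition of a pendant $3$-chain. Writing $C=(\ell_1,\ell_2,\ell_3)=(b,c,d)$, ``$C$ pendant in $T'$'' means $p_b=p_c$ or $p_c=p_d$, i.e. $\{b,c\}$ or $\{c,d\}$ is a cherry of $T'$; likewise pendancy in $T$ would yield a cherry $\{b,c\}$ or $\{c,d\}$ of $T$. Note that within a single binary tree at most one of these two cherries can occur, since $p_b=p_c=p_d$ would create a vertex of degree $4$. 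Hence, under the contradiction hypothesis, each of $T$ and $T'$ contributes exactly one of the two cherries, leaving four combinations to inspect.

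The case analysis is the core of the argument. If the same cherry is present in both trees --- say $\{b,c\}$ is a cherry of $T$ and of $T'$ (and symmetrically for $\{c,d\}$) --- then $\{b,c\}$ is a common pendant subtree with two leaves, so the maximal common pendant subtree containing it has at least two leaves and Reduction 1 is applicable, a contradiction. The remaining two combinations pair different cherries across the two trees. If $\{b,c\}$ is a cherry of $T$ and $\{c,d\}$ is a cherry of $T'$, then $C=(b,c,d)$ is exactly a common $3$-chain of the form required by Reduction 3 (with $\ell_1=b$, $\ell_2=c$, $\ell_3=d$), so Reduction 3 is applicable, again a contradiction. If instead $\{c,d\}$ is a cherry of $T$ and $\{b,c\}$ is a cherry of $T'$, I would reverse the chain: $(d,c,b)$ is also a common $3$-chain of $T$ and $T'$, and with $\ell_1=d$, $\ell_2=c$, $\ell_3=b$ it again matches the trigger pattern of Reduction 3. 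In every case some reduction fires, contradicting exhaustion, so $C$ cannot be pendant in $T$.

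The one point that needs care --- the closest thing to an obstacle --- is the handling of chain orientation: Reduction 3 is stated for an ordered $3$-chain with the cherry $\{\ell_1,\ell_2\}$ in $T$ and $\{\ell_2,\ell_3\}$ in $T'$, so I must state explicitly that reversing $(b,c,d)$ to $(d,c,b)$ preserves the common-chain property (the underlying walk may be traversed in either direction) and then check that the reversed labelling aligns the two cherries precisely with the roles $\{\ell_1,\ell_2\}$ and $\{\ell_2,\ell_3\}$. I would also remark that only Reductions 1 and 3 are actually invoked, so the observation in fact holds as soon as these two rules have been exhausted; the full strength of Reductions 1--7 is not required.
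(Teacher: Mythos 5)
Your proof is correct and follows the same route as the paper, whose own argument is simply that pendancy in both trees would make Reduction 1 or Reduction 3 applicable; you have merely spelled out the four cherry combinations and the chain-reversal detail that the paper leaves implicit.
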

\begin{proof}
If $C$ was pendant in $T$ then at least one of the subtree reduction or Reduction 3 would be applicable on $C$, contradicting the assumption that the reduction rules had been applied to exhaustion. $\qed$
\end{proof}

We end this section by outlining some of the machinery used in~\cite{tightkernel,kelk2020new} to kernelize the TBR distance. This article builds on that machinery and further refines it. Let $T$ and $T'$ be two phylogenetic trees on $X$ that cannot be reduced under Reduction 1 or 2,
and let $N$ be a phylogenetic network on $X$ that displays $T$ and $T'$.  Let $R$ and $R'$ be spanning trees of $N$ obtained by greedily extending an \revision{image} of $T$ (respectively, $T'$) to become a spanning tree,
if it is not that already.
Since $N$ displays $T$ and $T'$, $R$ and $R'$ exist. Furthermore, let $G$ be the generator that underlies $N$. Since $T$ and $T'$ are subtree and chain reduced, $N$ does not have a pendant subtree of size at least two. Hence, by Observation~\ref{ob:gen}, we can obtain $N$ from $G$ by attaching leaves to $G$. Let $S=\{u,w\}$ be a side of $G$. Let $Y=\{\ell_1,\ell_2,\ldots,\ell_m\}$ be the set of leaves that are  attached to $S$ in obtaining $N$ from $G$. Recall that $m \geq 0$. Then there exists a path $$u=v_0,v_1,v_2,\ldots,v_m,v_{m+1}=w$$ of vertices in $N$ such that, for each $i\in\{1,2,\ldots,m\}$, $v_i$ is the unique neighbor of $\ell_i$. We refer to this path as the {\it path associated with $S$} and denote it by $P_S$. Importantly, for a path $P_S$ in $N$ that is associated with a side $S$ of $G$, there is at most one edge in $P_S$ that is not contained in $R$, and there is at most one (not necessarily distinct) edge in $P_S$ that is not contained in $R'$. We  make this precise in the following definition and say that $S$ {\it is a $b$-breakpoint side} relative to $R$ and $R'$, 
where 
\begin{enumerate}
\item $b=0$ if $R$ and $R'$ both contain all edges of $P_S$,
\item $b=1$ if one element in $\{R,R'\}$ contains all edges of $P_S$ while the other element contains all but one edge of $P_S$, and
\item $b=2$ if each of $R$ and $R'$ contains all but one edge of $P_S$.
\end{enumerate}
Since $R$ and $R'$ span $N$, note that $S$ cannot have more than two breakpoints relative to $R$ and $R'$.  Let $S=\{u,w\}$ be a side of $G$ to which four taxa get attached in obtaining $N$ from $G$, and let $P_S=u,p_a,p_b, p_c,p_d,w$ be the path associated with $S$. For shorthand we will throughout this article use notation such as $2|2$ or $S=ab|cd$ to refer to a side $S$ if $P_S$ has a single breakpoint such that one of $R$ and $R'$ does not contain the edge $\{p_b,p_c\}$, and $2|1|1$ or $S=ab|c|d$ to refer to a side $S$ if $P_S$ has two breakpoints such that one of $R$ and $R'$ does not contain the edge $\{p_b,p_c\}$ and the other does not contain the edge $\{p_c,p_d\}$. \stevenred{(See Figure \ref{fig:breakpoints} for an example illustrating breakpoint notation).}
If $R$ and $R'$ both have the same breakpoint (i.e. there exists an edge of $P_S$ that neither $R$ nor $R'$ contains), then we write, for example, $2||2$ or $1||3$. Lastly, note that there also may exist a side such that $R$ or $R'$ does not contain the edge $\{u,p_a\}$ or $\{p_d,w\}$ in which case we write, for example, $0|2|2$, $0|4|0$, or $0|4$. Similar notation extends to sides in $G$ to which three taxa get attached in obtaining $N$ from $G$.

\section{Two technical results about short chains}

This section present two technical but powerful theorems that play a crucial part in the upcoming sections. The first, Theorem~\ref{thm:allchainsintact}, was established in~\cite[Theorem 5]{kelk2020new}, while the second, Theorem~\ref{t:lovely-chains}, is new to this paper. \\

Let $F=\{B_0,B_1,B_2,\ldots,B_k\}$ be an agreement forest for two phylogenetic trees $T$ and $T'$ on $X$, and let $Y$ be a subset of $X$. We say that $Y$ is {\it preserved} in $F$ if there exists an element $B_i$ in $F$ with $i\in\{0,1, 2,\ldots,k\}$ such that $Y\subseteq B_i$. Throughout the article we will make heavy use of  the following theorem, referred to as the {\it chain preservation theorem (CPT)}.

\begin{theorem}
\label{thm:allchainsintact}
Let $T$ and $T'$ be two phylogenetic trees on $X$. 
Let $K$ be an (arbitrary) set of mutually taxa-disjoint chains that are common to $T$ and $T'$.
Then there exists a maximum agreement forest $F$ of $T$ and $T'$ such that 
\begin{enumerate}
\item every $n$-chain in $K$ with $n\geq 3$
is preserved in $F$, and
\item every 2-chain in $K$ 
that is pendant in at least one of $T$ and $T'$
is preserved in $F$.
\end{enumerate}
\end{theorem}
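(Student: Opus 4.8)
The plan is to start from an arbitrary maximum agreement forest $F=\{B_0,\ldots,B_k\}$ and show that, without increasing the number of components, it can be massaged into one in which every chain of $K$ required by the statement is preserved. The whole argument rests on a single structural observation about how a common chain interacts with the components of \emph{any} agreement forest. Fix a common chain $C=(\ell_1,\ldots,\ell_n)\in K$ with its parent path $p_{\ell_1},\ldots,p_{\ell_n}$, which is common to $T$ and $T'$. First I would prove a \emph{contiguity lemma}: the taxa of $C$ lying in a single component $B_i$ always form a set of \emph{consecutive} chain elements. This is forced by condition (2): if $\ell_s,\ell_t\in B_i$ with $s<t$, then both $T[B_i]$ and $T'[B_i]$ contain the whole parent subpath from $p_{\ell_s}$ to $p_{\ell_t}$, so any intermediate $\ell_m$ would put $p_{\ell_m}$ simultaneously in $T[B_i]$ and in the embedding of its own component, violating vertex-disjointness. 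Hence $F$ partitions $C$ into consecutive \emph{segments}, one per component meeting $C$, and the same partition is seen in both trees.

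Next I would record two consequences that localise the problem. Each parent $p_{\ell_m}$ has degree $3$, with one edge to $\ell_m$ and two along the spine; for an element interior to a segment both spine edges are used internally, so a segment lying strictly between two cuts must be a \emph{pure} chain piece, its component containing no further taxa — a foreign taxon would have to attach either through a spine vertex already occupied by the segment or through a cut edge whose far endpoint belongs to another component. Consequently only the two extreme segments (containing $\ell_1$ and $\ell_n$) can carry extra taxa, and such taxa necessarily attach at the \emph{outer} end of their segment. This is precisely the leverage that makes $n\ge 3$ behave better than a general $2$-chain: a segment with at least two chain leaves pins down, in both trees, on which side its extra taxa hang.

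The engine of the proof is then a \emph{merge} of two adjacent segments: reassign the chain leaves so that the two segments share one component, bridging their embeddings by the unique spine edge between them. I would verify both agreement-forest conditions for the enlarged block. Topology agreement (condition (1)) holds because the chain is common and, by the previous paragraph, the non-chain taxa sit at the outer ends, so gluing the two restrictions along the common caterpillar yields identical trees $T|B$ and $T'|B$. Vertex-disjointness is maintained exactly when the bridging spine edge introduces no vertex used by a foreign component, which is automatic whenever the two segment-endpoints are already claimed by their components. I would combine this with an extremal choice: among all maximum agreement forests, pick $F$ minimising the total number of segment boundaries over the chains of $K$ that the theorem asks to preserve. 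If that number is nonzero, a merge — or, where a naive merge would over-reduce the count, a merge accompanied by ejecting a single displaced taxon into a fresh singleton — strictly lowers the number of boundaries while keeping $F$ a maximum agreement forest, a contradiction.

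The delicate point, and the step I expect to be the main obstacle, is the configuration in which a chain leaf sits in a \emph{singleton} (or, more generally, a segment whose component does not claim the adjacent spine vertex) while a foreign component \emph{traverses} the spine through that vertex in both trees. A direct merge would then collide with the foreign component, so instead I would \emph{absorb} the stranded chain leaves into that traversing component, which already uses the relevant spine vertices, again without increasing the component count, and show that for $n\ge 3$ the length of the chain always makes such an anchoring move available and topology-preserving. For a $2$-chain this anchoring can genuinely fail, which is exactly why the statement only claims preservation of a $2$-chain pendant in some tree: there the shared parent $p$ has its single non-leaf edge as its only route to the rest of the tree, so no foreign component can pass through $p$, at most one of the two leaves lies in a non-singleton component, and the two leaves can always be reunited. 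Since every move only coalesces taxa and never separates an already-preserved chain, the chains of $K$ may be treated one after another, and the final forest preserves all chains required by parts (1) and (2) while remaining a maximum agreement forest.
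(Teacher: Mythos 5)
You should first note that the paper does not actually prove this statement itself: Theorem~\ref{thm:allchainsintact} is imported verbatim from \cite[Theorem 5]{kelk2020new}, so there is no in-paper proof to measure your argument against. Judged on its own merits, your sketch has the right overall shape (contiguity of chain elements within blocks, followed by an exchange/merging argument driven by an extremal choice of MAF), but two of its load-bearing steps are not actually established.

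First, your contiguity lemma is proved incorrectly and, as stated for an arbitrary agreement forest, is false. If an intermediate $\ell_m$ lies in a \emph{singleton} component, then $T[\{\ell_m\}]$ is the single vertex $\ell_m$ and does not contain $p_{\ell_m}$, so no vertex-disjointness violation arises; the configuration $\{\ell_1,\ell_3\}\subseteq B_i$ with $\{\ell_2\}$ a singleton is a perfectly valid agreement forest. The conclusion can be rescued for \emph{maximum} agreement forests (a stranded interior singleton can be merged into the surrounding block, contradicting maximality), but that is a different argument from the one you give, and the same singleton caveat undermines your claim that interior segments are ``pure.'' Second, and more seriously, the case you yourself flag as delicate --- a chain leaf whose parent is traversed by a foreign component --- is precisely where the substance of the proof lies, and you only assert that an ``anchoring move'' is always available. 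To absorb $\ell_m$ into a traversing component $B'$ you need $B'$ to occupy $p_{\ell_m}$ in \emph{both} trees (if two different components traverse $p_{\ell_m}$ in $T$ and in $T'$, absorption into either collides with the other), and you need $T|(B'\cup\{\ell_m\})=T'|(B'\cup\{\ell_m\})$, which does not follow merely from the chain being common: it depends on where $B'$ enters and leaves the spine in each tree. Finally, your ``merge plus eject a displaced taxon'' move can split a block containing a taxon of a \emph{different} chain of $K$, so the claim that progress on one chain never undoes progress on another needs the potential function to be checked against the ejection step, which you do not do. These are exactly the cases that force the lengthy case analysis in the published proof, and as written your argument does not close them.
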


\noindent Following on from the last theorem, we say that common $n$-chains with $n \geq 3$, and common 2-chains that are pendant in at least one of $T$ and $T'$ are \emph{CPT-eligible} chains.  In our proofs, CPT-eligible chains
will function as `obstructions' that allow us to reason about the structure of maximum agreement forests.\\

\begin{figure}[t]
\center
\scalebox{1}{\input{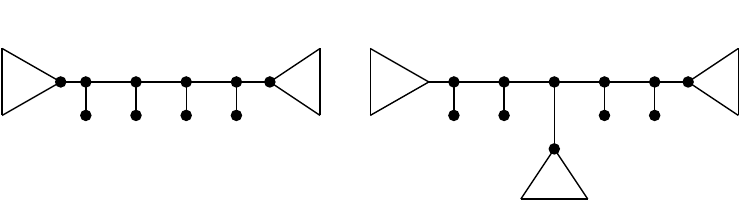_t}}
\caption{Here $(a,b,c,d)$ is an interrupted 4-chain of $T$ and $T'$. Triangles indicates subtrees of $T$ and $T'$. The sets $S,Q,R,Q',R'$ are referred to in the proof of Theorem~\ref{t:lovely-chains}, which
proves that at least one maximum agreement forest of $T, T'$ does \emph{not} use the edge $\{u,v\}$ in $T'$. Note that $S'$ must contain at least one leaf, but the leaf set of any of the subtrees $Q$, $R$, $Q'$, and $R'$ may be empty, in which case $\{a,b\}$ or $\{c,d\}$ can become cherries.}
\label{fig:setup}
\end{figure}

We now turn to the second technical result whose proof is given in the appendix. Let $C=(a,b,c,d)$ be a 4-chain. We say that $C$ is an {\it interrupted 4-chain} of two phylogenetic trees $T$ and $T'$ on $X$ if $C$ is a chain of $T$ and, in $T'$, there exists a walk $p_a,p_b,v,p_c,p_d$ such that \purple{$p_b$}, $v$, and $p_c$ are three pairwise distinct vertices. \purple{Furthermore, the edge $e=\{u,v\}$ in $T'$ with \purple{$u\notin\{p_b,p_c\}$} is called the {\it interrupter} of $C$. Note that $v$ is not necessarily the parent of a leaf in $T'$, and that $T$ and the tree resulting from deleting $e$ in $T'$ and suppressing $v$ have $C$ as a common 4-chain.}
An example of an interrupted 4-chain is shown in Figure~\ref{fig:setup}.

\begin{theorem}\label{t:lovely-chains}
Let $T$ and $T'$ be two \sidmablue{phylogenetic}
\stevenred{trees on $X$}, and let $C=(a,b,c,d)$ be an interrupted  4-chain of $T$ and $T'$. Then there exists a maximum agreement forest $F$ for $T$ and $T'$ such that, \purple{for each $B\in F$, $T'[B]$} does not use the interrupter of $C$ in $T'$.
\end{theorem}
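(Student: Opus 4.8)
The plan is to start from an arbitrary maximum agreement forest $F$ for $T$ and $T'$ and to transform it, without increasing the number of components, into one whose blocks all avoid the interrupter $e=\{u,v\}$ in $T'$. The cleanest first observation is that \emph{at most one} block can offend: the embeddings $\{T'[B]:B\in F\}$ are pairwise vertex-disjoint by condition~(2) of an agreement forest, and any embedding that uses $e$ must contain the vertex $v$, so two distinct blocks cannot both use $e$. Hence either $F$ already has the desired property, or there is a unique block $B^{*}$ with $T'[B^{*}]$ using $e$. Writing $X_{1}$ for the leaves of the subtree $S'$ hanging off $v$ through $u$ (nonempty, as noted for Figure~\ref{fig:setup}) and $X_{2}=X\setminus X_{1}$ for the remaining (chain-side) leaves, we have $B^{*}=P\cup W$ with $\emptyset\neq P\subseteq X_{1}$ and $\emptyset\neq W\subseteq X_{2}$, where $P$ is reached from $v$ through $u$ and $W$ through the neighbours $p_{b},p_{c}$ of $v$ on the chain. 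Note that, since $e$ is the \emph{only} edge joining $S'$ to the rest of $T'$, freeing $e$ is equivalent to forcing the $S'$-leaves $P$ out of the block $W$; one cannot merely reroute.

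Next I would exploit the asymmetry between the two trees: $C=(a,b,c,d)$ is \emph{intact} in $T$, so $p_{a},p_{b},p_{c},p_{d}$ are consecutive and, crucially, nothing is attached strictly between $b$ and $c$; whereas in $T'$ the vertex $v$ sits between $p_{b}$ and $p_{c}$ and carries $S'$. Because condition~(1) forces $T|B^{*}=T'|B^{*}$, I would run a quartet-consistency analysis on four-element subsets of $B^{*}$ made of chain leaves together with one leaf $s\in P$. For example $T'|\{a,b,c,s\}=ab|cs$ forces $s$ to lie on the $\{c,d\}$-side of the chain in $T$, while $T'|\{b,c,d,s\}=bs|cd$ forces $s$ to lie on the $\{a,b\}$-side; these cannot both hold, so $B^{*}$ cannot contain all four of $a,b,c,d$ together with any $S'$-leaf. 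This is precisely where having \emph{two} chain leaves on each side of $v$ is used, and the surviving partial constraints pin down on which side of $C$ every leaf of $P$ must sit in $T$. (I remark that the two 2-chains $(a,b),(c,d)$ are not in general pendant in either tree, hence not CPT-eligible, so Theorem~\ref{thm:allchainsintact} does not apply directly and a fresh argument is genuinely needed.)

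For the remaining ``thin'' configurations of $B^{*}$ I would pass to an \emph{extremal} maximum agreement forest, e.g. one minimising $|P|=|B^{*}\cap X_{1}|$ (or $|B^{*}|$) over all maximum agreement forests, and derive a contradiction by exhibiting a local exchange. The template is: cut $e$, which splits $B^{*}$ into the two blocks $P$ and $W$ — condition~(1) is inherited by subsets, and $P,W$ are separated by $e$ hence vertex-disjoint in $T'$ — and then compensate for the extra component by a single re-merge, moving $P$ into the block $B_{j}$ occupying the adjacent part of $S'$ (replacing $\{B^{*},B_{j}\}$ by $\{W,\,B_{j}\cup P\}$), so that the component count is preserved. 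A successful re-merge then produces a maximum agreement forest with strictly smaller $|P|$, contradicting extremality and forcing $e$ to be unused.

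The hard part will be the re-merge step. Cutting $e$ is free, and the side on which $P$ sits is geometrically natural in $T'$, but reconnecting the severed piece must preserve both the topological identity $T|B=T'|B$ \emph{and} the vertex-disjointness of all embeddings in $T$, where the $S'$-leaves may be scattered arbitrarily. This is exactly where I expect the intact chain in $T$ to earn its keep: a chain leaf on the side of $v$ dictated by the quartet constraints provides the ``slack'' that can be exchanged for an $S'$-leaf and makes the reconnection legal. Carrying this out uniformly across all residual cases — including the possibility that $W$ reaches $Q'$- or $R'$-leaves rather than chain leaves, and that $B^{*}$ branches at $v$ to both chain sides — is the main technical obstacle, which is presumably why the full proof is deferred to the appendix.
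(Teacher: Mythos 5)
Your setup is sound, and your quartet argument correctly recovers the paper's preliminary observation that the offending block $B^*$ cannot contain all of $a,b,c,d$ together with a leaf of $S'$ (in the paper this is the one-line remark that $T|B=T'|B$ forces $|B\cap C|\le 3$); the observation that at most one block can use $e=\{u,v\}$ is also correct. However, there is a genuine gap, and you have located it yourself: the ``re-merge step'' that you defer is not a residual technicality but is essentially the entire content of the paper's proof. After splitting $B^*$ at $e$ into $B_{S'}$ (your $P$) and the chain-side remainder (your $W$), one extra component has been created, and one must exhibit, in \emph{every} configuration, a compensating merge or split that restores the component count while preserving both topological agreement and vertex-disjointness in both trees. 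The paper does this by a three-case analysis on whether $B^*$ meets $Q'$ and/or $R'$, with roughly a dozen subcases, each requiring a different explicit replacement forest; none of that work is present in your proposal.

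Moreover, the specific compensation you sketch --- merging $P$ into a block $B_j$ ``occupying the adjacent part of $S'$'' --- is not the mechanism that works and would fail in general: such a block need not exist (e.g.\ $P$ could be all of the leaves of $S'$), and even when it does, $B_j\cup P$ may violate $T[\cdot]$-disjointness or topological agreement, since the leaves of $S'$ can be scattered arbitrarily between $Q$ and $R$ in $T$. The paper compensates on the \emph{chain} side instead: the constraint $T|B^*=T'|B^*$, combined with the fact that $T'[B^*]$ passes through $v$, forces certain taxa of $C\setminus B^*$ to be singletons of $F^*$ (or forces some other block to be splittable into two agreement-respecting halves), and these are absorbed into the chain-side remainder, e.g.\ replacing $\{B^*,\{\ell\}\}$ by $\{B_{S'},\,(B^*\setminus B_{S'})\cup\{\ell\}\}$. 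Identifying which taxa are forced to be singletons and verifying agreement and disjointness of each replacement is exactly the case analysis you would still need to supply; the extremal choice of $F$ minimising $|P|$ does not by itself produce the required exchange.
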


\section{Main result and a bird's-eye view of the main arguments}

In this section, we state the main result of this paper and give an overview of our approach to establish it. The following lemma summarizes the situation after Reductions 1--7 have been applied to exhaustion and  is the foundation of Theorem~\ref{t:old-kernel}.
\begin{lemma}
\label{lemma:foundations}
Let $T$ and  $T'$ be two phylogenetic trees on $X$ that cannot be reduced under Reductions 1--7. Let $G$ be the generator underlying a phylogenetic network $N$ on $X$ such that $r(N)=d_\TBR(T,T')$. Then, in obtaining $N$ from $G$, the following statements hold.
\begin{itemize}
\item[(a)] At most four taxa can be attached to each side of $G$.
\item[(b)] At most three taxa can be attached to each 0-breakpoint side of $G$.
\item[(c)] At most four taxa can be attached to each 1-breakpoint side of $G$ and only
sides of the form $1|3$ and $2|2$ can achieve this upper bound.
\item[(d)] At most four taxa can be attached to each  2-breakpoint side of $G$ and only
sides of the form $2|1|1$ can achieve this upper bound.
\end{itemize}
\end{lemma}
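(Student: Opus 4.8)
The plan is to analyze each side $S$ of $G$ in isolation, working with its associated path $P_S = u = v_0, v_1, \dots, v_m, v_{m+1} = w$ and the leaves $\ell_1, \dots, \ell_m$ attached along it. The starting point is a correspondence between breakpoints and chains. The at most two breakpoints on $S$ cut $\ell_1, \dots, \ell_m$ into maximal \emph{runs}, and I would first show that the leaves of each run form a common chain of $T$ and $T'$: within a run neither $R$ nor $R'$ omits an edge of $P_S$, so after suppressing degree-$2$ vertices the run appears as a chain in \emph{both} trees. Dually, if $R$ (resp.\ $R'$) omits the edge $\{v_i,v_{i+1}\}$, then suppression forces the two flanking pairs $\{\ell_{i-1},\ell_i\}$ and $\{\ell_{i+1},\ell_{i+2}\}$ (where they exist) to become cherries of $T$ (resp.\ $T'$); for each breakpoint I would record exactly which pairs these are. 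The two global constraints I then exploit are that, since $T$ and $T'$ are reduced, no common chain has length $\ge 4$ (Reduction 2) and no pair of leaves forms a common cherry, since that would be a common pendant subtree (Reduction 1).

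The two low-breakpoint cases are short. For (b), a $0$-breakpoint side is a single run, hence a single common chain, so Reduction 2 bounds it by $3$; granting (c) and (d), every side then has at most $4$ leaves, which is (a). For (c), a $1$-breakpoint side splits into two common chains of lengths $s_1,s_2 \le 3$, and the unique breakpoint produces its cherries in only one of the trees while the other still carries the full chain $(\ell_1,\dots,\ell_m)$. A split $\{3,3\}$ is then precisely the hypothesis of Reduction 6, and a split $\{3,2\}$ precisely that of Reduction 7, so neither can occur. This leaves $s_1+s_2 \le 4$ with equality only for the splits $\{1,3\}$ and $\{2,2\}$, i.e.\ the forms $1|3$ and $2|2$.

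The real work is (d). When both breakpoints are interior there are three runs of lengths $s_1,s_2,s_3 \ge 1$, which I would whittle down in three steps. First, if the middle run has length $2$, its two leaves are simultaneously the cherry created by the left breakpoint in one tree and the cherry created by the right breakpoint in the other, hence a common cherry, which is impossible by Reduction 1 (a length-$3$ middle run is ruled out by Reduction 3, and length $\ge 4$ by Reduction 2); so $s_2 = 1$. Second, if an outer run has length $3$, the common $3$-chain it forms together with the adjacent middle singleton realises the cherry pattern of Reduction 4, so $s_1,s_3 \le 2$. Third, if both outer runs have length $2$, the two common $2$-chains and the middle singleton can be oriented to match Reduction 5. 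Together these force $m \le 4$, with equality only for $(s_1,s_2,s_3)\in\{(2,1,1),(1,1,2)\}$, i.e.\ the form $2|1|1$. Finally, a breakpoint lying on a boundary edge $\{u,v_1\}$ or $\{v_m,w\}$ separates no leaves, so a side carrying such a breakpoint collapses to the $0$- or $1$-breakpoint analysis; the extra boundary cherry it creates typically combines with the interior breakpoint to produce a common cherry (for instance $0|2|2$ reduces by Reduction 1), so no new maximal configuration survives.

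I expect the main obstacle to be the bookkeeping in step (d): one must fix the orientations of the short chains so that the generic $3$-leaf, and $2{+}2$-leaf configurations land on exactly the right reduction, and one must be exhaustive about which breakpoint belongs to $R$ versus $R'$ and about the boundary cases. A reassuring feature is that the whole argument is local to a single side and uses only that $T$ and $T'$ are reduced; the sole global input is that $N$ is a \emph{minimum} network displaying both trees, which is precisely what guarantees that $R$ and $R'$ each omit at most one edge of $P_S$, i.e.\ that every side carries at most two breakpoints.
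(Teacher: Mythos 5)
Your proposal is correct and rests on the same underlying mechanism as the paper's proof --- reading off common chains from the runs between breakpoints, reading off cherries from the breakpoints themselves, and matching the resulting local patterns against Reductions 1--7 --- but it is organized differently and is considerably more self-contained. The paper imports parts (a) and (b) and the ``at most four taxa on any side'' bound wholesale from Lemma~7 of \cite{kelk2020new}, so its remaining work is just to enumerate the four-taxon configurations and kill each with a named reduction ($0|4$ by Reduction 2; coincident-breakpoint sides by Reduction 1; $0|2|2$, $0|1|3$, $0|3|1$, $0|4|0$ by Reductions 1, 4, 3, 2 respectively). You re-derive those imported bounds from scratch: your identification of the splits $\{3,3\}$ and $\{3,2\}$ with the hypotheses of Reductions 6 and 7 gives part (c), and your three-step whittling of $(s_1,s_2,s_3)$ for part (d) --- middle run forced to $1$ via Reductions 1/3/2, outer runs capped at $2$ via Reduction 4, and not both outer runs equal to $2$ via Reduction 5 --- reproduces the content of the cited lemma. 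In one respect you are more careful than the paper: the $1|2|1$ configuration, which the paper's enumeration does not explicitly list, falls out of your common-cherry argument for middle runs of length two. The one place you are looser than you should be is the boundary-breakpoint subcase of (d): the claim that such sides ``collapse to the $0$- or $1$-breakpoint analysis'' and ``typically'' yield a common cherry is not accurate as stated, since $0|1|3$ and $0|3|1$ require Reductions 4 and 3 respectively rather than Reduction 1; but you flag this explicitly as residual bookkeeping, your worked example ($0|2|2$) is right, and the remaining checks do go through, so this is a presentational loose end rather than a gap.
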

\begin{proof}
In~\cite[Lemma 7]{kelk2020new}, the authors showed that each 0-breakpoint side of $G$ has at most three taxa and that each other side of $G$ has at most four taxa. Consider a 1-breakpoint side $S$ of $G$. Suppose that four leaves get attached to $S$ in obtaining $N$ from $G$. If $S$ is a $0|4$ side, then $T$ and $T'$ have a common 4-chain, contradicting that Reduction 2 has been applied to exhaustion. Hence $S$ is either a $1|3$ or $2|2$ side. Next, consider a 2-breakpoint side $S$ of $G$. Again, suppose that four leaves get attached to $S$ in obtaining $N$ from $G$.  If $S$ is a $0||4$, $1||3$, or $2||2$ side on which the breakpoints of  $T$ and $T'$ coincide, then $T$ and $T'$ have a common subtree with at least two leaves, contradicting that Reduction 1 has been applied to exhaustion. Otherwise, if the breakpoints of $T$ and $T'$ do not coincide and $S$ is a $0|2|2$, $0|1|3$, $0|3|1$, or $0|4|0$ side, then it is straightforward to check that $T$ and $T'$ would have been reduced by Reduction 1, 4, 3 or 2
respectively, again a contradiction.
 It now follows that $S$ is a $2|1|1$ side. 
\qed
\end{proof}

Let $G$ be a $k$-generator as described in Lemma~\ref{lemma:foundations}. By~\cite[Lemma 1]{tightkernel}, $G$ has  $3(k-1)$ sides, and there are $2k$ breakpoints to divide across
these sides. Intuitively, after attaching the elements in $X$ to sides of $G$ in order to obtain $N$, we delete $k$ edges in $N$ to obtain a subdivision of $T$ and $k$ edges to obtain a subdivision of $T'$. Instead of deleting edges in $N$, we think of this as placing breakpoints on the sides of $G$. Now, with a view towards obtaining an upper bound on the number of leaves that can be attached to any  $k$-generator $G$, the best we can do after applying Reductions 1--7 to exhaustion is to have $2k$ 1-breakpoint sides with four taxa each, and $(k-3)$
0-breakpoint sides with three taxa each; this is the origin of the $11k-9$ kernel in \cite{kelk2020new}. The main bottleneck in achieving a kernel that is smaller than $11k-9$ are sides with four taxa and, in particular, those that only have one breakpoint. This explains the heavy emphasis on $1|3$, $2|2$ and $2|1|1$ sides in the rest of the article.\\

The high-level idea to achieve a kernel for $d_\TBR$ that is smaller than $11k-9$ is as follows. In Section~\ref{sec:new-rules}, we present  new Reductions 8, 9, and 10. 
\begin{enumerate}
\item A $1|3$  side triggers Reduction 8 that, as long as a `secondary' common 3-chain is available, reduces the number of taxa by one.  The `3' part in a $1|3$ side can itself function as a secondary chain for another $1|3$ side, so this reduction rule eliminates all but at most one $1|3$ side.
\item A $2|2$ side that (informally) 
has relatively many taxa on the adjacent sides can be reduced by Reduction 9, which essentially first transforms such a  side into a $1|3$ side before executing Reduction 8 if another $1|3$ side (and therefore a `secondary' common 3-chain) is available.
\item A $2|1|1$ side that (informally) has relatively many taxa on the adjacent sides triggers the parameter-reducing Reduction 10.
\end{enumerate}

\noindent After applying Reductions 1--10  to exhaustion, all  sides with four taxa (apart from possibly one single exception) do \emph{not} have many taxa on the adjacent sides. This has the consequence that, once these sparse adjacent sides are taken into account, 4-taxa sides contribute (on average) significantly fewer than four taxa per side. With some  careful counting this leads to an improved kernel of size $9k-8$.\\

We are now in a position to state the main result of this paper. The proof is deferred until Section \ref{sec:alltogether}.

\begin{theorem}\label{t:main}
Let $T$ and $T'$ be two phylogenetic trees on $X$ that cannot be reduced under any of Reductions 1--10. If $d_\TBR(T,T')\geq 2$, then $|X| \leq 9d_\TBR(T,T')-8$.
\end{theorem}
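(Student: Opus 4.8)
The plan is to argue entirely within the generator-based counting framework underlying Lemma~\ref{lemma:foundations}. Fix a phylogenetic network $N$ on $X$ with $r(N)=d_\TBR(T,T')=k$ that displays $T$ and $T'$, together with spanning trees $R,R'$ and the underlying generator $G$. Since $T$ and $T'$ are irreducible under Reductions~1 and~2, $N$ has no pendant subtree with at least two leaves, so by~\cite{tightkernel} the generator $G$ has exactly $3(k-1)$ sides across which the $2k$ breakpoints are distributed. Writing $t_S$ for the number of taxa attached to a side $S$, we have $|X|=\sum_S t_S$, so the theorem reduces to bounding this sum. I would decompose each side's contribution relative to a baseline of three taxa:
\[
|X| \;=\; \sum_S t_S \;=\; (9k-9) \;+\; \sum_S e_S \;-\; \sum_S d_S,
\]
where $e_S=\max(0,t_S-3)$ is the \emph{excess} and $d_S=\max(0,3-t_S)$ is the \emph{deficit} of $S$. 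Because $9k-9\le 9k-8$, it suffices to prove that $\sum_S e_S - \sum_S d_S \le 1$; equivalently, the total excess beats the total deficit by at most one. By Lemma~\ref{lemma:foundations} the only sides with $e_S>0$ are the \emph{heavy} sides carrying four taxa, each of which has at least one breakpoint and is of type $1|3$, $2|2$, or $2|1|1$ with $e_S=1$.

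Next I would invoke the guarantees supplied by the three new reduction rules. From Reduction~8: since every $1|3$ side exposes a common $3$-chain that can act as the secondary chain for another $1|3$ side, exhaustive application leaves at most one surviving $1|3$ heavy side. This single side is exactly where the additive slack of $+1$ between $9k-9$ and $9k-8$ is spent, matching the tight example on $9k-9$ leaves. From Reductions~9 and~10: every surviving heavy $2|2$ side and every surviving heavy $2|1|1$ side must have a \emph{sparse} neighbourhood, i.e.\ the sides sharing an endpoint with it carry few taxa, since otherwise Reduction~9 (respectively~10) would still apply. Because $G$ is a cubic multigraph, each vertex lies on exactly three sides, so each endpoint of a heavy side exposes a bounded, structured family of neighbouring sides whose enforced sparsity creates deficit ($d_S>0$).

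The heart of the argument is then an amortized charging scheme on the sides of $G$. I would charge the single unit of excess of each heavy $2|2$ or $2|1|1$ side to deficit sitting on its sparse neighbouring sides, routing each charge through a shared generator vertex and splitting each neighbour's deficit between its two endpoints, so that no side receives more charge than the deficit it actually carries. The cubic structure and the scarcity of breakpoints ($2k$ of them over $3k-3$ sides) prevent heavy sides from clustering densely, and the thresholds forced by Reductions~9 and~10 guarantee that the deficit available around each heavy side is enough to absorb its excess. After every $2|2$ and $2|1|1$ heavy side has been compensated, only the at most one surviving $1|3$ side contributes uncompensated excess, giving $\sum_S e_S - \sum_S d_S \le 1$ and hence $|X|\le 9k-8$. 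The hypothesis $d_\TBR(T,T')\ge 2$ is used to ensure $G$ is a genuine generator with $3(k-1)\ge 3$ sides, keeping the count non-degenerate and disposing of the small-$k$ boundary cases.

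I expect the charging step to be the main obstacle. The delicate points are (i) guaranteeing that charges are never double-counted when two distinct heavy sides share a neighbour, which forces a careful global assignment rather than a purely local one; and (ii) verifying that the sparsity thresholds of Reductions~9 and~10 are quantitatively strong enough for the summed deficits to dominate the summed excesses down to the exact constant, yielding $9k-8$ rather than a weaker or stronger bound. The configurations most likely to demand separate treatment are loops and pairs of parallel edges in $G$ (where the notion of ``neighbouring side'' degenerates) and heavy sides incident to the single surviving $1|3$ side, where the lone uncompensated unit must be tracked with care.
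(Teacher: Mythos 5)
Your overall route is the paper's route: fix an optimal network $N$ with underlying generator $G$ on $3(k-1)$ sides, measure each side against a baseline of three taxa, observe via Lemma~\ref{lemma:foundations} that only $1|3$, $2|2$ and $2|1|1$ sides can carry excess, use Observation~\ref{obs:13eateachother} and Lemma~\ref{l:atmost1} to isolate at most one exceptional four-taxon side (you attribute the exception specifically to the surviving $1|3$ side, but it could equally be a $2|2$ side that is eligible for Operation P yet lacks a partner for Reduction~9), and then show that the sparsity forced on the neighbourhoods of the remaining heavy sides by Reductions~9 and~10 produces enough deficit to cancel the excess. Your charging scheme, once unwound, is exactly the paper's double count: each heavy side needs at least two neighbours carrying at most one taxon, each such light side meets at most four heavy sides, hence $2p\le 4r$, which is the constraint $r\ge p/2$ that makes the paper's objective $9k+p-2r-8$ collapse to $9k-8$.

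The genuine gap is that you leave unexecuted precisely the step you identify as the heart of the argument, and the cases you flag as ``likely to demand separate treatment'' are not corner cases but load-bearing ones. For a heavy side lying on a multi-edge, Theorems~\ref{thm:22multiedge} and~\ref{thm:211multiedge} give a disjunctive guarantee: \emph{either} two light neighbours off the multi-edge exist, \emph{or} the parallel partner side carries zero taxa. In the second branch there may be no light neighbours at all in your sense, so the charge has nowhere to go and the naive scheme fails. The paper absorbs this by a global re-normalization before counting: a zero-taxon multi-edge side is recounted as a three-taxon side and every adjacent four-taxon side is simultaneously demoted to three (which can only increase the total), zero-taxon simple sides are recounted as one-taxon, two-taxon sides as three-taxon, and loops are handled by Theorem~\ref{thm:211loop} together with the fact that every loop is adjacent to a $0$-breakpoint side. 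Only after this normalization does every side carry exactly one, three or four taxa, and only then does the clean inequality $r\ge p/2$ close the argument. Without some such device, your excess/deficit bookkeeping does not go through as stated, and this is also where your worry about double-counting is actually resolved (by the ``at most four heavy neighbours per light side'' bound rather than by a bespoke global assignment).
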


We finish this section by noting that the portfolio of reduction rules should always be executed in the order $1, 2,\ldots,10$.  Every time a reduction rule executes, the sequence should be restarted from Reduction 1.  The fact that in all cases the number of taxa (and sometimes the TBR distance) is reduced by at least one, and the fact that the reduction rules themselves can be executed in polynomial-time, ensures polynomial-time execution overall.

\section{Three new reduction rules}\label{sec:new-rules}

\subsection{Reduction 8: A  reduction rule to reduce a $1|3$ side if there is a spare
common 3-chain available.}

\begin{figure}[t]
\center
\scalebox{1}{\input{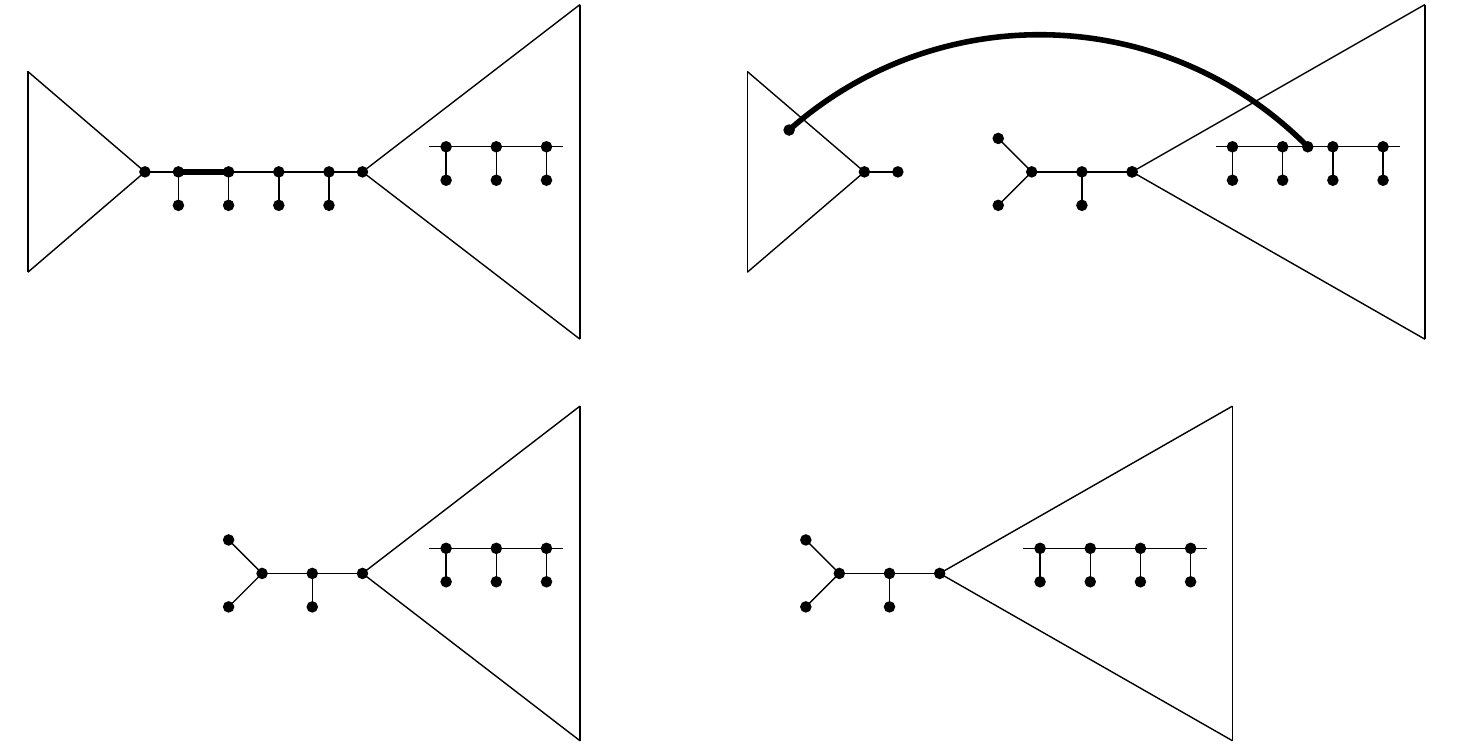_t}}
\caption{Reduction 8A can be used to reduce a $1|3$ side $a|bcd$, as long as a secondary common 3-chain (here $\{e,f,g\}$) is available. Swapping the bold edge in $T$ with the bold edge in $T_r$ preserves $d_{\TBR}$ and creates a common pendant subtree $\{b,c,d\}$ which can be reduced under Reduction 1. 
Observe that deleting the edge $\{p_a,p_b\}$ in $T$ disconnects $T$ into two smaller trees and, in the example above, 
$(b,c,d)$ and $(e,f,g)$ are leaves of the same smaller tree. In general, this
does not need to be the case: $(b,c,d)$ and $(e,f,g)$ can be in different subtrees.}
\label{fig:redux8}
\end{figure}

The reduction rule that we describe first is designed to target the structures of two phylogenetic trees that are induced by a $1|3$ side $S= a|bcd$. We start by describing the first of~two parts of Reduction 8 and already note here that the second part is an application of Reduction~1.\\

\noindent {\bf Reduction 8A.} Let $T$ and $T'$ be two phylogenetic trees on $X$ that cannot be reduced under any of Reductions 1--7. Suppose that $T$ and $T'$ have two leaf-disjoint common 3-chains $C = (b,c,d)$ and $D=(e,f,g)$ such that $C$ is pendant in $T'$ with cherry $\{b,c\}$ and $C$ is not pendant in $T$, and there exists a taxon $a$ such that $(a,b,c,d)$ is a chain of $T$ and not a chain of $T'$. Then obtain $T_r$ from $T$ by extending $D$ to the  4-chain $(e,f,g,g')$ such that $g'\notin X$, deleting the edge $\{p_a,p_b\}$, suppressing $p_a$ and $p_b$, subdividing the edge $\{p_f,p_g\}$ with a new vertex $v$, and adding the edge $\{u,v\}$, where $u$ is a new vertex that subdivides an arbitrary edge 
of the  component that does not contain $e$ such that $(b,c,d)$ is a pendant 3-chain in $T_r$. Finally, obtain $T_r'$ from $T'$ by extending $D$ to the  4-chain $(e,f,g,g')$.  Then $T_r$ and $T_r'$ are two phylogenetic trees on $X\cup\{g'\}$. In what follows, we will call $D$ the {\it secondary common 3-chain} when executing Reduction 8A. An application of this reduction is shown in Figure~\ref{fig:redux8}.

\begin{lemma}
\label{lem:13kill}
Let $T$ and $T'$ be two phylogenetic trees on $X$ that cannot be reduced under any of Reductions 1--7.
If $T_r$ and $T_r'$ are two phylogenetic trees obtained from $T$ and $T'$, respectively, by  a single application of Reduction 8A, then $d_\TBR(T,T') = d_\TBR(T_r,T_r')$, and $T_r$ and  $T'_r$ have a common pendant subtree of size three.
\end{lemma}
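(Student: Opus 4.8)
The plan is to prove the statement in two stages: first dispose of the chain-extension step, and then show that the single edge-swap applied to $T$ leaves the TBR distance unchanged. Let $T_1$ and $T_1'$ be the trees obtained from $T$ and $T'$ by extending $D=(e,f,g)$ to the $4$-chain $(e,f,g,g')$. Since this is exactly the reverse of Reduction~2 applied to a common chain, Lemma~\ref{l:7-reductions} gives $d_\TBR(T,T')=d_\TBR(T_1,T_1')$. By construction $T_r'=T_1'$, while $T_r$ is obtained from $T_1$ by deleting $\{p_a,p_b\}$, suppressing, and reconnecting the two resulting components through the new edge $\{u,v\}$; that is, $T_r$ is the result of a single TBR operation on $T_1$. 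It therefore suffices to show $d_\TBR(T_1,T_1')=d_\TBR(T_r,T_r')$, which I would do through the reformulation $d_\TBR=d_\MAF-1$.

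The key device is a ``swap'' principle. Deleting $e_1=\{p_a,p_b\}$ from $T_1$ and deleting $e_2=\{u,v\}$ from $T_r$ (with the attendant suppressions) produce the same two-component forest $F^{*}$. Consequently, any agreement forest for $T_1,T_1'$ in which no block's embedding in $T_1$ uses $e_1$ confines each block to one component of $F^{*}$, so its blocks embed identically in $T_r$; it is thus also an agreement forest for $T_r,T_r'$ of the same size, and symmetrically with the roles of $(T_1,e_1)$ and $(T_r,e_2)$ exchanged. What remains is to produce, in each direction, a maximum agreement forest avoiding the relevant edge.

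For $d_\MAF(T_1,T_1')\le d_\MAF(T_r,T_r')$ I would invoke Theorem~\ref{t:lovely-chains}: in $T_r'$ the sequence $(e,f,g,g')$ is a genuine $4$-chain, whereas in $T_r$ it is interrupted between $f$ and $g$ by $\{u,v\}$, so $(e,f,g,g')$ is an interrupted $4$-chain of $T_r'$ and $T_r$ with interrupter $\{u,v\}$. The theorem yields a maximum agreement forest of $T_r,T_r'$ no block of which uses $e_2=\{u,v\}$ in $T_r$, which by the swap principle is an agreement forest for $T_1,T_1'$. For the reverse inequality I would apply the chain preservation theorem (Theorem~\ref{thm:allchainsintact}) to the common $3$-chain $C=(b,c,d)$ to get a maximum agreement forest $F_1$ of $T_1,T_1'$ with a block $B\supseteq\{b,c,d\}$, and then argue that $F_1$ automatically avoids $e_1=\{p_a,p_b\}$. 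This step is the heart of the proof and the main obstacle. Its justification is the cherry: in $T_1'$ the set $\{b,c,d\}$ is a pendant subtree with cherry $\{b,c\}$, so $b$ and $c$ are siblings in $T_1'|B$; but in the $4$-chain $(a,b,c,d)$ of $T_1$ the leaves $b$ and $c$ have distinct parents $p_b\neq p_c$, and if $B$ contained a taxon on the $a$-side of $\{p_a,p_b\}$ then $p_b$ would be a degree-$3$ branch vertex of $T_1[B]$ and $b,c$ would fail to be siblings in $T_1|B$, contradicting $T_1|B=T_1'|B$. Hence $B$ does not use $\{p_a,p_b\}$; and since agreement-forest embeddings are vertex-disjoint and $p_b$ already lies in $T_1[B]$, no other block can use $\{p_a,p_b\}$ either. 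Thus $F_1$ avoids $e_1$, and the swap principle turns it into an agreement forest for $T_r,T_r'$ of the same size. Combining the two inequalities gives $d_\MAF(T_1,T_1')=d_\MAF(T_r,T_r')$, hence $d_\TBR(T,T')=d_\TBR(T_r,T_r')$.

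For the common pendant subtree I would read the conclusion off the construction: deleting $\{p_a,p_b\}$ and suppressing $p_b$ makes $b$ adjacent to $p_c$, so $\{b,c\}$ becomes a cherry, and the reattachment vertex $u$ is placed outside this region so that $(b,c,d)$ is a pendant $3$-chain with cherry $\{b,c\}$ in $T_r$. This is precisely the topology of the pendant subtree $\{b,c,d\}$ in $T_r'=T_1'$, so $\{b,c,d\}$ is a common pendant subtree of size three (on which the second part of Reduction~8, i.e.\ Reduction~1, can then act). Beyond the cherry argument, the only items needing care are the routine suppression bookkeeping in the swap principle and the verification that $u$ can always be positioned so as to keep $(b,c,d)$ pendant.
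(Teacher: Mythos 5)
Your proof is correct and follows essentially the same route as the paper's: extend $D$ to $(e,f,g,g')$ and invoke Lemma~\ref{l:7-reductions}, use the CPT on $C=(b,c,d)$ plus the pendancy of $C$ in $T'$ to get a maximum agreement forest avoiding $\{p_a,p_b\}$ in one direction, and Theorem~\ref{t:lovely-chains} applied to the interrupted 4-chain $(e,f,g,g')$ in the other. Your explicit cherry/sibling contradiction is just a spelled-out version of the paper's terser "since $C$ is pendant in $S'$, $B_i$ contains no taxon of $S_1$" step.
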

\begin{proof}
We establish the theorem using the same notation as in the definition of Reduction 8A. Since $T$ and $T'$ cannot be reduced under any of Reductions 1--7, neither $C$ nor $D$ is the leaf set of a common subtree of $T$ and $T'$. Furthermore, by  Observation~\ref{obs:pendancy}, $D$ is pendant in at most one of $T$ and $T'$. Hence, without loss of generality, we assume that $\{f,g\}$ is not a cherry in either $T$ or $T'$. Now, let $S$ and $S'$ be the two phylogenetic trees obtained from $T$ and $T'$, respectively, by extending $D$ to the $4$-chain $(e, f, g, g')$. It follows from applying Lemma~\ref{l:7-reductions} to Reduction 2 that $d_\TBR(T,T')=d_\TBR(S,S')$. It remains to show that $d_\TBR(S,S') = d_\TBR(T_r,T_r')$. First, let $F=\{B_0,B_1,B_2,\ldots, B_k\}$ be a maximum agreement forest
for $S$ and $S'$. By CPT, we may assume that $C\subseteq B_i$ for some $i\in\{0, 1,2,\ldots,k\}$. Let $S_1$ and $S_2$ be the two phylogenetic trees obtained from $S$ by deleting the edge $e_1=\{p_a,p_b\}$ such that $S_1$ does not contain $b$. Since $C$ is pendant in $S'$, it follows that $B_i$ does not contain a taxon of $S_1$. This in turn implies that $e_1$ is not used by any embedding $S[B_j]$ with $B_j\in F\setminus \{B_i\}$. Hence $F$ is an agreement forest for $T_r$ and $T_r'$. Second, let $F_r=\{B_0,B_1,B_2,\ldots, B_k\}$ be a maximum agreement forest for $T_r$ and $T_r'$. By Theorem~\ref{t:lovely-chains}, we may assume that there exists no component $B_i\in F_r$ with $i\in\{0, 1,2,\ldots,k\}$ such that $T_r[B_i]$ uses the interrupter of $(e,f,g,g')$.
Hence $F_r$ is an agreement forest for $S$ and $S'$. Combining both cases, establishes that $d_\TBR(S,S') = d_\TBR(T_r,T_r')$. Moreover, by construction, $\{b,c,d\}$ is the leaf set of a common pendant subtree of $T_r$ and $T'_r$.
$\qed$
\end{proof}

We are now in a position to describe Reduction 8.\\

\noindent {\bf Reduction 8.} Let $T$ and $T'$ be two phylogenetic trees on $X$ that cannot be reduced under any of Reductions 1--7. If $T$ and $T'$ can be reduced under Reduction 8A, then reduce $T$ and $T'$ to $T_r$ and  $T'_r$, respectively, by an application of Reduction 8A followed by an application of Reduction 1.\\

\noindent If $T_r$ and $T_r'$ are obtained from $T$ and $T'$ as described in Reduction 8, we say that {\it Reduction 8 is applied to $C=(b,c,d)$ and $D=(e,f,g)$}, where $C$ and $D$ are as defined in Reduction 8A. \\

The next theorem shows that an application of Reduction 8 preserves the TBR distance and reduces the number of taxa by one. Furthermore, this reduction can be be executed in polynomial time  by trying all possible candidates for the taxa $\{a,b,c,d,e,f,g\}$ as defined in Reduction 8A.

\begin{theorem}
\label{t:shrink}
Let $T$ and $T'$ be two phylogenetic trees on $X$ that cannot be reduced under any of Reductions 1--7. Suppose that  $T$ and $T'$ can be reduced under Reduction 8A. Let $T_r$ and $T_r'$ be  two phylogenetic trees on $X'$ that are obtained from $T$ and $T'$, respectively, by a single application of Reduction 8.
Then $d_\TBR(T,T')=d_\TBR(T_r,T_r')$ and $|X'|=|X|-1$.
\end{theorem}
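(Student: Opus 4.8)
The plan is to reduce Theorem~\ref{t:shrink} to the two pieces that have already been established, namely Lemma~\ref{lem:13kill} (Reduction 8A) and Lemma~\ref{l:7-reductions} (Reduction 1). Recall that Reduction 8 is \emph{defined} as an application of Reduction 8A followed by an application of Reduction 1. So the strategy is to chain the two distance-preservation statements together and then do a bookkeeping count of the leaves.

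\medskip

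\noindent\textbf{Distance preservation.} First I would invoke Lemma~\ref{lem:13kill}. By hypothesis $T$ and $T'$ are reduced under Reductions 1--7 and can be reduced under Reduction 8A; let $T_a$ and $T_a'$ denote the two trees obtained by applying Reduction 8A alone. Lemma~\ref{lem:13kill} gives us exactly two things: that $d_\TBR(T,T')=d_\TBR(T_a,T_a')$, and that $T_a$ and $T_a'$ have a common pendant subtree of size three (namely the $\{b,c,d\}$ subtree). Because $T_a,T_a'$ possess a common pendant subtree with at least two leaves, Reduction~1 is applicable to them; applying it produces $T_r,T_r'$. By the distance-preservation clause of Lemma~\ref{l:7-reductions} for Reduction~1, we have $d_\TBR(T_a,T_a')=d_\TBR(T_r,T_r')$. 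Transitivity then yields $d_\TBR(T,T')=d_\TBR(T_r,T_r')$, which is the first claim.

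\medskip

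\noindent\textbf{Leaf count.} For the second claim I would track the taxon set through each step. Reduction 8A extends the secondary chain $D=(e,f,g)$ to the $4$-chain $(e,f,g,g')$ in \emph{both} trees, introducing the single fresh taxon $g'\notin X$; the remaining topological surgery (deleting $\{p_a,p_b\}$, suppressing vertices, subdividing, and adding $\{u,v\}$) does not create or destroy any leaves, so $T_a,T_a'$ have leaf set $X\cup\{g'\}$, of size $|X|+1$. Reduction~1 then replaces the common pendant subtree $\{b,c,d\}$ (three leaves) by a single new leaf, so it removes exactly two taxa. Hence $|X'| = (|X|+1) - 2 = |X|-1$, as required.

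\medskip

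\noindent I do not expect a genuine obstacle here: both hard lemmas are already in hand, so the proof is essentially a two-line composition plus a leaf-counting tally, and the only point demanding mild care is confirming that Reduction~1 is indeed triggered and that exactly one leaf is added in 8A while exactly two are removed in Reduction~1, giving the net decrease of one. The one subtlety worth stating explicitly is that Lemma~\ref{lem:13kill} guarantees the common pendant subtree has size \emph{three} (not merely $\geq 2$), which is what makes the leaf-count exact rather than a one-sided bound; this is why the net change is precisely $-1$ rather than something that depends on the size of the reduced subtree.
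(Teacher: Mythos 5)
Your proposal is correct and follows essentially the same route as the paper's own proof: invoke Lemma~\ref{lem:13kill} for the distance preservation and the size-three common pendant subtree, apply Reduction~1 (with Lemma~\ref{l:7-reductions}), and count $|X|+1-2=|X|-1$. The only difference is that you make the appeal to Lemma~\ref{l:7-reductions} explicit where the paper leaves it implicit, which is a harmless improvement.
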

\begin{proof}
Let $S$ and $S'$ be the two phylogenetic trees on $|X|+1$ leaves obtained from $T$ and $T'$, respectively, by a single application of Reduction 8A. By Lemma~\ref{lem:13kill}, we have $d_\TBR(T,T)=d_\TBR(S,S')$. Moreover, using the same notation as in the definition of Reduction 8A, it follows that $S$ and $S'$ have a common pendant subtree with leaf set $\{b,c,d\}$. Setting $T_r$ and $T_r'$ to be the two phylogenetic trees obtained from $S$ and $S'$, respectively, by applying Reduction 1 to $\{b,c,d\}$ and noting that $|X'|= |X|+1-2=|X|-1$ establishes the theorem.
\qed
\end{proof}

Reduction 8 also leads to the following observation, which we will need later.

\begin{observation}
\label{obs:13eateachother}
Let $N$ be a phylogenetic network on $X$ that displays two phylogenetic trees $T$ and $T'$ on $X$ that cannot be reduced under any of Reductions 1--8. Let $G$ be the generator that underlies $N$. Then $G$ has at most \underline{one} $1|3$ side.
\end{observation}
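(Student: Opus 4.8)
<br>

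The proof will proceed by contradiction. Suppose, for the sake of argument, that $G$ has two distinct $1|3$ sides, say $S_1$ and $S_2$. By Lemma~\ref{lemma:foundations}(c), a $1|3$ side has exactly four taxa attached to it, arranged so that one taxon sits alone on one side of the single breakpoint and three taxa form a chain on the other side. So I would first extract, from each of $S_1$ and $S_2$, the associated taxa: write $S_1 = a_1 | b_1 c_1 d_1$ and $S_2 = a_2 | b_2 c_2 d_2$, where $(b_i, c_i, d_i)$ is a common 3-chain and $a_i$ is the lone taxon. The key structural claim I need is that each such $1|3$ side supplies \emph{both} the ``primary'' configuration $(a_i, b_i, c_i, d_i)$ required to trigger Reduction 8A and a ``secondary'' common 3-chain $(b_i, c_i, d_i)$ that another $1|3$ side can borrow.

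The main work is verifying that the hypotheses of Reduction 8A are genuinely met. Specifically, I would argue that the $1|3$ structure forces $(b_1, c_1, d_1)$ to be pendant in $T'$ (since the lone taxon $a_1$ lies on the other side of the unique breakpoint, the three taxa $b_1, c_1, d_1$ together with their attachment point detach from the rest of $T'$ along the single cut edge) while $(a_1, b_1, c_1, d_1)$ is a genuine 4-chain of $T$ (because the side has no breakpoint on the $T$-side of the path $P_{S_1}$, so all four taxa lie consecutively along a path in $R$, hence in $T$) but not a chain of $T'$ (the breakpoint separates $a_1$ from the rest). I would then take $C = (b_1, c_1, d_1)$ as the chain to be reduced and $D = (b_2, c_2, d_2)$ (from the \emph{other} $1|3$ side) as the secondary common 3-chain. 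Since $S_1$ and $S_2$ are distinct sides, $C$ and $D$ are leaf-disjoint, satisfying the leaf-disjointness hypothesis. I also need to confirm the cherry structure: that $\{b_1, c_1\}$ is a cherry in $T'$, which follows from $(b_1,c_1,d_1)$ being pendant with the correct orientation, and that $C$ is not pendant in $T$, which follows from Observation~\ref{obs:pendancy} since $(b_1,c_1,d_1)$ being pendant in $T'$ after exhaustive application of Reductions 1--7 forces it to be non-pendant in $T$.

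Having verified these hypotheses, Reduction 8A (and hence Reduction 8) applies to $C$ and $D$. But by hypothesis $T$ and $T'$ cannot be reduced under any of Reductions 1--8, and in particular Reduction 8 is not applicable --- a contradiction. This yields the desired conclusion that at most one $1|3$ side can exist. I expect the main obstacle to be the careful translation between the generator/breakpoint language (the combinatorial $1|3$ description of $S_i$ via $P_{S_i}$, $R$, and $R'$) and the tree-level chain/pendancy language in which Reduction 8A is phrased. In particular, I must be careful that the orientation of the breakpoint is such that the ``$3$'' part genuinely becomes a pendant 3-chain in $T'$ with cherry $\{b_1, c_1\}$ rather than the reverse orientation; I may need to relabel the chain or invoke symmetry to ensure the cherry lands at the correct end. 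A secondary subtlety is confirming that the secondary chain $D$ drawn from $S_2$ is usable even though it itself is part of a $1|3$ structure --- but the definition of Reduction 8A requires only that $D$ be a common leaf-disjoint 3-chain, which a $1|3$ side's ``$3$'' part certainly is, so no pendancy or cherry condition on $D$ is needed.
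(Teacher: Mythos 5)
Your proposal is correct and follows essentially the same route as the paper: take two distinct $1|3$ sides, observe that their taxon sets are disjoint, verify that the primary side's structure satisfies the hypotheses of Reduction 8A with the other side's 3-part serving as the secondary common 3-chain, and derive a contradiction with irreducibility under Reduction 8. The paper states the verification of Reduction 8A's hypotheses in one line, whereas you spell out the pendancy, cherry, and non-chain conditions explicitly; the extra detail (including the orientation/symmetry caveat) is accurate.
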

\begin{proof}
Suppose that there are two distinct such sides, $a|bcd$ and $.|efg$ where  ``.'' denotes a single taxon. Clearly, $\{b,c,d\} \cap \{e,f,g\} = \emptyset$ because the taxa are from distinct sides of $G$. Then $C=\{b,c,d\}$ and $D=\{e,f,g\}$ are two common 3-chains of $T$ and $T'$ that satisfy the three properties described in the definition of Reduction 8A. Hence,  $T$ and $T'$ can be further reduced under Reduction 8A and, therefore, under Reduction 8, a contradiction. \qed
\end{proof}

\subsection{Reduction 9: A reduction rule that triggers Reduction 8 by transforming certain $2|2$ sides into $1|3$ sides.}

The next  reduction rule targets the structures of two phylogenetic trees that are induced by a $2|2$ side $S = ab|cd$. We start by introducing an operation that does not reduce the number of leaves in the trees. Instead, this operation transforms certain $2|2$ sides of a generator into $1|3$ sides and is a \emph{p}recursor (hence the name P) to Reduction 9 that is described towards the end of this subsection.\\

\noindent {\bf Operation P.} Let $T$ and $T'$ be two phylogenetic trees on $X$ that cannot be reduced under any of Reductions 1--8. Suppose that $T$ has a non-pendant chain $(a,b,c,d)$, $T'$ has cherries $\{a,b\}$ and $\{c,d\}$, and there exists a maximum agreement forest $F$ for $T$ and $T'$ such that $\{a,b\}$ and $\{c,d\}$ are each preserved in $F$, but $\{a,b,c,d\}$ is not preserved in $F$. Then let $S=T$, and let $S'$ be the tree obtained from $T'$ by deleting $b$, suppressing $p_b$, subdividing the edge incident with $c$ with a new vertex $v$, and adding the edge $\{v,b\}$. \\

\noindent If $\{a,b,c,d\}$ satisfies all properties in the description of Operation P, we say that $\{a,b,c,d\}$ is {\it eligible for Operation P}. Moreover, if $S$ and $S'$ are obtained from $T$ and $T'$ as described above, we say that {\it Operation P is applied to $\{a,b,c,d\}$}. \\

\begin{theorem}
\label{thm:22to13}
Let $T$ and $T'$ be two phylogenetic trees on $X$ that cannot be reduced under any of Reductions 1--8. Furthermore, let  $S$ and $S'$ be two phylogenetic trees obtained from $T$ and $T'$, respectively, by applying Operation P to $\{a,b,c,d\}\subseteq X$. Then $d_{\TBR}(T, T') = d_{\TBR}(S,S')$. Moreover, $(b,c,d)$ is a common chain of $S$ and $S'$, and  $S'$ has cherry $\{b,c\}$.
\end{theorem}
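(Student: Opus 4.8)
The plan is to prove the distance identity through $d_{\MAF}$: since $S$ and $S'$ have leaf set $X$ and $d_{\TBR}=d_{\MAF}-1$, it suffices to show $d_{\MAF}(T,T')=d_{\MAF}(S,S')$, which I establish by two inequalities. The \emph{Moreover} part is immediate from the construction: $S=T$ contains the chain $(a,b,c,d)$ and hence the subchain $(b,c,d)$, while $S'$ is obtained by attaching $b$ on the edge incident with $c$, so $(b,c,d)$ is a chain of $S'$ with cherry $\{b,c\}$. Before the inequalities I would record two structural facts. First, $S$ and $S'$ agree on $X\setminus\{b\}$ (deleting $b$ from $S'$ and suppressing recovers $T'|(X\setminus\{b\})$) and $S=T$; hence $T|B$, $T'|B$, $S|B$, $S'|B$ all coincide whenever $b\notin B$. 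Second, computing the induced quartets gives $S|\{a,b,c,d\}=ab|cd$ (from the chain in $T$) whereas $S'|\{a,b,c,d\}=ad|bc$ (in $S'$ the leaf $a$ reaches the cluster $\{b,c,d\}$ from the $d$-side); consequently $\{a,b,c,d\}$ cannot lie in a single component of any agreement forest for $S$ and $S'$.

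For $d_{\MAF}(S,S')\le d_{\MAF}(T,T')$ I would invoke the eligibility hypothesis of Operation P: there is a maximum agreement forest $F=\{B_0,\dots,B_k\}$ for $T,T'$ with $a,b\in B_i$, $c,d\in B_j$ and $i\ne j$ (the blocks are distinct precisely because $\{a,b,c,d\}$ is not preserved). Define $F^{\ast}$ by moving $b$ from $B_i$ to $B_j$, which fixes the number of components. The key point is that $F$ being an agreement forest forces $T[B_i]$ and $T[B_j]$ to be vertex-disjoint, so no element of $B_i$ lies on the $c,d$-side of the chain edge $\{p_b,p_c\}$ and no element of $B_j$ on the $a,b$-side. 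Hence in $T=S$ the leaf $b$ (at $p_b$) attaches to $T[B_j]$ exactly at $p_c$, forming a cherry $\{b,c\}$, matching the cherry created in $S'$; with $S|B_j=S'|B_j$ this yields $S|(B_j\cup\{b\})=S'|(B_j\cup\{b\})$. The remaining conditions reduce, via the first structural fact, to vertex-disjointness of the modified components; the same locality of $b$'s relocation does the work, the crucial point being that $p_b$ leaves $T[B_i\setminus\{b\}]$ because $B_i$ has no element beyond $p_b$. This makes $F^{\ast}$ an agreement forest for $S,S'$ of size $|F|$.

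For the reverse inequality I would start from a maximum agreement forest $F'$ for $S,S'$ and apply the chain preservation theorem (Theorem~\ref{thm:allchainsintact}) to the common $3$-chain $(b,c,d)$ to assume $b,c,d\in B'_j$. By the quartet fact, $a\notin B'_j$, say $a\in B'_i$ with $i\ne j$. Moving $b$ from $B'_j$ to $B'_i$ again preserves the size, and symmetrically disjointness of $S[B'_i]$ and $S[B'_j]$ (so no $B'_i$-element lies beyond $p_a$ toward $p_b$) forces $b$ to attach to $B'_i$ at $p_a$ as a cherry $\{a,b\}$ in $T=S$, matching the cherry $\{a,b\}$ of $T'$. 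Together with agreement on $X\setminus\{b\}$ this gives $T|(B'_i\cup\{b\})=T'|(B'_i\cup\{b\})$, and the disjointness conditions follow as before, so $d_{\MAF}(T,T')\le|F'|=d_{\MAF}(S,S')$. Combining the two inequalities finishes the proof.

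I expect the main obstacle to be the disjointness bookkeeping after relocating $b$: one must check that the single new vertex and two new edges on $b$'s path into its new block meet no other component, and that the old block genuinely releases $p_b$. Both rest on one clean input — vertex-disjointness of the two embeddings straddling the chain, which confines each block's taxa to a single side of $\{p_b,p_c\}$ — so the effort is in packaging this observation rather than in any delicate calculation. A secondary point worth stating carefully is that the quartet mismatch $ab|cd\ne ad|bc$ is exactly what lets the reverse direction succeed without a special-forest hypothesis, mirroring the role of Operation P's eligibility condition in the forward direction.
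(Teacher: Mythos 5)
Your proposal is correct and follows essentially the same route as the paper's proof: the forward direction uses the eligibility hypothesis of Operation P to obtain a maximum agreement forest preserving $\{a,b\}$ and $\{c,d\}$ separately and then moves $b$ into the block containing $\{c,d\}$, while the reverse direction applies the chain preservation theorem to the common $3$-chain $(b,c,d)$ and moves $b$ into the block containing $a$. Your explicit quartet observation ($ab|cd$ versus $ad|bc$) and the vertex-disjointness bookkeeping around the edges $\{p_b,p_c\}$ and $\{p_a,p_b\}$ are exactly the facts the paper's proof relies on, so no substantive difference remains.
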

\begin{proof}
We establish the theorem using the same notation as in the definition of Operation P. First, let $F=\{B_0,B_1,B_2,\ldots,B_k\}$ be a maximum agreement forest for $T$ and $T'$ such that $\{a,b\}$ and $\{c,d\}$ are each preserved in $F$, but $\{a,b,c,d\}$ is not preserved in $F$. Since the elements in $\{T[B_i]: i\in\{0,1,2,\ldots,k\}\}$ are pairwise vertex disjoint, no element $T[B_i]$ uses the edge $\{p_b,p_c\}$. Let $B_j$ be the element in $F$ such that $\{a,b\}\subseteq B_j$ and, similarly, let $B_{j'}$ be the element in $F$ such that $\{c,d\}\subseteq B_{j'}$. Then $$(F\setminus \{B_j,B_{j'}\})\cup \{B_j\setminus \{b\},B_{j'}\cup\{b\}\}$$ is an agreement forest for $S$ and $S'$ that has the same size as $F$. Hence $d_{\TBR}(S,S') \leq d_{\TBR}(T, T')$. Second, let $F=\{B_0,B_1,B_2,\ldots,B_k\}$ be a maximum agreement forest for $S$ and $S'$. By CPT, we may assume that $\{b,c,d\}\subseteq B_j$ for some $j\in\{0,1,2,\ldots,k\}$. Since $S|B_j=S'|B_j$ and the elements in $\{S[B_i]: i\in\{0,1,2,\ldots,k\}\}$ are pairwise vertex disjoint, it follows that the edge $\{p_a,p_b\}$ is not used by $S[B_i]$ for any $i\in\{0,1,\ldots,k\}$. Now let $B_{j'}$ be the element in $F\setminus \{B_j\}$ such that $a\in B_{j'}$. Then $$(F\setminus \{B_j,B_{j'}\})\cup \{B_j\setminus \{b\},B_{j'}\cup\{b\}\}$$ is an agreement forest for $T$ and $T'$ that has the same size as $F$. Thus $d_{\TBR}(S,S') \geq d_{\TBR}(T, T')$. Combining both cases establishes that $d_{\TBR}(S, S') = d_{\TBR}(T,T')$. Moreover, by construction of $S$ and $S'$ it follows immediately that $(b,c,d)$ is a common chain of $S$ and $S'$, and  $S'$ has cherry $\{b,c\}$.
\qed
\end{proof}

Let $T$ and $T$ be two phylogenetic trees on $X$. Following on from the description of Operation P, we present an explicit, polynomial-time algorithm---called Algorithm 1---in the appendix for testing whether or not, given a subset $\{a,b,c,d\}$ of $X$, there exists a maximum agreement forest $F$ for $T$ and $T'$ such that $\{a,b\}$ and $\{c,d\}$ are each preserved in $F$, but $\{a,b,c,d\}$ is not preserved in $F$. Although the algorithm does not necessarily catch \emph{all} situations when $F$ exists,  it is enough for our purposes. The high-level idea is that, as soon as a side $ab|cd$ has `many taxa on its surrounding sides', then  $T$ and $T'$ will contain easily detectable structures that constitute a certificate for the existence of $F$ and Algorithm 1 will find them.\\

The following corollary to Theorem \ref{thm:22to13} is useful later.

\begin{corollary}
\label{cor:preserved}
Let $T$ and $T'$ be two phylogenetic trees on $X$ that cannot be reduced under any of Reductions 1--8, and let  let $S$ and $S'$ be the two phylogenetic trees on $X$ that are obtained from $T$ and $T'$, respectively, by applying Operation P to $\{a,b,c,d\}\subseteq X$. Furthermore, let $N$ be a phylogenetic network on $X$ that displays $T$ and $T'$. If the generator $G$ that underlies $N$ has a $2|2$ side $S=ab|cd$, then $N$ also displays $S$ and $S'$.
\end{corollary}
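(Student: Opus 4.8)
The plan is to exploit the fact that Operation P leaves $T$ untouched ($S=T$) and alters $T'$ only in the immediate vicinity of $\{a,b,c,d\}$, so that displaying $S'$ in $N$ amounts to sliding a single breakpoint one step along the side $ab|cd$ while keeping the network $N$ itself fixed. Since $N$ displays $T$ by hypothesis, $N$ displays $S$ immediately, and the whole task reduces to showing that $N$ displays $S'$. I would write $P_S = u, p_a, p_b, p_c, p_d, w$ for the path in $N$ associated with the side $ab|cd$, so that each leaf $x\in\{a,b,c,d\}$ is attached at $p_x$ and $\{u,w\}$ is the corresponding side of $G$.

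First I would pin down where the breakpoint of $T'$ lies on $P_S$. Because the side is $2|2$, exactly one of the two images omits a single edge of $P_S$, and by the notation $ab|cd$ that edge is $\{p_b,p_c\}$. The key point is that this omission belongs to $T'$, not $T$: if instead an image of $T'$ used every edge of $P_S$, then each of $p_a,p_b,p_c,p_d$ would be a genuine degree-$3$ branch vertex of that image, so $a$ and $b$ would have distinct parents and $(a,b,c,d)$ would be a $4$-chain of $T'$, contradicting the hypothesis of Operation P that $\{a,b\}$ and $\{c,d\}$ are cherries of $T'$. Hence I may fix an image $T'_N$ of $T'$ in $N$ that uses all edges of $P_S$ except $\{p_b,p_c\}$; concretely, $T'_N$ uses $\{u,p_a\},\{p_a,p_b\},\{p_c,p_d\},\{p_d,w\}$ together with the four leaf-edges, realizing the cherry $\{a,b\}$ below $u$ and the cherry $\{c,d\}$ below $w$.

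The decisive step is the surgery. I would define $S'_N$ from $T'_N$ by deleting the edge $\{p_a,p_b\}$ and inserting the edge $\{p_b,p_c\}$, leaving every edge of $N$ outside $P_S$ unchanged. Then I would verify directly that $S'_N$ is a subdivision of $S'$: suppressing the now degree-$2$ vertex $p_a$ attaches $a$ directly below $u$, suppressing the now degree-$2$ vertex $p_b$ attaches $b$ to $p_c$, and $p_c$ becomes the degree-$3$ parent of the cherry $\{b,c\}$ while $p_d$ remains the degree-$3$ vertex carrying $d$ and the connection to $w$. This is exactly the chain $(b,c,d)$ with cherry $\{b,c\}$ and relocated leaf $a$ produced by Operation P (cf. Theorem~\ref{thm:22to13}); outside $P_S$ nothing has changed, so $S'_N$ agrees there with the image of the part common to $T'$ and $S'$. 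For connectivity, deleting $\{p_a,p_b\}$ splits off only the two-vertex component $\{b,p_b\}$ (recall $p_b$ had neighbours $p_a,b$ in $T'_N$), which inserting $\{p_b,p_c\}$ reattaches, so $S'_N$ is again a subtree of $N$. Therefore $N$ displays $S'$, and hence $N$ displays both $S$ and $S'$.

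The genuine obstacle is not the edge swap itself but the two places where the hypotheses enter: justifying that the $T'$-breakpoint sits at $\{p_b,p_c\}$ (ruling out that $T$ owns it, which is precisely where the cherry hypothesis of Operation P is used), and checking that the local suppressions reproduce $S'$ exactly rather than some other rearrangement of $a,b,c,d$. Both amount to careful bookkeeping of vertex degrees in the subdivision, but they are the only points at which the $2|2$ designation and the cherry structure are genuinely needed.
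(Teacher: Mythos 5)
Your proof is correct and is essentially the paper's argument spelled out in full: the paper's (very terse) proof simply observes that the breakpoint on $ab|cd$ is relative to $T'$ and that one can re-view the side as $a|bcd$ with the breakpoint now relative to $S'$, which is exactly your surgery of deleting $\{p_a,p_b\}$ and restoring $\{p_b,p_c\}$ in the image. Your additional bookkeeping (locating the $T'$-breakpoint via the cherry hypothesis and verifying that the suppressions yield $S'$) is a correct expansion of what the paper leaves implicit.
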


\begin{proof}
Using the same notation as in the definition of Operation P, observe that the breakpoint on $S$ is relative to $T'$. To see that $N$ also displays $S$ and $S'$, we view $S$ as the $1|3$ side $a|bcd$, where the breakpoint is now relative to $S'$. 
\qed
\end{proof}

Theorem \ref{thm:22to13} and Corollary \ref{cor:preserved} are the theoretical foundation for Operation P. Once Operation P is applied, Reduction 8 may be triggered if another $1|3$ side is available. This can happen in two slightly different ways which we describe next as Reduction 9.1 and 9.2. Reduction 9.1 is tried first and, if it fails, Reduction 9.2 is tried. Essentially Reduction 9.1 converts one $2|2$ side into a $1|3$ side and Reduction 9.2 converts two $2|2$ sides into $1|3$ sides. In both cases, the new $1|3$ sides trigger Reduction 8. Let $T$ and $T'$ be two phylogenetic trees on $X$ that cannot be reduced under any of Reductions 1--8. Then Reductions 9.1 and 9.2 are defined as follows. \\

\noindent  \textbf{Reduction 9.1.} Suppose that there exists $\{a,b,c,d\} \subseteq X$ such that $C=(b,c,d)$ is a common chain of $T$ and $T'$, $C$ is pendant in $T'$ with cherry $\{b,c\}$ and not pendant in $T$, and $(a,b,c,d)$ is a chain of $T$ and not a chain of $T'$. Suppose furthermore that
there exists $\{a',b',c',d'\} \subseteq X$ which is eligible for Operation P, where $\{a,b,c,d\} \cap \{a',b',c',d'\} = \emptyset$. Then an application of Reduction 9.1 to $T$ and $T'$ consists of an application of  Operation P to $\{a',b',c',d'\}$, thereby creating two phylogenetic trees with a common 3-chain $(b',c',d')$, and a subsequent application of Reduction 8  to $C$ and the newly created secondary common 3-chain $D=(b',c',d')$.\\

\noindent \textbf{Reduction 9.2.}  Suppose that there exist two disjoint subsets  
$\{a',b',c',d'\}$ and $\{a'',b'',c'',d''\}$ of $X$, such that both are eligible for Operation P. Then, an application of Reduction 9.2 to $T$ and $T'$ consists of an  application of Operation P to $\{a',b',c', d'\}$ followed by an application of the same operation to $\{a'', b'', c'', d''\}$ if it is still eligible\footnote{In our analysis later in the article, we apply Reduction 9.2 in a situation where $\{a'', b'', c'', d''\}$ is definitely still eligible for transformation after $\{a',b',c', d'\}$ has been transformed.}, and finally an application of Reduction 8 to $C=\{b',c',d'\}$ and the secondary common chain $D=(b'',c'',d'')$.\\

\noindent It is important to note that Reduction 9.2 is an `all-or-nothing' reduction, i.e. it either executes fully or not at all. Specifically, it does not execute the first application of  Operation P but not the second. As Algorithm 1 (see appendix) runs in polynomial time, it follows that Reductions 9.1 and 9.2 can be executed in polynomial time  by trying all possible candidates for the taxa $\{a,b,c,d, a',b',c',d'\}$ and $\{a',b',c',d',a'', b'', c'', d''\}$, respectively. Lastly, since we do not always need to distinguish between Reduction 9.1 and Reduction 9.2, we refer to an application of one of the two reductions as {\it Reduction 9}.

\subsection{Reduction 10: A reduction rule to reduce
certain $2|1|1$ sides.}

The last new reduction rule targets the structures of two phylogenetic trees that are induced by a $2|1|1$ side $S = ab|c|d$. Reduction 10 is much more straightforward than Reductions 8 and 9. \\

\noindent {\bf Reduction 10.} Let $T$ and $T'$ be two phylogenetic trees on $X$ that cannot be reduced under any of Reductions 1--9. If $T$ has two cherries $\{a,b\}$ and $\{c,d\}$, $T'$ has the 3-chain $(a,b,c)$ such that $\{b,c\}$ is a cherry, and there exists a maximum agreement forest $F$ for $T$ and $T'$  such that $\{c\}\in F$, then reduce $T$ and $T'$ to $T_r=T|X\setminus \{c\}$ and $T_r'=T'|X\setminus\{c\}$, respectively.\\

\noindent  If $\{a,b,c,d\}$ satisfies all properties in the description of Reduction 10, we say that $\{a,b,c,d\}$ is {\it eligible for Reduction 10}. Moreover, if $T_r$ and $T_r'$ are obtained from $T$ and $T'$ as described above, we say that {\it Reduction 10 is applied to $\{a,b,c,d\}$}.\\

The next theorem shows that Reduction 10 is parameter reducing. Its proof is straightforward and omitted.

\begin{theorem}
\label{thm:211eligible}
Let $T$ and $T'$ be two phylogenetic trees on $X$ that cannot be reduced under any of Reductions 1--9. Furthermore, let  $T_r$ and $T'_r$ be two phylogenetic trees obtained from $T$ and $T'$, respectively, by a single application of Reduction 10. Then $d_{\TBR}(T_r, T'_r) = d_{\TBR}(T,T') - 1$.
\end{theorem}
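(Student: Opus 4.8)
The plan is to pass to maximum agreement forests and use the relationship $d_\TBR = d_\MAF - 1$ from \cite[Theorem 2.13]{AllenSteel2001}; it therefore suffices to establish that $d_\MAF(T_r,T_r') = d_\MAF(T,T') - 1$, and I would prove the two inequalities separately. Throughout I would rely on the elementary fact that restriction is transitive, i.e. $T_r|B = (T|(X\setminus\{c\}))|B = T|B$ and likewise $T_r'|B = T'|B$ for every $B\subseteq X\setminus\{c\}$, so that the ``same topology'' condition (1) for agreement forests transfers automatically between $T,T'$ and $T_r,T_r'$ for any component avoiding $c$.

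For the upper bound $d_\MAF(T_r,T_r') \le d_\MAF(T,T') - 1$ I would invoke the defining hypothesis of Reduction 10 directly: there is a maximum agreement forest $F$ for $T$ and $T'$ with $\{c\}\in F$. I claim $F\setminus\{\{c\}\}$ is an agreement forest for $T_r$ and $T_r'$. Condition (1) follows from transitivity of restriction as above. For condition (2), the only point needing care is vertex-disjointness after $c$ is deleted and the resulting degree-2 vertex $p_c$ is suppressed: since the embeddings of the components of $F$ are pairwise vertex-disjoint in $T$, the vertex $p_c$ lies in at most one of them, so suppressing it merely contracts the two non-pendant edges at $p_c$ and affects at most one embedding, and hence cannot create an overlap between two components. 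The same reasoning applies in $T_r'$. Thus $F\setminus\{\{c\}\}$ is an agreement forest with exactly one fewer component than $F$, giving the bound.

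For the reverse inequality $d_\MAF(T,T') \le d_\MAF(T_r,T_r') + 1$ I would use the generic observation that reinserting a single leaf costs at most one component, which requires no structural hypothesis. Starting from any maximum agreement forest $F_r$ for $T_r$ and $T_r'$, I would adjoin the singleton $\{c\}$ to form $F_r\cup\{\{c\}\}$. Transitivity of restriction again yields condition (1) for the old components (and condition (1) is trivial for the singleton). For condition (2), reinserting $p_c$ merely subdivides a single edge of $T_r$ (respectively $T_r'$) and hangs $c$ from it, so pairwise disjointness of the old embeddings is preserved; moreover, because $c$ is a leaf it can never be an interior vertex of an embedding $T[B]$ with $c\notin B$, so $\{c\}$ is vertex-disjoint from all other components. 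Hence $F_r\cup\{\{c\}\}$ is an agreement forest for $T$ and $T'$ with one extra component.

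Combining the two bounds gives $d_\MAF(T_r,T_r') = d_\MAF(T,T') - 1$, and translating back through $d_\TBR = d_\MAF - 1$ yields $d_\TBR(T_r,T_r') = d_\TBR(T,T') - 1$. I expect the main (though routine) obstacle to be the careful bookkeeping around the suppression and subdivision of degree-2 vertices in the disjointness arguments. It is worth emphasizing that the cherries $\{a,b\}$, $\{c,d\}$ in $T$ and the 3-chain $(a,b,c)$ of $T'$ play \emph{no} role inside this proof once the singleton-MAF certificate is assumed; the topological structure of the $2|1|1$ side is used only to guarantee, elsewhere, that such a maximum agreement forest with $\{c\}\in F$ actually exists.
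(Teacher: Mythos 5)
Your proof is correct. The paper itself omits the proof of this theorem, declaring it ``straightforward'', and your argument is precisely the straightforward one the authors evidently have in mind: translate to $d_\MAF$ via Allen--Steel, remove the singleton $\{c\}$ from the certifying maximum agreement forest for one inequality, and re-adjoin $\{c\}$ to a maximum agreement forest of the reduced pair for the other, with the bookkeeping around suppressing/subdividing the degree-2 vertex $p_c$ handled as you describe. Your closing remark is also accurate: the cherry and chain structure of the $2|1|1$ side is needed only to make the reduction detectable (via Algorithm 2 and Theorems~\ref{thm:211sparse}--\ref{thm:211loop}), not for the correctness of the distance drop itself.
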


\noindent It remains to establish that Reduction 10 can be executed in polynomial time. In the appendix, we present an explicit, polynomial-time algorithm---called Algorithm 2---for testing whether there exists a maximum agreement forest $F$ for $T$ and $T'$  such that $\{c\}\in F$. 
As Algorithm 2  runs in polynomial time, it follows that Reduction 10 can be executed in polynomial time  by trying all possible candidates for the taxa $\{a,b,c,d\}$ as defined in Reduction 10.
Moreover, an application of Reduction 10 decreases the number of taxa and the TBR distance  both by exactly 1.

\section{A win-win scenario}\label{sec:MAFs-everywhere}

In this section we explore the interplay of sides of a generator that are adjacent to each other. We will see that a generator side whose adjacent sides are densely decorated with taxa \stevenred{triggers} reduction rules and that a generator side that does not trigger a reduction rule has adjacent sides that are, on average, only sparsely decorated with taxa. To this end, we establish several results that pinpoint when a subset of taxa is eligible for Operation P or Reduction 10. We begin with a key insight.

\begin{observation}
\label{obs:chainseverywhere}
Let $T$ and  $T'$ be two phylogenetic trees on $X$ that cannot be reduced under Reductions 1--7. Let $G$ be the generator underlying a phylogenetic network $N$ on $X$ such that $r(N)=d_\TBR(T,T')$, and let $S$ be a side of $G$. If at least three taxa are attached to $S$ in obtaining $N$ from $G$, then $T$ and $T'$ have a CPT-eligible chain unless $S$ is a $1|1|1$ side.
\end{observation}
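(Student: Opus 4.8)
The plan is to argue by a finite case analysis on the number of breakpoints of $S$ and their position along the associated path $P_S$, using Lemma~\ref{lemma:foundations} to keep the number of cases small. Since at least three taxa are attached to $S$ and, by Lemma~\ref{lemma:foundations}(a), at most four are, $S$ carries either three or four taxa, and I would write $P_S = u, p_{\ell_1}, \ldots, p_{\ell_m}, w$ with $m \in \{3,4\}$. The single mechanic driving every case is how a breakpoint reshapes the local structure of a tree: a breakpoint of, say, $T$ on an edge of $P_S$ is an edge that is deleted when $N$ is contracted to a subdivision of $T$, and after this deletion the endpoints of the deleted edge drop to degree two and are suppressed. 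Consequently, a breakpoint on an \emph{inner} edge $\{p_{\ell_i},p_{\ell_{i+1}}\}$ turns the two taxa lying on the two-element side into a cherry, i.e.\ a pendant 2-chain, while a breakpoint on an \emph{outer} edge $\{u,p_{\ell_1}\}$ or $\{p_{\ell_m},w\}$ merely absorbs the extreme taxon into the chain and leaves the full $m$-chain intact (now with a cherry at that end). For each breakpoint pattern the proof then reads off either a common $n$-chain with $n \geq 3$, or a pair of consecutive taxa that is a cherry in one tree and an (adjacent-parent) 2-chain in the other; both are CPT-eligible.

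For $m = 4$, Lemma~\ref{lemma:foundations}(c),(d) restricts $S$ to be $1|3$, $2|2$ or $2|1|1$, which I would dispatch directly. On $a|bcd$ the triple $(b,c,d)$ survives in both trees, so it is a common 3-chain. On $ab|cd$ the tree carrying the breakpoint has cherries $\{a,b\}$ and $\{c,d\}$, whereas the other tree keeps the full 4-chain, so $(a,b)$ is a common 2-chain that is pendant (a cherry) in the first tree. On $ab|c|d$ one tree has the cherry $\{a,b\}$ while the other contains the 3-chain $(a,b,c)$, so again $(a,b)$ is a common 2-chain that is pendant in the first tree. Each case yields a CPT-eligible chain.

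For $m = 3$ I would run through the breakpoint count. With no breakpoint, $(a,b,c)$ is a common 3-chain. With one breakpoint, if it sits on an outer edge the full 3-chain $(a,b,c)$ remains common, and if it sits on an inner edge the two taxa on the short side form a cherry in the broken tree while staying a 2-chain in the intact tree, giving a common pendant 2-chain. The delicate case is two breakpoints, one relative to each of $T$ and $T'$. Here I would show that \emph{unless} both breakpoints lie on the two distinct inner edges $\{p_a,p_b\}$ and $\{p_b,p_c\}$, some consecutive pair of taxa is separated in neither tree and is a cherry in at least one of them; a short enumeration of the $4 \times 4$ placements of the two breakpoints confirms that only this one configuration fails. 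That surviving configuration is exactly the $1|1|1$ side: there $T$ has the cherry $\{b,c\}$ and $T'$ has the cherry $\{a,b\}$, and I would verify that neither is a common 2-chain, since in the tree where the pair is not a cherry its two parents are non-adjacent. Hence no CPT-eligible chain is forced, which is precisely the stated exception.

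The main obstacle is bookkeeping the suppression of degree-2 vertices correctly: a breakpoint alters adjacencies not only across the deleted edge but at its endpoints, so that pairs of consecutive taxa can collapse into cherries. The CPT-eligibility of a 2-chain hinges delicately on its two taxa having adjacent or identical parents, so the crux is to recognise that in the $1|1|1$ case the two induced cherries live in \emph{different} trees and therefore fail to be common 2-chains, which is exactly what separates the exception from every other pattern.
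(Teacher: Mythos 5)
Your argument is correct and follows essentially the same route as the paper's proof: a case analysis on the breakpoints of $S$ in which any segment of at least two taxa adjacent to a breakpoint yields a common 2-chain that is pendant (a cherry) in the tree owning that breakpoint, hence CPT-eligible, with the $1|1|1$ pattern as the sole escape. The paper phrases this more compactly via a pigeonhole on the segment sizes $n_1,n_2(,n_3)$ rather than enumerating taxa counts and breakpoint positions, and it does not need your (correct but superfluous) verification that the $1|1|1$ case genuinely produces no common chain, since the observation only asserts existence outside that case.
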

\begin{proof}
If $S$ is a 0-breakpoint side, then it immediately follows that $T$ and $T'$ have a common 3-chain that is CPT eligible. Suppose that $S$ is a 1-breakpoint side. Then $S$ is a $n_1|n_2$ side, where $n_1\geq 0$ and $n_2\geq 0$ denote the number of taxa attached to $S$ on either side of the breakpoint. Since $n_1+n_2\geq 3$, either $n_1\geq 2$ or $n_2\geq 2$. Hence $T$ and $T'$ have a common 2-chain that is pendant in one of $T$ and $T'$ and therefore CPT eligible. Lastly, suppose that $S$ is a 2-breakpoint side. Similar to the 1-breakpoint case, $S$ is a  $n_1|n_2|n_3$ side, where $n_1\geq 0$, $n_2\geq 0$, and $n_3\geq 0$ denote the number of taxa attached to $S$ before the first breakpoint, after the first and before the second breakpoint, and after the second breakpoint, respectively. Since $n_1+n_2+n_3\geq 3$ and $S$ is not a $1|1|1$ side it again follows that $T$ and $T'$ have a common 2-chain that is pendant in one of $T$ and $T'$ and therefore CPT eligible. \qed
\end{proof}

In the remainder of this section, we carefully analyze $2|2$ and $2|1|1$ sides, and establish sufficient conditions under which a subset of taxa that decorates such a side is eligible for Operation P or Reduction 10. Let $S$ be a side of a generator $G$. Viewing $G$ as a graph, $S$ can either be a \emph{simple edge}, i.e. an edge  that is not part of a multi-edge, an edge that is part of a \emph{multi-edge}, or a \emph{loop}. A multi-edge of $G$ contains at most two edges, due to the fact that each vertex of $G$ has degree three. The only exception is if $G$ has exactly two vertices, and one multi-edge consisting of three edges. This  implies that $d_{\TBR} \leq 2$. By assuming throughout the rest of the paper that $d_{\TBR} \geq 3$ we can exclude this case\footnote{This does not harm the final upper bound $9k-8$ on the size of the kernel, because we can easily test in polynomial time whether $d_{\TBR}(T,T') \leq 2$ and if so exactly compute $d_{\TBR}$ in the same time bound. Subsequently we can output a trivial YES/NO instance to complete the kernelization.}. The following analyses depend on whether $S$ is a simple edge, an edge of a multi-edge, or a loop.

Observe that a $2|2$ \stevenred{(or a $1|3$)} side cannot be a loop because the phylogenetic  tree that does not have a breakpoint on that side would contain a cycle.

\subsection{$2|2$ sides}

\begin{figure}[t]
\center
\scalebox{1}{\input{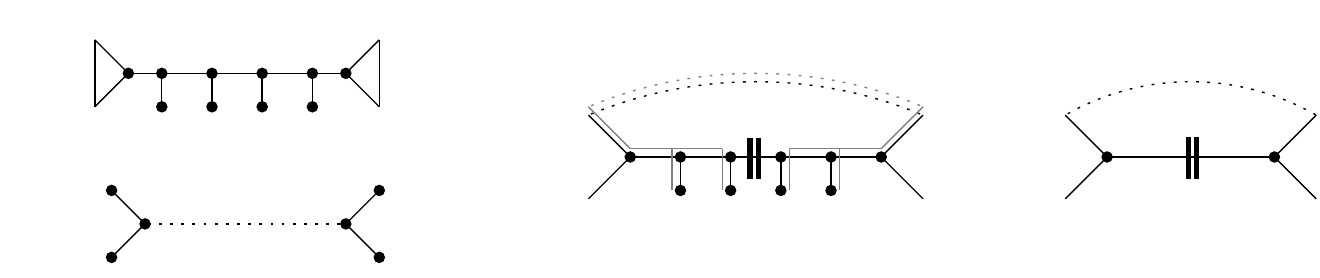_t}}
\caption{The situation described in Theorem \ref{thm:22sparse}, which concerns $2|2$ sides $S=ab|cd$, where $S$ is not part of a multi-edge. The \revision{gray} edges in $N$ indicate an image of $T'$.
}
\label{fig:22simple}
\end{figure}

\begin{theorem}
\label{thm:22sparse}
Let $T$ and  $T'$ be two phylogenetic trees on $X$ that cannot be reduced under Reductions 1--7. Let $G$ be the generator underlying a phylogenetic network $N$ on $X$ that displays $T$ and $T'$ such that $r(N)=d_\TBR(T,T')$. Furthermore, let $S =ab|cd$ be a side of $G$ that is a simple edge $\{u,v\}$. Let $A$, $B$, $C$, and $D$ be the four sides incident with $S$ such that $A$ and $C$ are both incident with $u$, and $B$ and $D$ are both incident with $v$. If each of $A$ and $C$ is decorated with at least two taxa, or each of $B$ and $D$ is decorated with at least two taxa in obtaining $N$ from $G$, then $\{a,b,c,d\}$ is eligible for Operation P.
\end{theorem}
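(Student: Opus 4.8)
The plan is to verify the three defining conditions of Operation P for $\{a,b,c,d\}$: that $(a,b,c,d)$ is a non-pendant chain of $T$, that $\{a,b\}$ and $\{c,d\}$ are cherries of $T'$, and — the substantive part — that some maximum agreement forest preserves $\{a,b\}$ and $\{c,d\}$ but not $\{a,b,c,d\}$. Since Operation P is asymmetric in $T$ and $T'$, I take the labelling in which the single breakpoint of $S=ab|cd$ is relative to $T'$, exactly as observed in the proof of Corollary~\ref{cor:preserved}. By the symmetry that interchanges the roles of $u$ and $v$ (and simultaneously $\{a,b\}$ with $\{c,d\}$ and the pair $A,C$ with $B,D$), I may also assume it is $A$ and $C$, the two sides incident with $u$, that each carry at least two taxa.

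First I would dispatch the two structural conditions directly from the $2|2$ structure. Because no breakpoint of $S$ lies in $T$, the image of $T$ in $N$ uses every edge of the path $P_S=u,p_a,p_b,p_c,p_d,v$; as $p_a,p_b,p_c,p_d$ are pairwise distinct interior vertices, this yields the chain $(a,b,c,d)$ in $T$, and it is non-pendant since $p_a\neq p_b$ and $p_c\neq p_d$. Deleting the $T'$-breakpoint edge $\{p_b,p_c\}$ and suppressing the two resulting degree-$2$ vertices makes $p_a$ adjacent to both $a$ and $b$, and $p_d$ adjacent to both $c$ and $d$, so that $\{a,b\}$ and $\{c,d\}$ are cherries of $T'$. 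This establishes conditions (i) and (ii) of Operation P.

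For condition (iii) I would first obtain a convenient maximum agreement forest. Each of $\{a,b\}$ and $\{c,d\}$ is a cherry of $T'$, hence a pendant $2$-chain of $T'$ and therefore CPT-eligible; since they are taxa-disjoint, Theorem~\ref{thm:allchainsintact} supplies a maximum agreement forest $F$ of $T$ and $T'$ in which both $\{a,b\}$ and $\{c,d\}$ are preserved. If they lie in distinct components of $F$ we are done, so assume $\{a,b,c,d\}\subseteq B_i$ for some component $B_i$. In $T'$ the embedding $T'[B_i]$ connects the cherry $\{a,b\}$ to the cherry $\{c,d\}$ by a path that, being blocked at the breakpoint $\{p_b,p_c\}$, must leave $u$ along exactly one of the remaining sides $A$ or $C$, say $A$. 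Vertex-disjointness of the embeddings then forces every taxon on $A$ into $B_i$, since a leaf of $A$ lying in another component $B_j$ would put its attachment vertex into both $T'[B_i]$ and $T'[B_j]$; by contrast the unused side $C$, which still carries at least two taxa, keeps its leaves available in other components.

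The hard part will be to turn this single forest $F$ into a maximum agreement forest in which the two cherries are separated, and this is where I expect the real obstacle to lie. The idea is to read off the restriction $T'|B_i$ and match it against $T|B_i$, in which $A$ and $C$ are both attached at the single vertex $u$ adjacent to $a$; the condition $T|B_i=T'|B_i$ pins down how the material beyond $v$ is threaded back to $u$, and the presence of at least two taxa on the \emph{unused} side $C$ is precisely what gives the slack for an exchange. Concretely I would cut the direct chain edge $\{p_b,p_c\}$ in $T$, splitting $B_i$ into an $\{a,b\}$-part and a $\{c,d\}$-part, and simultaneously reattach the $\{c,d\}$-part to a neighbouring component along $C$ so that the total number of components is unchanged. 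The delicate point — and the step I expect to consume most of the work — is re-verifying, after this surgery, that both agreement-forest conditions (the restriction equality and vertex-disjointness, in \emph{both} $T$ and $T'$) still hold; it is exactly here that the two-taxa lower bounds on \emph{both} $A$ and $C$ are indispensable, since without the second dense side the compensating reattachment need not exist. The resulting forest has the same size as $F$, hence is maximum, and has $\{a,b\}$ and $\{c,d\}$ preserved but in distinct components, giving condition (iii) and thus eligibility for Operation P.
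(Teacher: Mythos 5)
Your proof of the two structural conditions of Operation P is fine, and your use of the CPT to obtain a maximum agreement forest preserving both $\{a,b\}$ and $\{c,d\}$ matches the paper's opening move. But the entire content of the theorem is the remaining step --- producing a \emph{same-size} agreement forest in which $\{a,b\}$ and $\{c,d\}$ lie in distinct components --- and this is precisely the step you leave as an acknowledged sketch (``the hard part'', ``the step I expect to consume most of the work''). As written, the proposal does not prove the theorem.

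Moreover, the sketch you do give points in the wrong direction on three counts. First, your claim that vertex-disjointness forces every taxon of the traversed side $A$ into the block $B_i$ containing $\{a,b,c,d\}$ is false: a taxon of $A$ may be (and in the relevant cases must be) a singleton of $F$, since the embedding of a singleton is just the leaf itself and does not contain its parent; indeed $T|B_i=T'|B_i$ forces $B_i=\{a,b,c,d\}$ exactly, with the taxa $e,f$ of $A$ as singletons. Second, you misread the role of the hypothesis: the reason it demands that \emph{both} sides at $u$ (or both at $v$) be dense is only to guarantee that whichever of $A,C$ (resp.\ $B,D$) the image of $T'$ actually traverses between the two cherries is dense; the unused side $C$ plays no role whatsoever in the exchange, so ``reattach the $\{c,d\}$-part along $C$'' is not what happens (and is geometrically off, since after cutting $\{p_b,p_c\}$ in $T$ the $\{c,d\}$-part sits near $v$, not near $u$ where $C$ attaches). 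Third, the actual argument is a case analysis on the traversed dense side, say $A$: if it carries at least three taxa, Observation~\ref{obs:chainseverywhere} yields a CPT-eligible chain $Z$ on $A$, and preserving $\{a,b\}$, $\{c,d\}$ and $Z$ simultaneously already contradicts $\{a,b,c,d\}$ being preserved; if it carries exactly two taxa $e,f$, one distinguishes the breakpoint patterns $|ef$ or $ef|$ (again a CPT-eligible $2$-chain), $ef$ (form $\{a,b,e,f\}$ and split one other block $B'$ that uses $\{p_e,p_f\}$ to restore the component count), and $e|f$ (form $\{a,b,e\}$ directly). None of this appears in your proposal, so the gap is genuine.
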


\begin{proof}
Assume without loss of generality that $T'$ has cherries $\{a,b\}$ and $\{c,d\}$ and that $T$ has a non-pendant chain $(a,b,c,d)$ as illustrated in Figure \ref{fig:22simple}. Then $(a,b)$ and $(c,d)$ are two common 2-chains of $T$ and $T'$. Both of these chains are pendant in $T'$ and therefore CPT eligible. 
For each $Y\in\{A,B,C,D\}$, let $P_Y$ be the path associated with $Y$ in $N$.
Now, consider an image $I$ of $T'$ in $N$. 
Let $P$ be the path from $a$ to $c$ in $I$.
Since $S$ is a 1-breakpoint side, one of $P_A$ and $P_C$ is a subpath of $P$ and, similarly, one of  $P_B$ and $P_D$ is a subpath of $P$. We assume without loss of generality that $P_A$ and $P_B$ are both subpaths of $P$. It follows that $T'$ has no breakpoint on $A$ or $B$ and, hence each of $A$ and $B$ has at most one breakpoint relative to $T$. Moreover, by the assumption in the statement of the theorem, at least one of $A$ and $B$ is decorated with at least two taxa in the process of obtaining $N$ from $G$. 

Suppose that $A$ is is decorated with at least three taxa. Since $A$ has at most one breakpoint,  it follows from Observation \ref{obs:chainseverywhere} that $T$ and $T'$ have a CPT-eligible chain $Z$ whose elements are attached to $A$ in obtaining $N$ from $G$. Let $K=\{\{a,b\},\{c,d\},Z\}$.  By applying the CPT to $K$, there exists a maximum agreement forest $F$ for $T$ and $T'$ such that each element in $K$ is preserved in $F$. Assume that there exists an element $B$ in $F$ such that $\{a,b,c,d\}\subseteq B$. Let $B'$ be the element in $F$ such that $Z\subseteq B'$. If $B\ne B'$, then $T'[B]$ and $T'[B']$ are not vertex disjoint, a contradiction. Hence $B=B'$. But then $T|(\{a,b,c,d\}\cup Z\})\ne T'|(\{a,b,c,d\}\cup Z\})$, another contradiction. It follows that $B$ does not exist and $\{a,b,c,d\}$ is eligible for Operation P.  An identical analysis holds for when $B$ is decorated with at least three taxa. 

We can now assume that neither $A$ nor $B$ is decorated with at least three taxa. Then, by the statement of the theorem,  $A$ or $B$ is decorated with exactly two taxa. We establish the theorem for when $A$ is decorated with exactly two taxa. An analogous and symmetric argument holds for when $B$ is decorated with exactly two taxa. Let $e$ and $f$ be the two taxa that are attached to $A$ in obtaining $N$ from $G$ such that the path from $p_e$ to $p_a$ in $T'$ does not pass through $p_f$. Intuitively, $e$ is closer than $f$ to $a$ in $N$. If $A=|ef$ or $A=ef|$, then $T$ and $T'$ have a common 2-chain $(e,f)$ that is pendant in $T$. By applying an argument that is similar to that of the last paragraph and setting $Z=\{e,f\}$, we deduce that $\{a,b,c,d\}$ is eligible for Operation P.  Hence $A=ef$ or $A=e|f$. Let $F$ be a maximum agreement forest for $T$ and $T'$. Assume that there exists an element $B$ in $F$ such that $\{a,b,c,d\}\subseteq B$. Since $T|B=T'|B$, we have $B=\{a,b,c,d\}$. Furthermore, each of $e$ and $f$ is a singleton in $F$. We now consider two cases, depending on whether $A$ has one or zero breakpoints, and show that there exists another maximum agreement forest for $T$ and $T'$ that has the desired properties such that $\{a,b,c,d\}$ is eligible for Operation P.

First, suppose that $A=ef$.  Let $$F'=(F\setminus \{B,\{e\},\{f\}\})\cup \{\{a,b,e,f\},\{c,d\}\}$$ be a forest. Noting that $|F'|< |F|$, it follows by the maximality of $F$ that $F'$ is not an agreement forest for $T$ and $T'$. Hence, by construction of $F'$, there exists an element $B'$ in $F'\setminus \{\{a,b,e,f\}\}$ such that $T[B']$ uses the edge $\{p_e,p_f\}$ in $T$.  Let $B_1',B_2'$ be a bipartition of $B'$ such that neither $T[B_1']$ nor $T[B_2']$ uses the edge $\{p_e,p_f\}$ in $T$. As $B'$ is also an element of $F$, it now follows that  $$F''=(F\setminus \{B,B',\{e\},\{f\}\})\cup \{\{a,b,e,f\},\{c,d\},B_1',B_2'\}$$ is an agreement forest for $T$ and $T'$ with $|F|=|F''|$ and in which $\{a,b\}$ and $\{c,d\}$ are both preserved, and $\{a,b,c,d\}$ is not preserved. Hence, $\{a,b,c,d\}$ is eligible for  Operation P. 

Second, suppose that $A=e|f$. 
Observe that $(e,a,b,c,d)$ is a chain of $T$. Let $$F'=(F\setminus \{B,\{e\}\})\cup \{\{a,b,e\},\{c,d\}\}$$ be a forest. Since there exists no element in $F\setminus \{B\}$ whose embedding in $T$ uses $p_e$, $F'$ is an agreement forest for $T$ and $T'$. As $|F'|=|F|$ it now follows again that $\{a,b,c,d\}$ is eligible for Operation P. 
\qed
\end{proof}

\begin{theorem}
\label{thm:22multiedge}
Let $T$ and  $T'$ be two phylogenetic trees on $X$ that cannot be reduced under Reductions 1--7. Let $G$ be the generator underlying a phylogenetic network $N$ on $X$ that displays $T$ and $T'$ such that $r(N)=d_\TBR(T,T')$. Furthermore, let $S =ab|cd$ be a side of $G$ that is part of a multi-edge $\{u,v\}$. Let $A$, $B$, and $M$ be the three sides incident with $S$ such that $M$ is incident with $u$ and $v$, $A$ is only incident with $u$, and $B$ is only incident with $v$. Then $\{a,b,c,d\}$ is eligible for Operation P if each of the following conditions hold in obtaining $N$ from $G$:
\begin{enumerate}
\item one of $A$ and $B$ is decorated with at least two taxa; and 
\item $M$ is decorated with at least one taxon.
\end{enumerate} 
\end{theorem}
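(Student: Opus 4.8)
The plan is to mirror the structure of the proof of Theorem~\ref{thm:22sparse} (the simple-edge case), adapting it to the topology of a multi-edge. As before, I would assume without loss of generality that $T'$ has cherries $\{a,b\}$ and $\{c,d\}$ while $T$ has a non-pendant chain $(a,b,c,d)$, so that $(a,b)$ and $(c,d)$ are two common 2-chains that are both pendant in $T'$ and hence CPT-eligible. The goal in every case is to exhibit a maximum agreement forest $F$ in which $\{a,b\}$ and $\{c,d\}$ are each preserved but $\{a,b,c,d\}$ is \emph{not} preserved, which is exactly the definition of eligibility for Operation~P.

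**Exploiting the multi-edge structure.** The key new ingredient is condition~(2): $M$ is decorated with at least one taxon. First I would fix an image $I$ of $T'$ in $N$ and consider the path $P$ realizing the chain $(a,b,c,d)$ in $T'$. Because $S$ is part of the multi-edge $\{u,v\}$, the path $P$ must enter and leave the side $S$ through the vertices $u$ and $v$, and the parallel side $M$ connects these same two vertices. This is the structural feature that forces the situation to be favorable: the breakpoint on $S$ (relative to $T$) sits `inside' a 2-cycle, and any leaf attached to $M$ gives a common chain that interacts with $\{a,b,c,d\}$ in both trees. The aim is to locate, via Observation~\ref{obs:chainseverywhere} applied to the richly-decorated side among $\{A,B\}$ (and, where needed, the single guaranteed leaf on $M$), a CPT-eligible chain $Z$ that is `topologically linked' to $\{a,b,c,d\}$ in the sense that $Z$ cannot lie in the same agreement-forest component as $\{a,b,c,d\}$ without violating vertex-disjointness or the $T|B = T'|B$ condition.

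**Case analysis.** I would organize the argument by how heavily $A$ (or symmetrically $B$) is decorated. If the side satisfying condition~(1) carries at least three taxa, then by Observation~\ref{obs:chainseverywhere} it yields a CPT-eligible chain $Z$, and I would run the contradiction argument verbatim from Theorem~\ref{thm:22sparse}: setting $K = \{\{a,b\},\{c,d\},Z\}$ and invoking the CPT, any component $B \supseteq \{a,b,c,d\}$ would have to coincide with the component containing $Z$ (by vertex-disjointness) and then $T|(\{a,b,c,d\}\cup Z) \neq T'|(\{a,b,c,d\}\cup Z)$ gives the contradiction. The remaining case is when the relevant side carries exactly two taxa, say $e,f$ attached to $A$; here the single taxon guaranteed on $M$ by condition~(2) must be used to supply the `extra' structure, so I would combine the two-taxa analysis of Theorem~\ref{thm:22sparse} (distinguishing $A = ef$ from $A = e|f$, and the pendant subcases $A = |ef$, $ef|$) with a chain on $M$ to build the swap $F'' = (F \setminus \{\ldots\}) \cup \{\ldots\}$ that preserves forest size while breaking the block $\{a,b,c,d\}$ apart.

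**Main obstacle.** The hard part will be the interaction between the taxon on $M$ and the breakpoints: unlike the simple-edge case, the multi-edge creates a short cycle in $N$, so I must verify carefully that the image $I$ of $T'$ routes $(a,b,c,d)$ through $u$ and $v$ in a way that makes the $M$-leaf a genuine obstruction rather than something absorbable into the $\{a,b,c,d\}$ block. Concretely, I expect the delicate step to be confirming that when only two taxa decorate $A$ (and $B$ is sparse), the leaf on $M$ together with $e,f$ forms the chain $Z$ whose embedding forces a distinct component, so that the CPT contradiction still closes; ruling out the possibility that $M$'s leaf could itself be merged with $\{a,b,c,d\}$ in both trees is where the multi-edge topology (and the fact that $M$ joins the \emph{same} endpoints $u,v$) must be used explicitly.
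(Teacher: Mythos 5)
Your proposal has the right ingredients (CPT, Observation~\ref{obs:chainseverywhere}, the forest surgeries from Theorem~\ref{thm:22sparse}), but it misses the organizing dichotomy that the paper's proof is built on, and as a result your main case split would not close. The crucial point is that the path $P$ from $a$ to $c$ in the image of $T'$ must travel from $u$ to $v$ while avoiding the breakpoint on $S$, and it can do so in exactly one of two ways: either $P_A$ and $P_B$ are both subpaths of $P$ (the path leaves through $A$ and returns through $B$), or $P_M$ is a subpath of $P$ (the path uses the parallel edge). The two hypotheses of the theorem are not a primary resource plus a supplement, as your proposal treats them; each serves exactly one branch of this dichotomy. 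Condition (1) handles the first branch, where the argument of Theorem~\ref{thm:22sparse} applies verbatim; condition (2) handles the second branch, where only the taxa on $M$ are relevant.

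Concretely, your plan to organize the cases by how heavily $A$ is decorated, and to run the contradiction argument ``verbatim from Theorem~\ref{thm:22sparse}'' whenever $A$ carries at least three taxa, fails when the image of $T'$ routes the $a$--$c$ path through $M$: in that situation a CPT-eligible chain $Z$ living on $A$ need not lie on the path from $p_a$ to $p_c$ in $T'$, so a component containing $\{a,b,c,d\}$ and a separate component containing $Z$ can be vertex-disjoint in both trees, and no contradiction arises. In the $P_M$ branch the paper instead analyzes the set $Z$ of taxa attached to $M$ directly (nonempty by condition (2)): if $Z$ is CPT-eligible the usual CPT argument applies; otherwise $1 \leq |Z| \leq 2$ and explicit forest surgeries, using that $(e,a,b,c,d)$, $(a,b,c,d,e)$ or $(e,a,b,c,d,f)$ is a chain of $T$, produce a maximum agreement forest in which $\{a,b,c,d\}$ is not preserved. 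You sense in your final paragraph that the routing through $u$ and $v$ is the delicate point, but without making the $A,B$-versus-$M$ dichotomy explicit, the case analysis you describe does not cover the theorem.
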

\begin{proof}
Assume without loss of generality that $T'$ has cherries $\{a,b\}$ and $\{c,d\}$ and that $T$ has a non-pendant chain $(a,b,c,d)$ as illustrated in Figure \ref{fig:22multiedge}. Then $(a,b)$ and $(c,d)$ are two common 2-chains of $T$ and $T'$. Both of these chains are pendant in $T'$ and therefore CPT eligible. For each $Y\in\{A,B,M\}$, let $P_Y$ be the path associated with $Y$ in $N$. Now, consider an image $I$ of $T'$ in $N$. 
Let $P$ be the path from $a$ to $c$ in $I$.
Then either $P_A$ and $P_B$ are subpaths of $P$, or $P_M$ is a subpath of $P$. If $P_A$ and $P_B$ are subpaths of $P$ then, since the first condition in the statement of the theorem is satisfied, we can apply the same argument as in the proof of Theorem \ref{thm:22sparse} to establish that $\{a,b,c,d\}$ is eligible for Operation P. We may therefore assume that $P_M$ is a subpath of $P$. Since $T$ does not contain a cycle and the breakpoint on $S$ is relative to $T'$, it follows that $M$ has a single breakpoint that is relative to $T$.
\begin{figure}[t]
\center
\scalebox{1}{\input{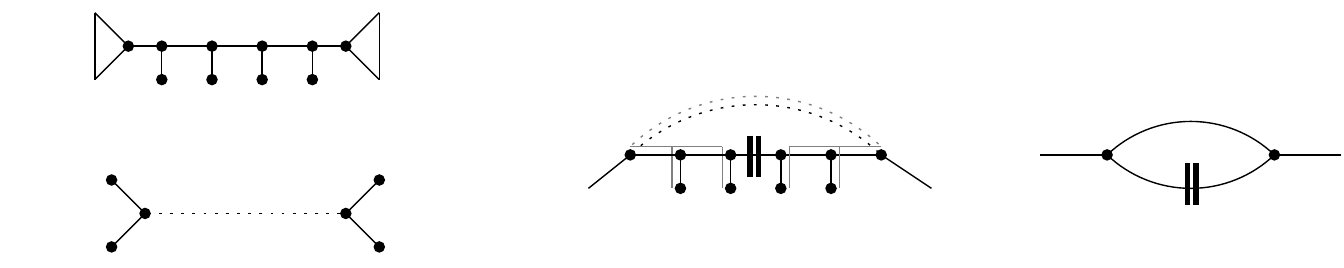_t}}
\caption{The situation described in Theorem~\ref{thm:22multiedge}, which concerns $2|2$ sides $S=ab|cd$, where $S$ is  part of a multi-edge. The \revision{gray} edges in $N$ indicate an image of $T'$ that passes through the path in $N$ that is associated with $M$.
}
\label{fig:22multiedge}
\end{figure}

Let $Z$ be the set of taxa that is attached to $M$ in obtaining $N$ from $G$. First, assume that $Z$ is CPT eligible. Then $|Z|\geq 2$, and there exists a maximum agreement forest $F$ for $T$ and $T'$ that preserves each element in $\{\{a,b\},\{c,d\},Z\}$. If there exists an element $B$ in $F$ such that $\{a,b,c,d\}\subseteq B$, then $B$ also contains $Z$ since, otherwise, $T'[B]$ and $T'[B']$ are not vertex disjoint, where $B'$ is the element in $F\setminus \{B\}$ such that $Z\subseteq B'$. Hence $B=B'$, thereby implying that $T[B]\ne T'[B]$, a contradiction. It now follows that $\{a,b,c,d\}$ is not a subset of any element in $F$ and, thus, $\{a,b,c,d\}$ is eligible for Operation P. Second, assume that $Z$ is not CPT eligible. Since the second condition in the statement of the theorem is satisfied, it follows from the fact that $M$ is a 1-breakpoint side and from the contrapositive of Observation \ref{obs:chainseverywhere} that $1\leq |Z|\leq 2$. Let $F$ be a maximum agreement forest that preserves $\{a,b\}$ and $\{c,d\}$. Again assume that there exists an element $B$ in $F$ such that $\{a,b,c,d\}\subseteq B$.  We next consider two cases. 

First suppose that $Z=\{e,f\}$.  Since $Z$ is not CPT eligible, it follows that $M=e|f$. Without loss of generality, we assume that the path from $p_a$ to $p_e$ in $T'$ does not pass through $p_f$. Since $T|B=T'|B$, it follows that $\{e\}$ and $\{f\}$ are  elements in $F$. Observe that $(e,a,b,c,d,f)$ is a chain of $T$.
Now, let $$F'=(F\setminus \{\{B,\{e\},\{f\}\})\cup \{\{a,b,e\},\{c,d,f\}\}.$$ As $F$ is an agreement forest for $T$ and $T'$ and each edge of $P$ is used by $T'[B]$, $F'$ is such a forest as well, contradicting the minimality of $F$. Hence $B$ does not exist in $F$ and $\{a,b,c,d\}$ is therefore eligible for Operation P.

Second suppose that $Z=\{e\}$. Then $M=|e$ or $M=e|$. Since $T|B=T'|B$, it follows that $\{e\}$ is an element in $F$. Observe that either $(e,a,b,c,d)$ or $(a,b,c,d,e)$ is a chain of $T$.
Now, if $(e,a,b,c,d)$  is a chain in $T$, let $$F'=(F\setminus \{B,\{e\}\})\cup \{\{a,b,e\},\{c,d\}\}$$ and, if $(a,b,c,d,e)$  is a chain in $T$, let $$F'=(F\setminus \{B,\{e\}\})\cup \{\{a,b\},\{c,d,e\}\}.$$ As $F$ is a maximum agreement forest for $T$ and $T'$, it follows that, regardless of which case applies, $F'$ is also such a forest. Thus, $\{a,b,c,d\}$ is eligible for Operation P. \qed 
\end{proof}

\subsection{$2|1|1$ sides}   

\begin{theorem}
\label{thm:211sparse}
Let $T$ and  $T'$ be two phylogenetic trees on $X$ that cannot be reduced under Reductions 1--7. Let $G$ be the generator underlying a phylogenetic network $N$ on $X$ that displays $T$ and $T'$ such that $r(N)=d_\TBR(T,T')$. Furthermore, let $S =ab|c|d$ be a side of $G$ that is a simple edge $\{u,v\}$. Let $A$, $B$, $C$, and $D$ be the four sides incident with $S$ such that $A$ and $C$ are both incident with $u$, and $B$ and $D$ are both incident with $v$. If each of $A$ and $C$ is decorated with at least two taxa, or each of $B$ and $D$ is decorated with at least two taxa in obtaining $N$ from $G$, then $\{a,b,c,d\}$ is eligible for Reduction 10.
\end{theorem}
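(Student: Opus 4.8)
The plan is to mirror the proof of Theorem~\ref{thm:22sparse}, but towards the stronger goal of exhibiting a maximum agreement forest in which $\{c\}$ is an isolated singleton component, rather than merely showing that a quadruple fails to be preserved. First I would unpack the hypothesis $S=ab|c|d$. Reading $T$ off the side means deleting the edge $\{p_b,p_c\}$ of $P_S$, which produces the two cherries $\{a,b\}$ and $\{c,d\}$; reading $T'$ off the side means deleting $\{p_c,p_d\}$, which produces the $3$-chain $(a,b,c)$ with cherry $\{b,c\}$. This already supplies the first two conditions needed for eligibility for Reduction~10, so the entire content of the theorem is the existence of a maximum agreement forest $F$ with $\{c\}\in F$. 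I would also record the one CPT-eligible obstruction intrinsic to the side, namely the common $2$-chain $(a,b)$, which is pendant in $T$.

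Next I would pin down a dense, breakpoint-free adjacent side. Taking an image $I$ of $T'$ in $N$ and the path $P$ from $a$ to $d$ in $I$, the $T'$-breakpoint at $\{p_c,p_d\}$ forces $P$ to leave $u$ through one of $A,C$ and to enter $v$ through one of $B,D$; the two sides so used have their associated paths as subpaths of $P$, hence carry no $T'$-breakpoint and therefore at most one breakpoint, relative to $T$. Since the hypothesis guarantees that either $A,C$ or $B,D$ are both decorated with at least two taxa, in either case the relevant side $Y$ at that end of $S$ is decorated with at least two taxa and has no $T'$-breakpoint. This is exactly the normalisation carried out at the start of the proof of Theorem~\ref{thm:22sparse}.

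I would then split on the decoration of $Y$. If $Y$ carries at least three taxa, Observation~\ref{obs:chainseverywhere} supplies a CPT-eligible chain $Z$ on $Y$, and applying the CPT (Theorem~\ref{thm:allchainsintact}) to $\{(a,b),Z\}$ yields a maximum agreement forest with $a,b$ in one block and $Z$ in another. I would argue that any block containing $c$ together with a taxon lying beyond the branch vertex of $S$ nearest to $Y$ must route through the associated path of $Y$ and thus collide, in the corresponding tree, with the block of $Z$, violating vertex-disjointness; this confines $c$ to a block local to $S$. If $Y$ carries exactly two taxa $e,f$, I would argue on the breakpoint patterns, ruling out the pendant patterns $|ef$ and $ef|$ (which again furnish a CPT-eligible $2$-chain and reduce to the previous case) and treating $ef$ and $e|f$ directly. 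In each case I would perform forest surgery: starting from a maximum agreement forest in which $c$ sits in one of the few possible local blocks, namely $\{b,c\}$, $\{c,d\}$, $\{a,b,c\}$, or $\{b,c,d\}$, I would split $c$ off into its own component and pay for the extra component by absorbing the now-free taxa of $Y$ into an adjacent block, exactly as $e,f$ are absorbed in the proof of Theorem~\ref{thm:22sparse}, so that the forest size is preserved.

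The main obstacle, and the point where this differs from Theorem~\ref{thm:22sparse}, is the asymmetry of $c$: it is cherried with $d$ in $T$ but with $b$ in $T'$, so the two hypotheses ($A,C$ dense versus $B,D$ dense) trap $c$ from opposite ends and are \emph{not} interchangeable by symmetry, meaning the two geometric regimes likely need separate treatment. I expect the bookkeeping of the surgery to be the delicate part: I must check that after isolating $c$ and re-routing through $Y$ the restriction condition $S|B=S'|B$ and the pairwise vertex-disjointness are restored for every affected block, and that the compensating merge on $Y$ is always available — which is where breakpoint-freeness of $Y$ and the standing assumption $d_{\TBR}\geq 3$ (excluding the degenerate two-vertex generator) are used. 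The small local blocks $\{a,b,c\}$ and $\{b,c,d\}$, whose three-leaf restrictions are automatically consistent, are the cases most likely to demand a dedicated sub-argument.
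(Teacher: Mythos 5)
Your overall strategy is the paper's: use the CPT on the pendant $2$-chain $(a,b)$ to force $c$ into either a singleton or the block $\{a,b,c\}$, then exploit a dense, breakpoint-free side adjacent to $S$ either to derive a contradiction via a second CPT-eligible chain or to perform size-preserving forest surgery that isolates $\{c\}$. But there is a concrete error in how you locate that dense side. You trace an image of $T'$ and the path from $a$ to $d$, obtaining sides with no $T'$-breakpoint (so any breakpoint on them is relative to $T$). The paper instead traces an image of $T$ (the tree with cherries $\{a,b\}$ and $\{c,d\}$) and the path from $a$ to $c$, obtaining sides with no $T$-breakpoint. This is not a cosmetic difference: the only block that has to be destroyed is $B=\{a,b,c\}$, and its long embedding lives in $T$; indeed $T[B]$ is precisely the $a$-to-$c$ path that exits $u$ through one of $A,C$ and re-enters through one of $B,D$ \emph{in the image of $T$}. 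In $T'$ the block $\{a,b,c\}$ embeds locally as the pendant $3$-chain and never reaches the adjacent sides at all. Your side $Y$, being chosen on the $a$-to-$d$ path of the image of $T'$, can be a different side from the one traversed by $T[\{a,b,c\}]$ (for instance, $A$ is $T'$-breakpoint-free and carries your chain $Z$, while the $T$-path from $a$ to $c$ leaves $u$ through $C$). In that case a CPT-eligible chain $Z$ on $Y$ produces no collision with $T[\{a,b,c\}]$ in $T$, and none in $T'$ either, so your confinement step fails. Likewise, in the two-taxon surgery cases the breakpoint on $Y$ now sits in the wrong tree: with $Y=e|f$ broken relative to $T$ rather than $T'$, a block such as $\{f,x\}$ with $x$ on the other side incident with $u$ can legitimately route through $p_e$ in $T'$ and collide with your new block $\{a,b,e\}$, a conflict your surgery does not repair (the paper's surgery is immune to this precisely because its breakpoint is relative to $T'$, so $f$ is severed from $p_e$ in $T'$).

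A secondary point: once $\{a,b\}$ is preserved by the CPT, the only candidates for $c$'s block are $\{c\}$ and exactly $\{a,b,c\}$ (any other block containing $c$ would use $p_c=p_b$ in $T'$ and collide with the block containing $\{a,b\}$, and $\{a,b,c\}\cup\{x\}$ already violates the restriction condition). Your list $\{b,c\},\{c,d\},\{b,c,d\}$ is incompatible with that preservation and suggests the narrowing step was not carried out. The repair is to re-anchor the whole construction on the image of $T$ and the $a$-to-$c$ path; after that, your case split ($\geq 3$ taxa versus exactly two, patterns $ef$ and $e|f$ with the breakpoint relative to $T'$, and the asymmetric treatment of the $u$ and $v$ ends) matches the paper's argument.
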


\begin{proof}
Assume without loss of generality that $T$ has cherries $\{a,b\}$ and $\{c,d\}$ and that $T'$ has a pendant 3-chain $(a,b,c)$ with cherry $\{b,c\}$ as illustrated in Figure \ref{fig:211simple}. Then the 2-chain $(a,b)$ is CPT eligible. Let $F$ be a maximum agreement forest for $T$ and $T'$ such that $\{a,b\}$ is preserved in $F$. Let $B$ be the element in $F$ with $\{a,b\}\subseteq F$. Then either $\{c\}\in F$ or $\{a,b,c\}\subseteq B$. Assume that the latter holds. Then, as $T|B=T|B'$, we have $B=\{a,b,c\}$. We freely use this observation throughout the rest of the proof.

Now, for each $Y\in\{A,B,C,D\}$, let $P_Y$ be the path associated with $Y$ in $N$.
Furthermore, let $I$ be an image of $T$ in $N$, and let $P$ be the path from $a$ to $c$ in $I$.  Since $S$ is a 2-breakpoint side, either $P_A$ or $P_C$ is a subpath of $P$ and, similarly, one of  $P_B$ or $P_D$ is a subpath of $P$. We assume without loss of generality that $P_A$ and $P_B$ are both subpaths of $P$. If follows that $T$ has no breakpoint on $A$ or $B$ and, hence each of $A$ and $B$ has at most one breakpoint relative to $T$.  

Suppose that $A$ is  a 1-breakpoint side $A=|ef$ or $A=ef|$ that is decorated with exactly two taxa $e$ and $f$, or that $A$ is decorated with at least three taxa. Since $A$ has at most one breakpoint, it follows from Observation~\ref{obs:chainseverywhere} that $T$ and $T'$ have a CPT-eligible chain $Z$ whose elements are attached to $A$ in obtaining $N$ from $G$. Let $K=\{\{a,b\},Z\}$.  By applying the CPT to $K$, there exists a maximum agreement forest $F'$ for $T$ and $T'$ such that each element in $K$ is preserved in $F'$. Let $B$ and $B'$ be the elements of $F'$ such that $\{a,b\}\subseteq B$ and $Z\subseteq B'$. Assume that $\{c\}\notin F'$. Then, by the observation in the first paragraph of the proof, we have $B=\{a,b,c\}$. Since $T[B]$ and $T[B']$ are vertex disjoint, it follows that $B=B'$. In turn, this implies that $T|(\{a,b,c\}\cup Z) \ne T'|(\{a,b,c\}\cup Z)$, thereby contradicting that $F'$ is an agreement forest for $T$ and $T'$. Hence $\{c\}\in F'$ and, so, $\{a,b,c,d\}$ is eligible for Reduction 10. An identical analysis holds for when $B$ is decorated with at least three taxa. Hence, one of $A$ and $B$ is decorated with exactly two taxa. 

Now, reconsider $F$. If $\{c\}\in F$, then $\{a,b,c,d\}$ is clearly eligible for Reduction 10. We may therefore assume that $B=\{a,b,c\}$ and, consequently, $\{d\}\in F$. We next distinguish two cases and show that there always exists another maximum agreement forest for $T$ and $T'$ that has the desired property such that $\{a,b,c,d\}$ is eligible for Reduction 10. 
\begin{enumerate}
\item [(1)] Suppose that $A$ is decorated with exactly two taxa $e$ and $f$ such that the path from $p_e$ to $p_a$ in $T$ does not pass through $p_f$. By the definition of an agreement forest, it follows that $\{e\}$ and $\{f\}$ are elements of $F$.  Recall that $A$ has at most one breakpoint and that this breakpoint is, if it exists, relative to $T'$.
Hence $A$ is either a 0-breakpoint side $A=ef$ or a 1-breakpoint side $A=e|f$. First, if $A=ef$, let $$F'=(F\setminus \{B,\{e\},\{f\}\})\cup \{\{a,b,e,f\},\{c\}\}$$ be a forest. Noting that $|F'|< |F|$, it follows by the maximality of $F$ that $F'$ is not an agreement forest for $T$ and $T'$. Hence, by construction of $F'$, there exists an element $B'$ in $F'\setminus \{\{a,b,e,f\}\}$ such that $T'[B']$ uses the edge $\{p_e,p_f\}$ in $T'$.  Let $B_1',B_2'$ be a bipartition of $B'$ such that neither $T'[B_1']$ nor $T'[B_2']$ uses the edge $\{p_e,p_f\}$ in $T'$. As $B'$ is also an element of $F\setminus \{B,\{e\},\{f\}\}$, it now follows that  $$F''=(F\setminus \{B,B',\{e\},\{f\}\})\cup \{\{a,b,e,f\},\{c\},B_1',B_2'\}$$ is another maximum agreement forest for $T$ and $T'$ in which $\{c\}$ is a singleton. Hence, $\{a,b,c,d\}$ is eligible for Reduction 10. Second, if $A=e|f$, let $$F'=(F\setminus \{B,\{e\}\})\cup \{\{a,b,e\},\{c\}\}$$ be a forest. Since there exists no element in $F\setminus \{B\}$ whose embedding in $T'$ uses $p_e$, $F'$ is another maximum agreement forest for $T$ and $T'$. It follows again that $\{a,b,c,d\}$ is eligible for Reduction 10. 

\item [(2)] Suppose that $B$ is decorated with exactly two taxa $e$ and $f$ such that the path from $p_e$ to $p_d$ in $T$ does not pass through $p_f$. As in Case (1), $\{e\}$ and $\{f\}$ are elements of $F$. 
Moreover, $B$ is either a 0-breakpoint side $B=ef$ or a 1-breakpoint side $B=e|f$, where the breakpoint is relative to $T'$. First, if $B=ef$, let $$F'=(F\setminus \{B,\{d\},\{e\},\{f\}\})\cup \{\{a,b\},\{c\},\{d,e,f\}\}$$ be a forest. Noting that $|F'|< |F|$, it follows by the maximality of $F$ that $F'$ is not an agreement forest for $T$ and $T'$. Hence, by construction of $F'$, there exists an element $B'$ in $F'\setminus \{\{d,e,f\}\}$ such that $T'[B']$ uses the edge $\{p_e,p_f\}$ in $T'$.  Let $B_1',B_2'$ be a bipartition of $B'$ such that neither $T'[B_1']$ nor $T'[B_2']$ uses the edge $\{p_e,p_f\}$ in $T'$. As $B'$ is also an element of $F\setminus \{B,\{d\},\{e\},\{f\}\}$, it now follows that  $$F''=(F\setminus \{B,B',\{d\},\{e\},\{f\}\})\cup \{\{a,b\},\{c\},\{d,e,f\},B_1',B_2'\}$$ is another maximum agreement forest for $T$ and $T'$ in which $\{c\}$ is a singleton. Hence, $\{a,b,c,d\}$ is eligible for Reduction 10. Second, if $B=e|f$, let $$F'=(F\setminus \{B,\{d\},\{e\}\})\cup \{\{a,b\},\{c\},\{d,e\}\}.$$  Since there exists no element in $F\setminus \{B\}$ whose embedding in $T'$ uses $p_e$, $F'$ is another maximum agreement forest for $T$ and $T'$. Thus $\{a,b,c,d\}$ is eligible for Reduction 10. \qed
\end{enumerate}
\end{proof}

\begin{figure}[t]
\center
\scalebox{1}{\input{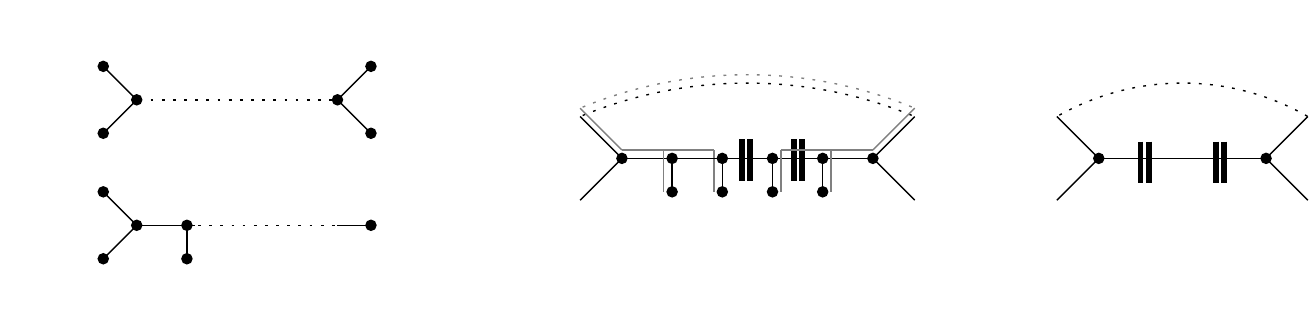_t}}
\caption{The situation described in Theorem \ref{thm:211sparse}, which concerns $2|1|1$ sides $S=ab|c|d$, where $S$ is not part of a multi-edge. The \revision{gray} edges in $N$ indicate an image of $T$.
}
\label{fig:211simple}
\end{figure}

\begin{theorem}
\label{thm:211multiedge}
Let $T$ and  $T'$ be two phylogenetic trees on $X$ that cannot be reduced under Reductions 1--7. Let $G$ be the generator underlying a phylogenetic network $N$ on $X$ that displays $T$ and $T'$ such that $r(N)=d_\TBR(T,T')$. Furthermore, let $S =ab|c|d$ be a side of $G$ that is part of a multi-edge $\{u,v\}$. Let $A$, $B$, and $M$ be the three sides incident with $S$ such that $M$ is incident with $u$ and $v$, $A$ is only incident with $u$, and $B$ is only incident with $v$. Then $\{a,b,c,d\}$ is eligible for Reduction 10 if each of the following conditions hold in obtaining $N$ from $G$:
\begin{enumerate}
\item one of $A$ and $B$ is decorated with at least two taxa; and
\item $M$ is decorated with at least one taxon.
\end{enumerate} 
\end{theorem}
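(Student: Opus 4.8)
The plan is to mirror the structure of the proof of Theorem~\ref{thm:211sparse} (the simple-edge case) while importing the multi-edge case analysis from the proof of Theorem~\ref{thm:22multiedge}. As in those proofs, we may assume without loss of generality that $T$ has cherries $\{a,b\}$ and $\{c,d\}$ and that $T'$ has a pendant $3$-chain $(a,b,c)$ with cherry $\{b,c\}$, so that the first two requirements of eligibility for Reduction 10 hold automatically; it remains only to produce a maximum agreement forest $F$ with $\{c\}\in F$. The $2$-chain $(a,b)$ is pendant in $T$ and hence CPT-eligible, so we may fix a maximum agreement forest preserving $\{a,b\}$, and we repeatedly use the key observation (as in the first paragraph of the proof of Theorem~\ref{thm:211sparse}) that, for the block $B\supseteq\{a,b\}$, either $\{c\}\in F$ already, or $B=\{a,b,c\}$.

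First I would establish the case dichotomy. Take an image $I$ of $T$ in $N$ and let $P$ be the path from $a$ to $c$ in $I$. Because $T$ has the breakpoint between $p_b$ and $p_c$ on $S$, the path $P$ cannot run directly along $S$; since $S$ is one edge of the multi-edge $\{u,v\}$ and the only sides leaving $u$ and $v$ are $M,A$ (at $u$) and $M,B$ (at $v$), the path $P$ must either traverse the parallel side $M$ (so $P_M$ is a subpath of $P$) or leave $u$ through $A$ and re-enter $v$ through $B$ (so both $P_A$ and $P_B$ are subpaths of $P$). This is exactly the dichotomy used in Theorem~\ref{thm:22multiedge}.

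In the first case ($P_A$ and $P_B$ subpaths of $P$), $T$ has no breakpoint on $A$ or $B$, so each of $A$ and $B$ carries at most one breakpoint relative to $T'$; invoking condition~(1), one of $A,B$ is decorated with at least two taxa, and the argument of Theorem~\ref{thm:211sparse} applies verbatim to conclude that $\{a,b,c,d\}$ is eligible for Reduction 10. In the second case ($P_M$ a subpath of $P$), $T$ uses all of $P_M$, so $M$ has at most one breakpoint, relative to $T'$, and by condition~(2) the set $Z$ of taxa decorating $M$ is nonempty. If $Z$ is CPT-eligible, I would apply the CPT to $\{\{a,b\},Z\}$ to obtain a maximum agreement forest $F$ preserving both; should $\{c\}\notin F$, the observation forces $B=\{a,b,c\}$, but $T[\{a,b,c\}]$ uses every internal vertex of $P_M$ and hence meets the attachment vertices of $Z$, so the block of $Z$ would not be vertex-disjoint from $B$ in $T$, a contradiction. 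Hence $\{c\}\in F$. If $Z$ is not CPT-eligible, then by the contrapositive of Observation~\ref{obs:chainseverywhere} we have $|Z|\le 2$, with $M=e|f$ when $|Z|=2$ and a single taxon otherwise; in each subcase $e$ (and $f$) lie on the $a$-to-$c$ chain through $M$, and I would reshuffle $F$ exactly as in the corresponding subcases of Theorems~\ref{thm:211sparse} and~\ref{thm:22multiedge}, relocating the $M$-taxa into the block containing $\{a,b\}$ and/or the singleton block $\{d\}$ according to their orientation along $M$, to obtain a maximum agreement forest of the same size in which $\{c\}$ is a singleton.

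The main obstacle is the last subcase: verifying that each reshuffled forest is genuinely an agreement forest of the correct size. This requires tracking the precise chain position of $e$ (and $f$) in both $T$ and $T'$, using that $M$ carries at most a single breakpoint relative to $T'$, and checking both that the restricted trees agree on the enlarged blocks and that vertex-disjointness is preserved after moving the $M$-taxa; the orientation of $e,f$ along $M$ relative to $a$ and $c$ dictates whether they join the $\{a,b\}$-block or the $\{d\}$-block, exactly as in the two sub-subcases in the proof of Theorem~\ref{thm:22multiedge}.
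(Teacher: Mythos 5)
Your overall strategy coincides with the paper's: the same reduction of eligibility to exhibiting a maximum agreement forest with $\{c\}$ as a singleton, the same dichotomy on whether $P$ traverses $M$ or traverses $A$ and $B$, the same deferral to Theorem~\ref{thm:211sparse} in the latter case, and the same CPT argument when the taxa on $M$ form a CPT-eligible chain. However, there is a concrete gap in your enumeration of the non-CPT-eligible subcases. You assert that if $Z$ is not CPT-eligible then $M=e|f$ when $|Z|=2$ (and, implicitly, that a single taxon sits next to a breakpoint otherwise). This is false here: unlike in Theorem~\ref{thm:22multiedge}, where $M$ is forced to carry a breakpoint because only one breakpoint of the $2|2$ side lies on $S$, in the present setting $S=ab|c|d$ already carries one breakpoint for \emph{each} tree, so the cycle formed by the multi-edge $\{S,M\}$ is broken for both trees without any breakpoint on $M$. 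Hence $M$ can be a $0$-breakpoint side decorated with one taxon ($M=e$) or two taxa ($M=ef$), and in the two-taxon case $(e,f)$ is a common $2$-chain that is pendant in neither tree, so it is indeed not CPT-eligible. Your stated case split silently skips these configurations.

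This matters because the skipped configurations are precisely the ones where ``relocating the $M$-taxa into the block containing $\{a,b\}$ and/or the singleton $\{d\}$'' does not by itself work: the resulting forest $F'$ (e.g.\ $(F\setminus\{B,\{e\},\{f\}\})\cup\{\{a,b,e,f\},\{c\}\}$, or $(F\setminus\{B,\{d\},\{e\}\})\cup\{\{a,b,d,e\},\{c\}\}$ in the single-taxon case) has strictly fewer components than $F$, so by maximality it cannot be an agreement forest. One must then locate the block $B'$ of $F$ whose embedding in $T'$ uses the edge $\{p_e,p_f\}$ (respectively, the two edges incident with $p_e$), bipartition $B'$ across that edge, and reassemble a forest $F''$ of the same size as $F$ in which $\{c\}$ is a singleton. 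You have this bipartition machinery available from the $A=ef$ subcases of Theorem~\ref{thm:211sparse}, but your proposal as written never invokes it for $M$, because your case analysis denies that a breakpoint-free $M$ can occur. Correcting the enumeration and adding the maximality-contradiction argument for the $0$-breakpoint subcases would bring your proof in line with the paper's.
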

\begin{proof}
Assume without loss of generality that $T$ has cherries $\{a,b\}$ and $\{c,d\}$ and that $T'$ has a pendant 3-chain $(a,b,c)$ with cherry $\{b,c\}$ as illustrated in Figure \ref{fig:211multiedge}. Then the 2-chain $(a,b)$ is CPT eligible. Let $F$ be a maximum agreement forest for $T$ and $T'$ such that $\{a,b\}\subseteq B$. As in the proof of Theorem~\ref{thm:211sparse}, we can assume that, if $\{c\}\notin F$, then $B=\{a,b,c\}$.

Now, for each $Y\in\{A,B,M\}$, let $P_Y$ be the path associated with $Y$ in $N$.
Furthermore, let $I$ be an image of $T$ in $N$, and let $P$ be the path from $a$ to $c$ in $I$.  Since $S$ is a 2-breakpoint side, either $P_A$ and $P_B$ is a subpath of $P$, or $P_M$ is a subpath of $P$. If $P_A$ and $P_B$ are subpaths of $P$ then, since the first condition in the statement of the theorem is satisfied, we can apply the same argument as in the proof of Theorem~\ref{thm:211sparse} to establish that $\{a,b,c,d\}$ is eligible for Reduction 10. We may therefore assume that $P_M$ is a subpath of $P$. As $M$ does not have a breakpoint relative to $T$, $M$ is a 0-breakpoint side or a 1-breakpoint side in which case the breakpoint is relative to $T'$.

Suppose that $M$ is  a 1-breakpoint side $M=|ef$ or $M=ef|$ that is decorated with exactly two taxa $e$ and $f$, or that $M$ is decorated with at least three taxa. Since $M$ has at most one breakpoint, it follows from Observation~\ref{obs:chainseverywhere} that $T$ and $T'$ have a CPT-eligible chain $Z$ whose elements are attached to $M$ in obtaining $N$ from $G$. Applying the same argument as in  the third paragraph of the proof of Theorem~\ref{thm:211sparse} establishes that $\{a,b,c,d\}$ is eligible for Reduction 10. 

Since the second condition in the statement of the theorem holds, we complete the proof by considering two cases depending on whether $M$ is decorated with one or two taxa. For both cases, reconsider $F$ and assume that $B=\{a,b,c\}$. We will see that there exists another maximum agreement forest for $T$ and $T'$ that has the desired property such that $\{a,b,c,d\}$ is eligible for Reduction 10. 

First suppose that $M$ is decorated with only a single taxon $e$. Clearly, $\{d\}$ and $\{e\}$ are elements of $F$. If $M$ is a 0-breakpoint side, let $$F'=(F\setminus \{B,\{d\},\{e\}\})\cup \{\{a,b,d,e\},\{c\}\}$$ be a forest. Since $|F'|<|F|$, it follows from the maximality of $F$ that $F'$ is not an agreement forest for $T$ and $T'$. Hence there exists an element $B'$ in $F'$ (as well as in $F$) such that $T'[B']$ uses the two edges $f$ and $f'$ that are both incident with $p_e$ and not incident with $e$. Let $B_1',B_2'$ be a bipartition of $B'$ such that neither $T'[B_1']$ nor $T'[B_2']$ uses $f$ or $f'$. Then $$F''=(F\setminus \{B,B',\{d\},\{e\}\})\cup \{\{a,b,d,e\},\{c\},B_1',B_2'\}$$ is an agreement forest for $T$ and $T'$ with $|F''|=|F|$ and, hence, $\{a,b,c,d\}$ is eligible for Reduction 10. On the other hand, if $M$ is a 1-breakpoint side, then either $(c,b,a,e)$ or $(d,e)$ is a pendant chain of $T'$. In the former case, let $$F'=(F\setminus \{B,\{e\}\})\cup \{\{a,b,e\},\{c\}\}$$ be a forest and, in the latter case let $$F'=(F\setminus \{B,\{e\},\{d\}\})\cup \{\{a,b\},\{c\},\{d,e\}\}$$ be a forest. Regardless which applies, $F'$ is an agreement forest and, again, $\{a,b,c,d\}$ is eligible for Reduction 10.

Second suppose that $M$ is decorated with exactly two taxa $e$ and $f$ such that the path from $p_a$ to $p_e$ in $T$ does not pass through $p_f$. Clearly, $\{e\}$ and $\{f\}$ are elements of $F$. If $M$ is a 0-breakpoint side, let $$F'=(F\setminus \{B, \{e\},\{f\}\})\cup \{\{a,b,e,f\},\{c\}\}$$ be a forest. As usual, the size of $F'$ contradicts the maximality of $F$. Hence, there exists an element $B'$ in $F'$ (as well as in $F$) such that $T'[B']$ uses an edge $\{p_e,p_f\}$. Let $B_1',B_2'$ be a bipartition of $B'$ such that neither $T'[B_1']$ nor $T'[B_2']$ uses $\{p_e,p_f\}$. Then $$F''=(F\setminus \{B,B',\{e\},\{f\}\})\cup \{\{a,b,e,f\},\{c\},B_1',B_2'\}$$ is an agreement forest for $T$ and $T'$ with $|F''|=|F|$ and, hence, $\{a,b,c,d\}$ is eligible for Reduction 10.

Finally, if $M$ is a 1-breakpoint side with $M=e|f$, let $$F'=(F\setminus \{B,\{e\}\})\cup \{\{a,b,e\},\{c\}\}$$ be a forest. As $F$ is an agreement forest for $T$ and $T'$ and $(c,b,a,e)$ is a chain of $T'$, $F'$ is such a forest as well. Thus, $\{a,b,c,d\}$ is eligible for Reduction 10.\qed

\end{proof} 

\begin{figure}[t]
\center
\scalebox{1}{\input{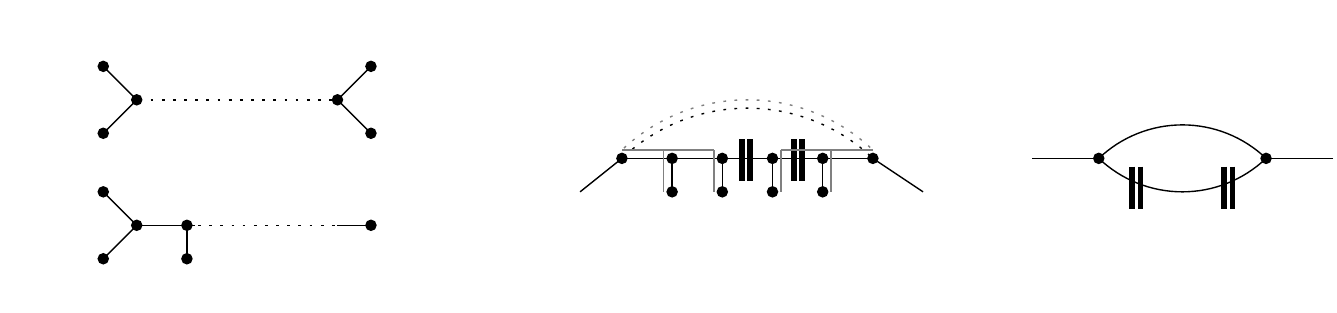_t}}
\caption{The situation described in Theorem \ref{thm:211multiedge}, which concerns $2|1|1$ sides $S=ab|c|d$, where $S$ is part of a multi-edge. The \revision{gray} edges in $N$ indicate an image of $T$ that uses the path in $N$ that is associated with $M$.
}
\label{fig:211multiedge}
\end{figure}


\noindent
Finally, we turn to loops. Consider two phylogenetic trees $T$ and $T'$  and a phylogenetic network $N$ that displays $T$ and $T'$ such that $r(N)=d_\TBR(T,T')$. Let $S$ be a loop side of the generator $G$ that underlies $N$. Then $S$ is decorated with at least one taxon since, otherwise, there exists a phylogenetic network with strictly fewer than $r(N)$ reticulations that displays $T$ and $T'$. Moreover if $A$ denotes the side of $G$ that is incident to $S$ then, because $T$ and $T'$ are connected, $A$ has no breakpoint and $S$ has two breakpoints. Hence every loop is adjacent to a 0-breakpoint side. These observations as well as the next lemma, which shows that loop sides exhibit  clean behavior in terms of being eligible for Reduction 10, will be convenient for the bounding argument of the next section.



\begin{theorem}
\label{thm:211loop}
Let $T$ and  $T'$ be two phylogenetic trees on $X$ that cannot be reduced under Reductions 1--7. Let $G$ be the generator underlying a phylogenetic network $N$ on $X$ that displays $T$ and $T'$ such that $r(N)=d_\TBR(T,T')$. If $S =ab|c|d$ is a side of $G$ that is a loop, then $\{a,b,c,d\}$ is eligible for Reduction 10.
\end{theorem}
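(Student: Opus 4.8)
The plan is to first pin down the local structure forced by the loop and then carry out a single agreement-forest surgery. As in the proofs of Theorems~\ref{thm:211sparse} and~\ref{thm:211multiedge}, I would begin by assuming without loss of generality that the $2|1|1$ side $S=ab|c|d$ induces cherries $\{a,b\}$ and $\{c,d\}$ in $T$ and a pendant $3$-chain $(a,b,c)$ with cherry $\{b,c\}$ in $T'$. The decisive structural observation, which replaces the density hypotheses needed for simple and multi-edge sides, is this: since $S$ is a loop, its unique incident side $A$ is a $0$-breakpoint side (as already noted before the statement), so the vertex $u$ at which the loop meets $A$ is the only vertex through which the four taxa connect to the rest of either tree. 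Tracing the images through $N$ (deleting the breakpoint edge $\{p_b,p_c\}$ to recover $T$, and $\{p_c,p_d\}$ to recover $T'$, then suppressing) shows that $u$ has exactly three branches in each tree: the two cherries and the $A$-branch in $T$, and the chain $(a,b,c)$, the leaf $d$, and the $A$-branch in $T'$. Hence $\{a,b,c,d\}$ is a pendant subtree, detachable at the single edge leaving $u$ toward $A$, in both $T$ and $T'$.

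Second, I would start from a maximum agreement forest $F$ that preserves the CPT-eligible common $2$-chain $(a,b)$, which is pendant in $T$; its existence follows from Theorem~\ref{thm:allchainsintact}. Let $B$ be the block with $\{a,b\}\subseteq B$. Exactly as in the opening paragraph of the proof of Theorem~\ref{thm:211sparse}, either $\{c\}\in F$, in which case $\{a,b,c,d\}$ is immediately eligible for Reduction~10, or $B=\{a,b,c\}$. In the latter case, because $c$ and $d$ share the parent $p_d$ of the cherry $\{c,d\}$ in $T$ and $T[\{a,b,c\}]$ already uses $p_d$, vertex-disjointness forces $\{d\}$ to be a singleton of $F$.

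The heart of the argument is then the surgery $F' = (F\setminus\{\{a,b,c\},\{d\}\})\cup\{\{a,b,d\},\{c\}\}$, which leaves the number of blocks unchanged. Condition~(1) of an agreement forest is automatic since $\{a,b,d\}$ has only three taxa (any tree on three leaves is a star). For condition~(2) I would exploit the pendancy established above: in $T'$ the vertex $u$ is used by \emph{no} block of $F$, because $\{a,b,c\}$ stops at $p_a$, $\{d\}$ is a singleton, and every block contained in the $A$-branch stays inside that branch; hence the new block $\{a,b,d\}$, whose embedding $T'[\{a,b,d\}]$ uses $u$, remains vertex-disjoint from all others. In $T$ the block $\{a,b,d\}$ occupies exactly the internal vertices $p_a,u,p_d$ previously occupied by $\{a,b,c\}$, so disjointness is inherited, and the new singleton $\{c\}$ conflicts with nothing. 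Thus $F'$ is a maximum agreement forest with $\{c\}\in F'$, and $\{a,b,c,d\}$ is eligible for Reduction~10.

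I expect the main obstacle to be the bookkeeping in the surgery, specifically verifying that $u$ is genuinely free in $T'$ under $F$ and that moving $d$ into the block of $a,b$ while extracting $c$ preserves vertex-disjointness in both trees simultaneously. Everything downstream is routine once the pendancy of $\{a,b,c,d\}$ is in hand; this pendancy, peculiar to loops, is precisely what makes the statement unconditional where the simple-edge and multi-edge cases required a density assumption on the adjacent sides.
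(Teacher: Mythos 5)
Your proposal is correct and follows essentially the same route as the paper: after the same WLOG setup, the paper invokes the CPT on the pendant 2-chain $(a,b)$, derives the same dichotomy ($\{c\}\in F$ or $B=\{a,b,c\}$ with $\{d\}$ a singleton), and performs exactly your swap $F'=(F\setminus\{B,\{d\}\})\cup\{\{a,b,d\},\{c\}\}$. Your explicit verification that $u$ is unused by other blocks in $T'$ and that $\{a,b,d\}$ reoccupies the same internal vertices in $T$ is detail the paper leaves implicit, but the argument is the same.
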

\begin{proof}
Assume without loss of generality that $T$ has cherries $\{a,b\}$ and $\{c,d\}$ and that $T'$ has a pendant 3-chain $(a,b,c)$ with cherry $\{b,c\}$. Then $(a,b)$ is a CPT-eligible 2-chain. Let $F$ be a maximum agreement forest for $T$ and $T'$ such that $\{a,b\}\subseteq B$ for some element $B$ in $F$. Assume that $\{c\}\notin F$. Then, as before, $c\in B$. Since $T|B=T'|B$, we have $B=\{a,b,c\}$ and, therefore, $\{d\}\in F$. It follows that $$F'=(F\setminus \{B,\{d\}\})\cup \{\{a,b,d\},\{c\}\}$$ is an agreement forest for $T$. Moreover, as $|F'|=|F|$ and $\{c\}\in F'$, $\{a,b,c,d\}$ is eligible for Reduction 10. \qed

\end{proof}


\section{Putting it all together and bounding the size of the kernel}
\label{sec:alltogether}
In this section, we establish an improved kernel result for computing the TBR distance that is based on Reductions 1--10. We start by bounding the number of certain types of sides in a generator.

\begin{lemma}
\label{l:atmost1}
Let $T$ and $T'$ be two phylogenetic trees on $X$ that cannot be reduced under Reductions 1--10, and let $G$ be a generator that underlies a phylogenetic network $N$ that displays $T$ and $T'$ such that $r(N)=d_\TBR(T,T')$. Furthermore, let $s_2$ be the number of $2|2$ sides of $G$, each being decorated with four taxa that are eligible for Operation P, and let $s_1$ be the number of $1|3$ sides of $G$. Then $s_1+s_2\leq 1$. 
\end{lemma}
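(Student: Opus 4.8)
The plan is to argue by contradiction: assuming $s_1 + s_2 \geq 2$, I would exhibit an applicable instance of Reduction~9.1 or~9.2, contradicting the hypothesis that $T$ and $T'$ are irreducible under Reductions 1--10. The first, immediate step is to invoke Observation~\ref{obs:13eateachother}, which already gives $s_1 \leq 1$; hence it suffices to rule out the two remaining configurations, namely $s_1 = 1$ with $s_2 \geq 1$, and $s_1 = 0$ with $s_2 \geq 2$. In both configurations distinct sides of $G$ carry disjoint taxa, a fact I will use repeatedly to guarantee that the four-taxon sets in play are disjoint.

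For the case $s_1 = 1$, $s_2 \geq 1$, I would first show that the unique $1|3$ side $a|bcd$ supplies exactly the chain $C$ required as input to Reduction~9.1. Reading off the breakpoint structure of the side (say the breakpoint lies relative to $T'$, between $p_a$ and $p_b$; the other orientation is symmetric), one checks that $(b,c,d)$ is a common $3$-chain that is pendant in $T'$ with cherry $\{b,c\}$ but not pendant in $T$, and that $(a,b,c,d)$ is a chain of $T$ but not of $T'$ --- precisely the hypotheses on $C=(b,c,d)$ in Reduction~9.1. Any of the $s_2 \geq 1$ Operation-P-eligible $2|2$ sides supplies a four-taxon set $\{a',b',c',d'\}$ disjoint from $\{a,b,c,d\}$; applying Operation~P to it creates, by Theorem~\ref{thm:22to13}, a secondary common $3$-chain $(b',c',d')$ while leaving the region of the $1|3$ side untouched, so $C$ retains all its properties. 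Reduction~8A therefore fires on $C$ and $D=(b',c',d')$, which is exactly Reduction~9.1, the desired contradiction.

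For the case $s_1 = 0$, $s_2 \geq 2$, I would pick two $2|2$ sides carrying disjoint Operation-P-eligible sets $P_1=\{a',b',c',d'\}$ and $P_2=\{a'',b'',c'',d''\}$, and aim to trigger Reduction~9.2: apply Operation~P to $P_1$ (producing trees $S,S'$ still displayed by $N$ via Corollary~\ref{cor:preserved}, with the first side now a $1|3$ side carrying common $3$-chain $(b',c',d')$), then Operation~P to $P_2$, and finally Reduction~8A to the two resulting $3$-chains, whose primary/secondary roles are verified exactly as in the previous paragraph (here $(b',c',d')$ plays the role of $C$: it is pendant with a cherry in $S'$, not pendant in $S=T$, and $(a',b',c',d')$ is a chain of $S$ but not of $S'$).

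The hard part, and the only genuinely technical step, is the ``if it is still eligible'' clause of Reduction~9.2: I must show that $P_2$ remains eligible for Operation~P with respect to $S,S'$. My plan here is to exploit that the agreement-forest correspondence underlying the proof of Theorem~\ref{thm:22to13} relocates only the single taxon $b'$ between blocks. Since $b' \notin P_2$, this relocation leaves unchanged the co-membership status of $\{a'',b''\}$, of $\{c'',d''\}$, and of $\{a'',b'',c'',d''\}$ in any forest it transforms; moreover the $2$-chains $\{a'',b''\}$ and $\{c'',d''\}$ are cherries of $T'$ and remain cherries of $S'$ (Operation~P on the disjoint set $P_1$ does not touch them), so they stay CPT-eligible. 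I would use these facts to transport a witnessing maximum agreement forest for $P_2$ from $T,T'$ to $S,S'$: concretely, starting from the known $P_2$-witness and invoking the CPT (Theorem~\ref{thm:allchainsintact}) on $\{\{a'',b''\},\{c'',d''\},\{b',c',d'\}\}$, I would arrange a single maximum agreement forest that both respects the blocks moved by the correspondence and leaves $\{a'',b'',c'',d''\}$ unpreserved, then push it through the single-taxon relabelling to obtain the required witness for $P_2$ in $S,S'$. Producing such a \emph{simultaneous} witness --- rather than two separate witnesses, one for each of $P_1$ and $P_2$ --- is where the care is needed; once it is in hand, Reduction~9.2 applies and we reach the final contradiction, completing the proof that $s_1+s_2\le 1$.
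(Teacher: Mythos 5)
Your overall skeleton matches the paper's: argue by contradiction, use Observation~\ref{obs:13eateachother} to get $s_1 \leq 1$, dispose of $s_1=1,\ s_2\geq 1$ via Reduction~9.1, and reduce the remaining case $s_1=0,\ s_2\geq 2$ to the ``still eligible'' clause of Reduction~9.2; you also correctly identify that clause as the crux. The gap is in how you propose to discharge it. You want to build a \emph{simultaneous} witness forest for $T,T'$ and push it through the single-taxon relocation underlying Theorem~\ref{thm:22to13}, invoking the CPT on $\{\{a'',b''\},\{c'',d''\},\{b',c',d'\}\}$. But the CPT only guarantees that some maximum agreement forest \emph{preserves} those chains; it says nothing about the other half of Operation-P eligibility, namely that $\{a'',b'',c'',d''\}$ is \emph{not} preserved. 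In the paper, that non-preservation is exactly the content of Theorems~\ref{thm:22sparse} and~\ref{thm:22multiedge}, whose proofs require the densely decorated adjacent sides and a nontrivial forest surgery; your sketch offers no substitute for that step, and carrying it out for two sides simultaneously (while also respecting the block moves of the correspondence) would amount to reproving those theorems in a more complicated setting. There is also a directional muddle: $(b',c',d')$ is a common chain of $S,S'$ but not, in general, of $T,T'$ (in $T'$ the cherries $\{a',b'\}$ and $\{c',d'\}$ need not be adjacent), so the CPT cannot be applied to that chain set for the original pair of trees from which you intend to transport the witness.

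The paper avoids all of this with a network-level observation that your argument never deploys for the key step: by Corollary~\ref{cor:preserved} the same network $N$ displays the transformed trees $S,S'$, and by Theorem~\ref{thm:22to13} it is still optimal for them, since $d_\TBR$ is unchanged. Because Operation~P on $S_1$ leaves the side $S_2$ and its incident sides untouched, $S_2$ still satisfies the hypotheses of Theorem~\ref{thm:22sparse} or~\ref{thm:22multiedge} with respect to $S$, $S'$, $N$, and $G$, and those theorems then certify directly that $\{a'',b'',c'',d''\}$ remains eligible for Operation~P; no agreement forest has to be transported at all. To salvage your route you would need to replace the CPT appeal with a full rerun of the eligibility theorems for the pair $S,S'$ --- which is precisely what the paper's optimality-of-$N$ argument buys for free.
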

\begin{proof}
Suppose that $s_1+s_2\geq 2$. By Observation~\ref{obs:13eateachother}, we have $s_1\leq 1$. If $s_1=1$ and $s_2\geq 1$, then $T$ and $T'$ can be reduced by an application of Reduction 9.1. Hence, we may assume that  $s_1=0$ and $s_2\geq 2$. Let $S_1=a'b'|c'd'$ and $S_2=a''b''|c''d''$ be two $2|2$ sides of $G$ such that each of $\{a',b',c',d'\}$ and $\{a'',b'',c'',d''\}$ are both eligible for Operation P. We establish the lemma by showing that we can apply Reduction 9.2, thereby contradicting that $T$ and $T'$ cannot be reduced under Reductions 1--10. Since $\{a'',b'',c'',d''\}$ is eligible for Operation P before this operation is applied to $\{a',b',c',d'\}$, note first that $S_2$ satisfies the conditions in the statement of Theorem~\ref{thm:22sparse} or~\ref{thm:22multiedge}, depending on whether $S_2$ is part of a multi-edge or not.
Now let $S$ and $S'$ be the two phylogenetic trees obtained from $T$ and $T'$, respectively, by applying Operation P to $\{a',b',c',d'\}$. 
Crucially, by Corollary~\ref{cor:preserved}, $N$ displays $S$ and $S'$. \revision{As} noted in the corollary the only change is that on side $S_1$ a breakpoint moves slightly. Furthermore, by Theorem~\ref{thm:22to13}, $d_\TBR(T,T')=d_\TBR(S,S')$ which implies that there exists no phylogenetic network that displays $S$ and $S'$ and has strictly fewer than $r(N)$ reticulations. It now follows that $\{a'',b'',c'',d''\}$ is still eligible for Operation P after this operation has been applied to $\{a',b',c',d'\}$ because $S_2$ still satisfies the conditions in the statement of Theorem~\ref{thm:22sparse} or~\ref{thm:22multiedge}, depending on whether $S_2$ is part of a multi-edge or not.\qed
\end{proof}

We next use a pessimistic, but safe, counting argument to finally bound the size of the kernel for computing the TBR distance.\\
\\
\noindent
\textbf{Theorem \ref{t:main}.}
\emph{Let $T$ and $T'$ be two phylogenetic trees on $X$ with $d_\TBR(T,T') \geq 2$ that cannot be reduced under Reductions 1--10. Then $|X|\leq 9d_\TBR(T,T')-8$.}
\begin{proof}
Let $N$ be a phylogenetic network that displays $T$ and $T'$ such that $$k=r(N)=d_\TBR(T,T'),$$ and let $G$ be the generator that underlies $N$. By ~\cite[Lemma 1]{tightkernel}, $G$ has $3k-3$ sides \revision{and, by Lemma~\ref{lemma:foundations}, each} side of $G$ is decorated with at most four taxa when obtaining $N$ from $G$. Additionally, by the \revision{latter} lemma, each side that is decorated with four taxa is a $1|3$, $2|2$, or $2|1|1$ side. Moreover, by Lemma~\ref{l:atmost1},  the  number of $2|2$ sides that are eligible for Operation P plus the number of $1|3$ sides is at most one.  We next use the results established in Section~\ref{sec:MAFs-everywhere} to derive the following {\it adjacency rules} for sides of $G$ that are decorated with four taxa.
\begin{enumerate}
\item[A1.] From the contrapositives of Theorems~\ref{thm:22sparse} and~\ref{thm:211sparse}, it follows that each side of $G$ (with possibly one exception by Lemma~\ref{l:atmost1}) that is decorated with four taxa and is not a loop or part of a multi-edge is incident to at least two distinct sides that are each decorated with at most one taxon.
\item[A2.] From the contrapositives of Theorems~\ref{thm:22multiedge} and~\ref{thm:211multiedge}, it follows that each side  of $G$ (with possibly one exception by Lemma~\ref{l:atmost1}) that is decorated with four taxa and is part of a multi-edge is either incident
to at least two distinct sides that are not part of the same multi-edge and each decorated with at most one taxon, or the second side in the multi-edge is  decorated with zero taxa. 
\item[A3.] From the contrapositive of Theorem~\ref{thm:211loop}, it follows that each side of $G$ that is a loop is decorated \sidmablue{with} at most three taxa. Furthermore, \stevenred{to avoid disconnecting $T$ or $T'$}, each such loop side is incident to a 0-breakpoint side that, by Lemma~\ref{lemma:foundations}(b), is decorated with at most three taxa.
\end{enumerate}


We first deal with the exceptional situation that there is a single side $S$ of $G$ that is decorated with four taxa and does not obey A1 or A2.  For the purpose of the upcoming counting argument, we view $S$ as a side that is only decorated with three taxa. This does not affect A1 or A2 because these rules only consider adjacent sides that are decorated with at most one taxon. Furthermore, recalling that $S$ is not a 0-breakpoint side, viewing $S$ as a side that is decorated with three taxa does not affect A3 either. To avoid an underestimate of the final kernel size, we add one to the counting formula below. Next, we consider each  side $S$ of $G$ that is decorated with zero or two taxa and view it in one of the following ways for counting purposes. \stevenred{Note that if $S$ is a loop then by the assumed optimality of \sidmablue{$N$} it must have at least one taxon.}

\begin{enumerate}
\item[1.] If $S$ is not a loop, not part of a multi-edge, and decorated with zero taxa, we view $S$ as a side that is decorated with one taxon. This does not affect A1--A3.
\item[2.] If $S$ is part of a multi-edge and decorated with zero taxa, we view $S$ as a side that is decorated with three taxa and, if subsequently any side $S'$ of $G$ that is incident to $S$ is decorated with four taxa, 
then we view $S'$ as a side that is decorated with three taxa. This cannot decrease the total number of taxa because $S$ is incident to at most three sides that are each decorated with four taxa. Also, we still obey A1--A3, because
any side decorated with four taxa that needed $S$ as a side decorated with zero taxa is now viewed as a side decorated with three taxa.
\item[3.] If $S$ is decorated with  two taxa, we view $S$ as a side that is decorated with three taxa. Again, this does not affect any of A1--A3.
\end{enumerate}


\noindent Now we  still have a valid upper bound on the total number of taxa that decorate sides of $G$, but a simplified counting system because every side is decorated with four, three, or one taxa. Let $p$, $q$, and $r$ be the number of sides of $G$ that are decorated with  with four, three, and one taxa respectively.
Then we have the following optimization problem, where the $+1$ in the objective function is due to the possibly undercounted
side decorated with four taxa that is mentioned above and that we view as a side decorated with three taxa.

\begin{verbatim}
Maximize 4p + 3q + 1r + 1
subject to
p + q + r = 3k - 3
p <= 2k
r >= (2/4)p and
p, r, q >= 0 (and integer)
\end{verbatim}

\noindent The $p\leq 2k$ inequality occurs because, by Lemma~\ref{lemma:foundations},
a side that is decorated with four taxa has at least one breakpoints and there are $2k$ breakpoints in total (i.e., $k$ breakpoints for each tree). Furthermore, each side that is decorated with one  taxon  can be incident to at most four sides that are each decorated with four taxa. On the other hand, since $T$ and $T'$ cannot be further reduced under any of Reductions 1--10, each side that is decorated with four taxa needs to be incident to at least two sides decorated with one taxon. This implies that 
$r \geq  (2/4)p$.
We next substitute $q=(3k-3)-p-r$ and this gives

\begin{verbatim}
Maximize 9k + p - 2r - 8
subject to
p <= 2k
r >= (1/2)p and
p, r, q >= 0 (and integer).
\end{verbatim}

\noindent The fact that  $r \geq (1/2)p$ implies that the term $(p-2r)$ in the objective function is at most 0. We conclude that $|X|\leq 9k-8=9d_\TBR(T,T')-8$ is an upper bound on the size of our kernel.
\qed
\end{proof}

The bound $9k-8$ is tight up to an additive term of 1, as the following theorem shows. The additive term is due, in the above analysis, to the at most one generator side with four taxa that does not obey A1 or A2. \revision{To establish the next theorem, we need the following definitions.} A {\it binary character} $f$ on $X$ is a function that assigns each element in $X$ to an element in $\{0,1\}$. Let $T$ be an \revision{phylogenetic tree on $X$} with vertex set $V$. An {\it extension} $g$ of $f$ to $V$ is  a function $g$ that assigns each element in $V$ to an element in $\{0,1\}$ such that $g(x)=f(x)$ for each $x\in X$. The {\it parsimony score} of $f$ on $T$, denoted by $l_f(T)$, denotes the minimum number of edges $\{u,v\}$ in $T$ such that $g(u)\ne g(v)$, ranging over all extensions of $f$. Now, for two  \revision{phylogenetic trees $T$ and $T'$ on $X$}, the {\it maximum parsimony distance on binary characters} $d^2_{\MP}$ is defined as  $d^2_{\MP}(T,T')=\max_f |l_f(T)-l_f(T')|$ where $f$ ranges over all binary characters on $X$. It is well-known that $d_\TBR(T,T') \geq d^2_{\MP}(T,T')$ \cite{fischer2014}. 

\begin{theorem}
\label{thm:tight}
For each $k \geq 3$ there exist two phylogenetic trees $T_k$ and  $T'_k$ with $9k-9$ taxa and $d_\TBR(T_k,T'_k) = k$ that cannot be reduced under Reductions 1--10.
\end{theorem}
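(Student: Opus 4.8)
The plan is to reverse-engineer the counting argument behind Theorem~\ref{t:main}: the optimum $9k-9$ is attained by a generator with $3k-3$ sides carrying $2k-2$ sides with four taxa and $k-1$ sides with a single taxon (and no other decorated sides), where each four-taxa side is incident to exactly two one-taxa sides and each one-taxa side is incident to exactly four four-taxa sides. I would realize this by taking $G_k$ to be the M\"obius ladder $M_{2k-2}$ (equivalently the circulant $C_{2k-2}(1,k-1)$) on vertices $v_0,\dots,v_{2k-3}$: the Hamiltonian-cycle edges $\{v_i,v_{i+1}\}$ (the \emph{heavy} sides) form a $2$-factor, and the antipodal chords $\{v_i,v_{i+(k-1)}\}$ (the \emph{light} sides) form a perfect matching. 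This is a connected simple cubic graph for $k\ge 3$ (it is $K_4$ when $k=3$), hence a valid $k$-generator. I decorate each heavy side with four taxa as a $2|2$ side $ab|cd$, so that $T_k$ has the chain $(a,b,c,d)$ while $T'_k$ has the cherries $\{a,b\}$ and $\{c,d\}$, and each light side with a single taxon, placing the $2k$ breakpoints so that the $T_k$-cuts and the $T_k'$-cuts each form the complement of a spanning tree of the resulting network $N$; since all heavy sides are $2|2$, exactly two breakpoints fall on light sides, which leaves the taxon count untouched. This yields $|X| = 4(2k-2)+(k-1)=9k-9$, and because $N$ displays both trees with $r(N)=k$, Theorem~\ref{t:tbr-equiv} gives $d_\TBR(T_k,T_k')\le k$ at once.

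For the matching lower bound I would invoke $d_\TBR\ge d^2_{\MP}$ and exhibit a single binary character $f$ with $|l_f(T_k)-l_f(T_k')| = k$. The character is read off the cyclic structure, colouring the taxa of successive heavy sides consistently around the Hamiltonian cycle so that the cherry-versus-chain mismatch on each $2|2$ side turns the cyclic ``tension'' into a parsimony gap, and the $k$ independent cycles of $N$ each contribute one unit of discrepancy between $l_f(T_k)$ and $l_f(T_k')$. Verifying that the two parsimony scores differ by exactly $k$ via a bottom-up computation on each tree is careful but routine, and combined with the upper bound it forces $d_\TBR(T_k,T_k')=k$ and hence $d_\MAF=k+1$.

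Irreducibility is where the real work lies. Reductions~1--7 are dispatched by local inspection: there is no common pendant subtree of size $\ge 2$ and no common $4$-chain (the heavy chain $(a,b,c,d)$ is split into two cherries in $T_k'$), and the interleaving of heavy and light sides realizes none of the short-chain patterns of Reductions~3--7. Reduction~8 requires a $1|3$ side and Reduction~10 a $2|1|1$ side, neither of which occurs since every four-taxa side is $2|2$. The crux is Reduction~9, i.e.\ showing that \emph{no} $2|2$ side is eligible for Operation~P. The adjacencies are deliberately on the boundary: each heavy side meets, at each endpoint, exactly one heavy and one light side, so the sufficient conditions for eligibility in Theorems~\ref{thm:22sparse} and~\ref{thm:22multiedge} \emph{fail}. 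The difficulty is that failure of these sufficient conditions does not by itself give non-eligibility, which is an existential statement quantified over all maximum agreement forests.

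The main obstacle, therefore, is to prove the rigidity that every maximum agreement forest of $T_k,T_k'$ keeps the four taxa of each heavy side together. I would establish this by first exhibiting an explicit maximum agreement forest of size $k+1$ in which each heavy side's taxa lie in a common block, and then showing by a local surgery argument that splitting any block $\{a,b,c,d\}$ into $\{a,b\}$ and $\{c,d\}$ forces an additional component, contradicting maximality once $d_\MAF=k+1$ is known. Proving this splitting-increases-the-count rigidity, together with pinning down the parsimony gap at exactly $k$, are the two steps I expect to demand the most care.
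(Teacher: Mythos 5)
Your construction is genuinely different from the paper's, and the difference is precisely what creates a gap you do not close. The paper builds $G_k$ by suppressing the corners of a $2\times(k+1)$ grid and decorates \emph{every} one of the $3(k-1)$ sides with exactly three taxa, giving $9k-9$ taxa with \emph{no} four-taxa sides at all. As a result, irreducibility under Reductions 8--10 is purely structural: there is no common 3-chain pendant in one tree (killing Reductions 8 and 9.1), no pair of cherries in one tree forming a 4-chain in the other (killing Operation~P and hence Reduction 9.2), and the unique pendant 3-chain of each tree fails the precondition of Reduction 10. No reasoning about maximum agreement forests is needed. Your M\"obius-ladder construction instead packs $2k-2$ sides with four taxa each as $2|2$ sides, so every one of those sides satisfies the \emph{structural} precondition of Operation~P (a non-pendant 4-chain in $T_k$, two cherries in $T'_k$). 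You must therefore prove the existential negation: no maximum agreement forest preserves $\{a,b\}$ and $\{c,d\}$ separately without preserving $\{a,b,c,d\}$. You correctly identify this as the crux, but the ``local surgery / rigidity'' argument is only a plan, not a proof, and it is not obviously true -- failure of the sufficient conditions in Theorems~\ref{thm:22sparse} and~\ref{thm:22multiedge} tells you nothing here. This is the central missing piece.

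Two further steps are asserted rather than carried out. First, the lower bound: the paper's character (a left/right cut across the linear ladder) gives $l_f(T_k)=1$ and $l_f(T'_k)\geq k+1$ precisely because the ladder is linear; in your cyclic circulant there is no such cut, and ``colouring successive heavy sides consistently around the Hamiltonian cycle'' does not yet specify a character, let alone one achieving a parsimony gap of exactly $k$ (note $d^2_{\MP}$ only lower-bounds $d_{\TBR}$, so you need the gap to actually reach $k$). Second, the breakpoint placement: you need $k-1$ heavy sides cut relative to each tree plus one breakpoint per tree on light sides such that \emph{both} deletions leave spanning trees of $N$; this is a nontrivial combinatorial condition on the circulant that you assume rather than verify. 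If you want to salvage the approach, the honest comparison is that the paper's all-3-taxa decoration achieves the same $9k-9$ count (since $3\cdot 3(k-1)=9k-9$) while making every irreducibility check a local, finite inspection; your extremal-LP-inspired configuration attains the same number only by taking on an agreement-forest rigidity theorem that the paper deliberately avoids having to prove.
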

\begin{proof}
Let $k \geq 3$. We proceed by building a specific ladder-like generator $G_k$, converting this to a phylogenetic network $N_k$, and extracting the trees $T_k$ and $T'_k$ from this. We will then
prove that $d_\TBR(T_k,T'_k) = k$ and that \revision{$T_k$ and $T_k'$} are irreducible under Reductions 1--10.

Generator $G_k$ is built as follows. We take the rectangular $2 \times (k+1)$ grid on $2(k+1)$ \revision{vertices} and suppress the four corner vertices of degree 2. This creates a cubic multigraph $G_k$ with $3(k-1)$ sides. Note that $G_k$ has exactly two pairs of multi-edges. We create $N_k$ by decorating each side of $G_k$ with 3 taxa. Let $X_k$ be the set of all taxa added; we have  $|X_k| = 9k-9$. By construction, $r(N_k) = k$. See Figure \ref{fig:tight} for the situation $k=5$. Let $T_k$ (respectively, $T'_k$) be the tree displayed by $N_k$ that is induced by the $k$ solid (respectively, hollow) breakpoints as indicated in the figure. Given that $r(N_k)=k$ we have $d_\TBR(T_k,T'_k) \leq k$. 

To prove that $d_\TBR(T_k,T'_k) \geq k$ we
use the same lower-bounding technique as \cite{tightkernel,kelk2020new}. \revision{To this end,} 
it is sufficient to give a binary character $f$ on $X_k$ such that 
$|l_f(T_k)-l_f(T'_k)| \geq k$. We define $f$ by assigning 0 to each taxon to the left of the \revision{gray} line, as indicated in Figure~\ref{fig:tight}, and assigning 1 to all other taxa, i.e. those to the right of the \revision{gray} line. It
is easy to check that $l_f(T_k)=1$ and that, by Fitch's algorithm or similar \cite{fitch1971}, $l_f(T'_k) \geq (k+1)$, so $d_\TBR(T_k,T'_k)  \geq |l_f(T_k)-l_f(T'_k)| \geq k$ as required. This concludes the proof that
$d_\TBR(T_k,T'_k) = k$.

Regarding irreducibility, it is helpful to first inventarise some topological features of $T_k$ and $T'_k$. They have no common pendant subtrees of size 2 or larger, so Reduction 1 is excluded, and they have no common chains of length 4 or longer, so Reduction 2 is excluded. Crucially, each tree has exactly
one pendant 3-chain but this is \emph{not} common with the other tree. For $T_k$ this is $(p,q,r)$, where $\{q,r\}$ is its cherry, and for $T'_k$ this is $(s,t,u)$, where $\{t,u\}$ is its cherry. Hence, Reductions 3, 4, 6 and 7 are excluded.  Recalling the definition of Reduction 5, we see that if the preconditions for this reduction rule hold, then it also follows that $(\ell_1, \ell_2, x)$ is a pendant 3-chain in one tree (with $\{\ell_2, x\}$ the cherry) and $(\ell_1, \ell_2)$ is a 2-chain common to both trees. 
 As noted already each tree has exactly one pendant 3-chain. Without loss of generality (due to symmetry between $T_k$ and $T'_k$), observe that the single pendant 3-chain  $(p, q, r)$ in $T_k$, where $\{q, r\}$ is the cherry, has the property that $(p, q)$ is \emph{not} a 2-chain in $T'_k$, so Reduction 5 cannot apply. 

We now turn to the new reduction rules. Consider Reduction 8. This is built on Reduction 8A, which requires a common 3-chain that is pendant in one tree: again, this does not exist, so the reduction is
excluded. The same fact immediately excludes Reduction 9.1. Reduction 9.2 requires Operation P to execute, and this operation requires one of the trees to have a non-pendant 4-chain $(a,b,c,d)$ and the other tree to have cherries $\{a,b\}$ and $\{c,d\}$. Each of $T_k$ and $T'_k$ has exactly $(k+1)$ cherries, but no pair of these cherries combine to form a 4-chain in the other tree, so Operation P cannot apply. (Viewed
from the contrapositive perspective: any 4-chain must contain at least one taxon that is not in a cherry in the other tree). Finally, consider Reduction 10. The preconditions here require one \revision{of $T_k$ and $T_k'$} to have a pendant \mbox{3-chain} $(a,b,c)$ where $\{b,c\}$ is the cherry, and the other tree to have the cherry $\{a,b\}$. However, the single pendant 3-chain in (without loss of generality) $T_k$, $(p,q,r)$ where $\{q,r\}$ is the cherry has the property that the first taxon $p$ on the chain is definitely not in a cherry in the other tree, so Reduction 10 cannot execute. We are done. \qed
\end{proof}

\begin{figure}[t]
\center
\scalebox{1.5}{\input{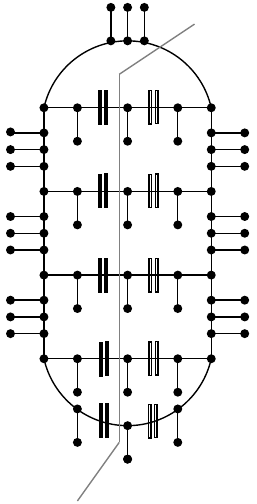_t}}
\caption{The phylogenetic network $N_k$ for $k=5$, as constructed in the proof of Theorem \ref{thm:tight}. The $k$ solid double bars represent the breakpoints for the first tree $T_k$ and the $k$ hollow double bars
represent the breakpoints for the second tree $T'_k$. Taxa $p$, $q$, and $r$ form the unique pendant 3-chain of $T_k$ and $s$, $t$, and $u$ form the unique pendant 3-chain of $T'_k$. The \revision{gray} line is used in the proof \revision{to show} that
$d_\TBR(T_k,T'_k) \geq k$.}
\label{fig:tight}
\end{figure}


\section{Conclusion and future work}

There are a number of interesting future research directions. The most obvious direction is to design new reduction
rules capable of further improving the current $9k-8$ bound. How far
below $9k-8$ can we go, and what is the trade off between proof complexity and the obtained decrease in kernel size?
Specifically, the existing reduction rules and
their associated proofs are already rather complex, requiring extensive auxiliary mathematical machinery and
quite some case-checking. It is natural to ask whether the design of further rules and the related proofs can be
streamlined and simplified in some fashion by deepening our understanding of the combinatorial behaviour of agreement forests.
In the meantime (semi-)automated tools for proof verification could be utilized to help keep case-checking under control. We note also that
Reduction 8 hints at a wider family of reduction rules. Essentially, it gives us a general recipe for moving certain edges
around in the trees such that $d_{\TBR}$ is preserved: this allows us to rearrange the trees in such a way that \emph{other} reduction rules are
triggered. We expect that such `indirect' reduction rules will be very useful in the future. \stevenred{Also, the fact that Reduction 8 actually undoes an application of the chain reduction
 \revision{is interesting}: it takes a step `back', in order to move forward. As discussed in \cite{figiel2022there} this phenomenon merits further study.}

Next, an empirical study
in the spirit of \cite{van2022reflections} could investigate how much extra reductive power the new $9k-8$ rules have in practice;
the rules for the $11k-9$ kernel do have more practical effect than the $15k-9$ rules, does this trend continue?
Another angle to explore is to translate the new reduction rules onto other agreement-forest based phylogenetic distances to obtain smaller
kernels there; this has already been effective in designing new reduction rules for Rooted Subtree Prune and Regraft distance \cite{kelk2022cyclic}.

\section{Acknowledgements}

Steven Kelk and Simone Linz were supported by the New Zealand Marsden Fund. Ruben Meuwese was supported by the Dutch Research Council (NWO) KLEIN 1 grant \emph{Deep kernelization for phylogenetic discordance}, project number OCENW.KLEIN.305. \stevenred{We thank Steve Chaplick for useful discussions.}

\bibliography{article}{}
\bibliographystyle{plain}

\newpage
\appendix

\section*{Appendix}

\section{Proof of Theorem~\ref{t:lovely-chains}}

\noindent{\it Proof of Theorem~\ref{t:lovely-chains}.}  
Let $e=\{u,v\}$ be the interrupter in $T'$ of $C$, and let $v$ be the unique common neighbor of $p_b$ and $p_c$ in $T'$. Towards  a contradiction assume that the result does not hold. Let $F^*$ be a maximum agreement forest for $T$ and $T'$. Then there exists an element $B\in F^*$ such that $T'[B]$ uses $e$. Let $Q,R$ be the bipartition of $X\setminus \{a,b,c,d\}$ such that, in $T$, the path from each element in $Q$ to $a$ is shorter than its path to $d$, and the path from each element in $R$ to $d$ is shorter than its path to $a$. Similarly, let $Q',R',S'$ be the tripartition of $X\setminus \{a,b,c,d\}$ such that, in $T'$, the path from each element in $Q'$ to $a$ is shorter than its path to $d$, the path from each element in $R'$ to $d$ is shorter than its path to $a$, and the path from each element in $S'$ to $a$ has the same length than its path to $d$. This setup is illustrated in Figure~\ref{fig:setup}. We next define five sets that will be useful throughout the proof. Specifically, let $B_Q=B\cap Q$, $B_R=B\cap R$, $B_{Q'}=B\cap Q'$, $B_{R'}=L\cap R'$, and $B_{S'}=B\cap S'$. As $T'[B]$ uses $e$, note that $B_{S'}$ is non empty. Moreover, since $T|B= T'|B$, it follows that $|B\cap C|\leq 3$. We freely use the previous two properties of  $B_{S'}$ and $B\cap C$, respectively, throughout the remainder of the proof. To establish the result, we next consider three cases that each have several subcases. In all (sub)cases we will show that there exists a maximum agreement forest $F$ that does not use $e$.\\
\\
\noindent {{\bf Case 1.} $B_{Q'}=\emptyset$, and $B_{R'}=\emptyset$} \\
Since $T'[B]$ uses $e$, we have $1\leq |B\cap C|\leq 3$. Furthermore, there is no element in  $F^*\setminus \{B\}$ that contains an element of $Q'$ and an element of $R'$. However, if $B_{S'}\subseteq Q$ or $B_{S'}\subseteq R$, then there can be an element in $F^*\setminus B$ that has a non-empty intersection with $C$ and a non-empty intersection with one of $Q$ and $R$. There are three subcases to consider for Case~1.

First suppose that $B\cap \{a,b\}\ne\emptyset$ and $B\cap \{c,d\}\ne\emptyset$.
If $|B\cap C|=2$, then $B_{S'}\subseteq Q$ or $B_{S'}\subseteq R$. On the other hand, if $|B\cap C|=3$ then, because $T[B]=T'[B]$, we have $B_{S'}\subseteq R$ when $\{a,b\}\subset B$, and $B_{S'}\subseteq Q$ when  $\{c,d\}\subset B$.
Considering $T[B]$, it follows that there exists an element $\ell\in C\setminus B$ such that $\{\ell\}\in F^*$. Hence $$F=(F^*\setminus \{
B,\{\ell\}\})\cup\{B_{S'},(B\setminus B_{S'})\cup\{\ell\}\}$$ is an agreement forest for $T$ and $T'$ with $|F|=|F^*|$.

Second suppose that $B\cap \{a,b\}=\emptyset$ and $|B\cap \{c,d\}|=2$.
Then again $B_{S'}\subseteq Q$ or $B_{S'}\subseteq R$.  Let $B'$ be the element in $F^*$ that contains $b$. Note that $|B'|\geq 1$ if $B_{S'}\subseteq R$ and $|B'|=1$ if $B_{S'}\subseteq Q$. Moreover if $B'$ contains an element in $X\setminus C$, then $B'\setminus C\subseteq Q$ and  $B'\setminus C\subseteq Q'$.  Hence $$F=(F^*\setminus \{B,B'\})\cup\{B_{S'},(B\setminus B_{S'})\cup B'\}$$ is an agreement forest for $T$ and $T'$ with $|F|=|F^*|$. An analogous symmetric analysis applies when $|B\cap \{a,b\}|=2$ and $B\cap \{c,d\}=\emptyset$.

Third suppose that $B\cap \{a,b\}=\emptyset$ and $|B\cap \{c,d\}|=1$.
If $\{c,d\}\cap B=\{c\}$, let $B'$ be the element in $F^*$ that contains $b$, and if $\{c,d\}\cap B=\{d\}$, let $B'=\{c\}$. In the latter case, note that $B'\in F^*$ because of $T'[B]$. Now, under the assumption that $B_Q=\emptyset$ or $B_R=\emptyset$, it follows that $$F=(F^*\setminus \{B,B'\})\cup\{B_{S'},(B\setminus B_{S'})\cup B'\}$$ is an agreement forest for $T$ and $T'$ with $|F|=|F^*|$. We may therefore assume that $B_Q\ne\emptyset$ and $B_R\ne \emptyset$, that is $B_{S'}=B_Q\cup B_r$. Then, because of $T[B]$, there are three singletons $\{\ell\},\{\ell'\}$, and $\{\ell''\}$ in $F^*$ such that $\{\ell,\ell',\ell''\}=C\setminus B$. In other words, there is no element in $F^*\setminus \{B\}$ that contains an element in $C$ and an element not in $C$. Hence $$F=(F^*\setminus \{B,\{\ell\},\{\ell'\},\{\ell''\}\})\cup\{B_Q,B_R,(B\setminus B_{S'})\cup\{\ell,\ell',\ell''\}\}$$ is an agreement forest for $T$ and $T'$ with $|F|<|F^*|$. An analogous symmetric analysis applies when $|B\cap \{a,b\}|=1$ and $B\cap \{c,d\}=\emptyset$.
\\

\noindent {{\bf Case 2.} $B_I=\emptyset$ and $B_J\ne\emptyset$ with $\{I,J\}=\{Q',R'\}$}\\
Without loss of generality, we may assume that $B_{Q'}=\emptyset$ and $B_{R'}\ne\emptyset$. 
There are four subcases to consider for Case 2.

First suppose that $B\cap \{a,b\}\ne\emptyset$ and $B\cap \{c,d\}\ne\emptyset$. Since $T|B= T'|B$, it follows that $B_{S'}=B_Q$ and $B_{R'}=B_R$. Moreover, if $\{a,b\}\subset B$, then $T|B\ne T'|B$, which implies that there exists an element $\ell\in \{a,b\}$ such that $\{\ell\}\in F^*$. Hence $$F=(F^*\setminus \{B,\{\ell\}\})\cup\{B_{S'},(B\setminus B_{S'})\cup\{\ell\}\}$$ is an agreement forest for $T$ and $T'$ with $|F|=|F^*|$.

Second suppose that $B\cap \{a,b\}\ne\emptyset$ and $B\cap \{c,d\}=\emptyset$.
Then clearly $\{c\},\{d\}\in F^*$. Since $B\cap C\ne\emptyset$, there exists no element in $F^*\setminus\{B\}$ that contains an element of $Q$ and an element of $R$.
Moreover, if $B\cap C\in \{\{a\}, \{a,b\}\}$, then no element in $F^*\setminus \{B\}$ contains an element in $C$ and an element in $X\setminus C$. Hence, 
$$F=(F^*\setminus \{B,\{c\},\{d\}\})\cup\{B_{S'},B_{R'},(B\cap C)\cup\{c,d\}\}$$
is an agreement forest for $T$ and $T'$ with $|F|=|F^*|$. For the remainder of this subcase, assume that $B\cap \{a,b\}=\{b\}$. If $B_Q\ne\emptyset$ then $\{a\}\in F^*$ and $F$ is again an agreement forest for $T$ and $T'$. Lastly, if $B_Q=\emptyset$, let $B'$ be the element in $F^*\setminus \{B\}$ such that $a\in B'$. As $T'[B]$ and $T'[B']$ are vertex disjoint, we have $B'\setminus \{a\}\subseteq Q'$. It is now straightforward to check that $F$ is an agreement forest for $T$ and $T'$.

Third suppose that $B\cap \{a,b\}=\emptyset$ and $B\cap \{c,d\}\ne\emptyset$.
If $B\cap \{c,d\}=\{c,d\}$, then $B_{S'}=B_Q$ and $B_{R'}=B_R$. It follows that $\{a\},\{b\}\in F^*$. On the other hand, if $B\cap \{c,d\}=\{c\}$ (resp. $B\cap \{c,d\}=\{d\}$), then $\{d\}\in F^*$ (resp. $\{c\}\in F^*$). Hence, $$F=(F^*\setminus \{B,\{\ell\}\})\cup\{B_{S'},(B\setminus B_{S'})\cup\{\ell\}\}$$ is an agreement forest for $T$ and $T'$ with $|F|=|F^*|$ and where $\ell\in\{b,c,d\}$ depending on which of the three elements is a singleton in $F^*$.

Fourth suppose that $B\cap C=\emptyset$. Clearly, $\{c\}, \{d\}\in F^*$. If there exists no $B'\in F^*$ such that $B'\cap (Q\cup\{a,b\})\ne\emptyset$ and $B'\cap R\ne\emptyset$, then $$F=(F^*\setminus\{B,\{c\},\{d\}\})\cup\{B_{S'},B\setminus B_{S'},\{c,d\}\}$$ is an agreement forest for $T$ and $T'$ with $|F|=|F^*|$. Hence, we may assume that $B'$ exists. Furthermore, assume first that $B'\ne B$. Since $T|B'= T'|B'$, one of the following properties holds depending on which of $a$ and $b$ is contained in $B'$.
\begin{enumerate}
\item If $B'\cap\{a,b\}=\emptyset$, then $\{a\},\{b\}\in F^*$ because $T[B']$ uses $\{p_a,p_b\}$.
\item If $B'\cap\{a,b\}=\{a\}$, then $\{b\}\in F^*$ because $T[B']$ uses $\{p_a,p_b\}$.
\item If $B'\cap\{a,b\}=\{b\}$, then $\{a\}\in F^*$ because $T'[B]$ and $T'[B']$ are vertex disjoint.
\item If $B'\cap\{a,b\}=\{a,b\}$, then $B'\cap Q=\emptyset$ because $T'[B]$ and $T'[B']$ are vertex disjoint.
\end{enumerate}
It now follows that $$F=(F^*\setminus\{B,B',\{a\},\{b\},\{c\},\{d\}\})\cup\{B_{S'},B\setminus B_{S'},C,B'\cap Q, B'\cap R\}$$ is an agreement forest for $T$ and $T'$ with $|F|< |F^*|$ if Property (1) applies, $$F=(F^*\setminus\{B,B',\{\ell\},\{c\},\{d\}\})\cup\{B_{S'},B\setminus B_{S'},C,B'\cap Q, B'\cap R\}$$ is an agreement forest for $T$ and $T'$ with $|F|\leq |F^*|$ and $\ell=b$ (resp. $\ell=a$) if Property (2) (resp. Property (3)) applies, and $$F=(F^*\setminus\{B,B',\{c\},\{d\}\})\cup\{B_{S'},B\setminus B_{S'},C, B'\cap R\}$$  is an agreement forest for $T$ and $T'$ with $|F|= |F^*|$ if Property (4) applies. Now assume that $B'=B$. Clearly $\{a\},\{b\}\in F^*$. 
Consider the bipartition $B_{S'},B_{R'}$ of $B$. 
Let $B_{S'}^Q=B_{S'}\cap Q$, $B_{S'}^R=B_{S'}\cap R$, $B_{R'}^Q=B_{R'}\cap Q$, and $B_{R'}^R=B_{R'}\cap R$. It now follows that $$F=(F^*\setminus \{B,\{a\},\{b\},\{c\},\{d\}\})\cup\{B_{S'}^Q,B_{S'}^R,B_{R'}^Q,B_{R'}^R,C\}$$ is an agreement forest for $T$ and $T'$. In particular, since $T|B= T'|B$, at least one element in $\{B_{S'}^Q,B_{S'}^R,B_{R'}^Q,B_{R'}^R\}$ is the empty set and, so, $|F|<|F^*|$. \\
\\
\noindent {{\bf Case 3.} $B_{Q'}\ne\emptyset$ and $B_{R'}\ne\emptyset$}\\
Since $T|B= T'|B$, we have that $B\cap\{a,b\}=\emptyset$ or $B\cap\{c,d\}=\emptyset$. Hence $|B\cap C|\leq 2$. There are three subcases to consider for Case 3.

First suppose that $|B\cap C|=2$.
Then there exist  two distinct element $\ell,\ell'\in C$ such that $\{\ell\},\{\ell'\}\in F^*$. If $B\cap C=\{a,b\}$ (resp. $B\cap C=\{c,d\}$), then  $B_{S'}\cup B_{R'}\subseteq R$ and $B_{Q'}\subseteq Q$ (resp. $B_{S'}\cup B_{Q'}\subseteq Q$  and $B_{R'}\subseteq R$). Hence, $$F=(F^*\setminus\{B,\{\ell\},\{\ell'\}\})\cup\{B_{S'},(B\setminus B_{S'})\cup\{\ell,\ell'\}\}$$ is an agreement forest for $T$ and $T'$ with $|F|< |F^*|$.

Second suppose that $|B\cap C|= 1$. Then there exist three distinct element $\ell,\ell',\ell''\in C$ such that $\{\ell\},\{\ell'\},\{\ell''\}\in F^*$. In turn, because there is no element in $F^*\setminus B$ that contains an element in $Q$ and an element in $R$, this implies that, except for possibly $B$, no other element in $F^*$ uses any of the three edges  $\{p_a,p_b\}$, $\{p_b,p_c\}$, and $\{p_c,p_d\}$.
Lastly, since $B\cap C\ne\emptyset$, $B_{S'}$  is either contained in $Q$ or $R$.
It now follows that $$F=(F^*\setminus\{B,\{\ell\},\{\ell'\},\{\ell''\}\})\cup\{B_{Q'},B_{R'},B_{S'}, C\}$$ is an agreement forest for $T$ and $T'$ with $|F|=|F^*|$.

Third suppose that $|B \cap C|=0$.
Then each element in $C$ is a singleton in $F^*$. If there exists no element $B'\in F^*$ such that $T[B']$ uses an edge $\{p_\ell,p_{\ell'}\}$ for two distinct $\ell,\ell'\in C$,
then $$F=(F^*\setminus\{B,\{a\},\{b\},\{c\},\{d\}\})\cup\{B_{Q'},B_{R'},B_{S'},C\}$$ is an agreement forest for $T$ and $T'$ with $|F|< |F^*|$. Otherwise, if $B'$ exists, then $B'$ is the unique such element since $B'\cap Q\ne\emptyset$ and $B'\cap R\ne\emptyset$. Hence, assuming that $B'\ne B$,
$$F=(F^*\setminus\{B,B',\{a\},\{b\},\{c\},\{d\}\})\cup\{B_{Q'},B_{R'},B_{S'},B'\cap Q, B'\cap R,C\}$$ is an agreement forest for $T$ and $T'$ with $|F|= |F^*|$. Lastly, if $B'=B$, consider the three sets $B_{Q'}$, $B_{R'}$, and $B_{S'}$. Since $T|B=T'|B$, at most one of these three sets, say $B_{S'}$, has a non-empty intersection with $Q$ and a non-empty intersection with $R$. Then $$F=(F^*\setminus\{B,\{a\},\{b\},\{c\},\{d\}\})\cup\{B_{Q'},B_{R'},B_{S'}\cap Q, B_{S'}\cap R,C\}$$ is an agreement forest for $T$ and $T'$ with $|F|= |F^*|$. An analogous argument holds if $B_{Q'}$ or $B_{R'}$ has a non-empty intersection with $Q$ and a non-empty intersection with $R$.
\qed 

\section{Explicit descriptions of Algorithm 1 and 2 for testing eligibility for Operation P or Reduction 10.}

Reductions 9 and 10 rely on Algorithms 1 and 2 to test eligibility. These algorithms do not have access to the underlying generator and have to search for the corresponding
structures in two  phylogenetic trees.  The algorithms closely mirror the analyses in the proofs of Theorems~\ref{thm:22sparse}--\ref{thm:211loop}. 


\subsection{Algorithm 1 tests whether $\{a,b,c,d\}$ is eligible for Operation P}

Assume that $T'$ has
cherries $\{a,b\}$ and $\{c,d\}$ and $T$ has a non-pendant chain $(a,b,c,d)$ where $a$ and $d$
are the outermost leaves on the chain. This can easily be confirmed in polynomial time.

If at least one of the following polynomial-time checkable conditions is true, return YES i.e. Operation P can be applied. If none of them are true, return NO/DON'T KNOW.\footnote{We write NO/DON'T KNOW because it might still be possible that $\{a,b,c,d\}$ is eligible for Operation P but for reasons that fall outside the conditions described in Theorems~\ref{thm:22sparse} and \ref{thm:22multiedge} (which
are those checked by the algorithm). However,
we do not care about such cases. Functionally speaking a NO/DON'T KNOW answer is therefore interpreted simply as NO. The same comment holds for Algorithm 2.}

\begin{enumerate}
\item In $T'$, the path from $p_a$ to $p_c$ passes through at least one CPT-eligible chain $Z$ where $Z \cap \{a,b,c,d\} = \emptyset$.
\item Any of situations (a)--(g) from Figure \ref{fig:algorithm1} occur. 
\end{enumerate}

Step 1 captures the parts of Theorems \ref{thm:22sparse} and \ref{thm:22multiedge} when the path $P$, passing through side $A$, $B$ or $M$ depending on the situation, contains at least one CPT-eligible chain. (It does not matter if the chain found does not actually lie on $A$, $B$ or $M$: it is still correct in this case to conclude that Operation P is eligible). Situation (a) of Step 2 covers the situation in Theorem \ref{thm:22sparse} when the side $A$ has
0 breakpoints and two taxa $e$ and $f$. Situation (b) is when side $A$ has the form $e|f$. Situations (c) and (d) are symmetrical to (a) and (b): when the path $P$ uses
side $B$ rather than $A$. Situations (e)--(g) concern the cases in Theorem \ref{thm:22multiedge} where $M$ is a side $e|f$, $e|$ or $|e$ respectively (and the breakpoint is with respect to $T$).

\subsection{Algorithm 2 tests whether $\{a,b,c,d\}$ is eligible for Reduction 10}

Assume without loss of generality that $T'$ has
a pendant 3-chain $(a,b,c)$ where $\{b,c\}$ is the cherry, and $T$ has two cherries $\{a,b\}$ and $\{c,d\}$. This can easily be confirmed in polynomial time.

If at least one of the following polynomial-time checkable conditions is true, return YES i.e. Reduction 10 can be applied. If none of them are true, return NO/DON'T KNOW.

\begin{enumerate}
\item In $T$, the path from $p_a$ to $p_c$ passes through at least one CPT-eligible chain  $Z$ where $Z \cap \{a,b,c,d\} = \emptyset$.
\item Any of situations (a)--(j) from Figure \ref{fig:algorithm2} occur. 
\end{enumerate}

Step 1 captures the parts of Theorems \ref{thm:211sparse} and \ref{thm:211multiedge} when the path $P$, passing through side $A$, $B$ or $M$ depending on the situation, contains at least one CPT-eligible chain. (It does not matter if the chain found does not actually lie on $A$, $B$ or $M$: it is still correct in this case to conclude that Reduction 10 is eligible). Situation (a) of Step 2 covers the situation in Theorem
\ref{thm:211sparse} when side $A$ is a 0-breakpoint side with two taxa $e$ and $f$, and (b) when $A$ is a side $e|f$. Situation (c) covers the situation when side $B$ is a 0-breakpoint side with
taxa $e$ and $f$, and situation (d) when side $B$ is a $e|f$ side; note that situations (c) and (d) are not entirely symmetrical to situations (a) and (b) due to the inherent asymmetry of $2|1|1$ sides. Situations (e)--(i) concern Theorem \ref{thm:211multiedge}. In particular, (e) is when $M$ is a 0-breakpoint side with two taxa $e$ and $f$ and
(f) is when $M$ is a 1-breakpoint side $e|f$ (where the breakpoint is with respect to $T'$). Situation (g) is when $M$ is a 0-breakpoint side with only one taxon $e$, (h) is when $M$ is a side $e|$, and
(i) is when $M$ is a side $|e$. Again, the breakpoints here are with respect to $T'$. Situation (j) reflects Theorem \ref{thm:211loop}.

\begin{figure}[t]
\center
\scalebox{1}{\input{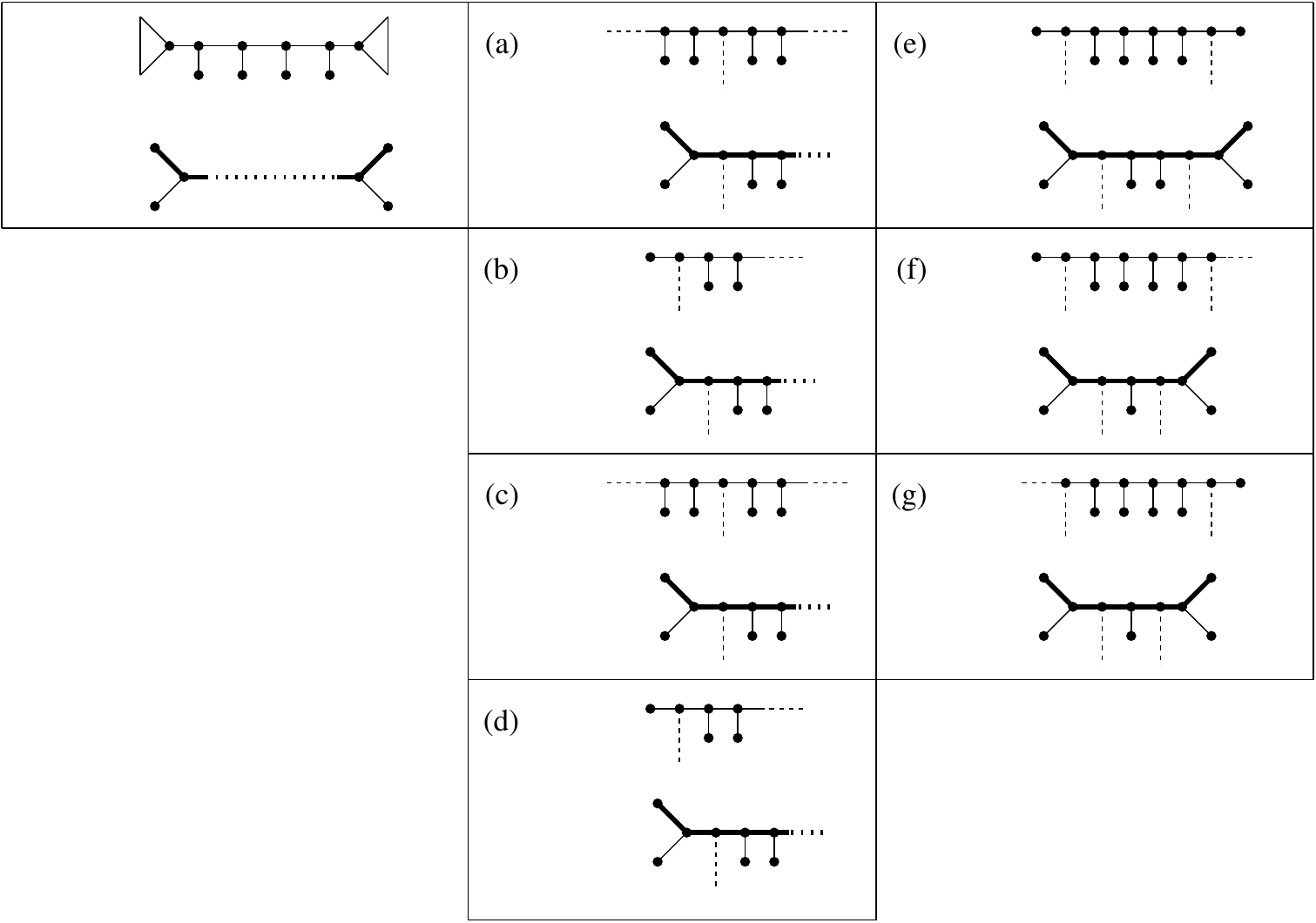_t}}
\caption{Tree topologies checked by Algorithm 1. Path $P$, as used in Theorems \ref{thm:22sparse} and \ref{thm:22multiedge}, is indicated in bold. Solid lines are edges. Dotted and dashed lines are subtrees that can be optionally present
in the tree. Figures (a)--(d) correspond to the situation when the $2|2$ side in the underlying generator is a simple edge, and (e)--(g) to when it is a multi-edge.}
\label{fig:algorithm1}
\end{figure}

\begin{figure}[t]
\center
\scalebox{1}{\input{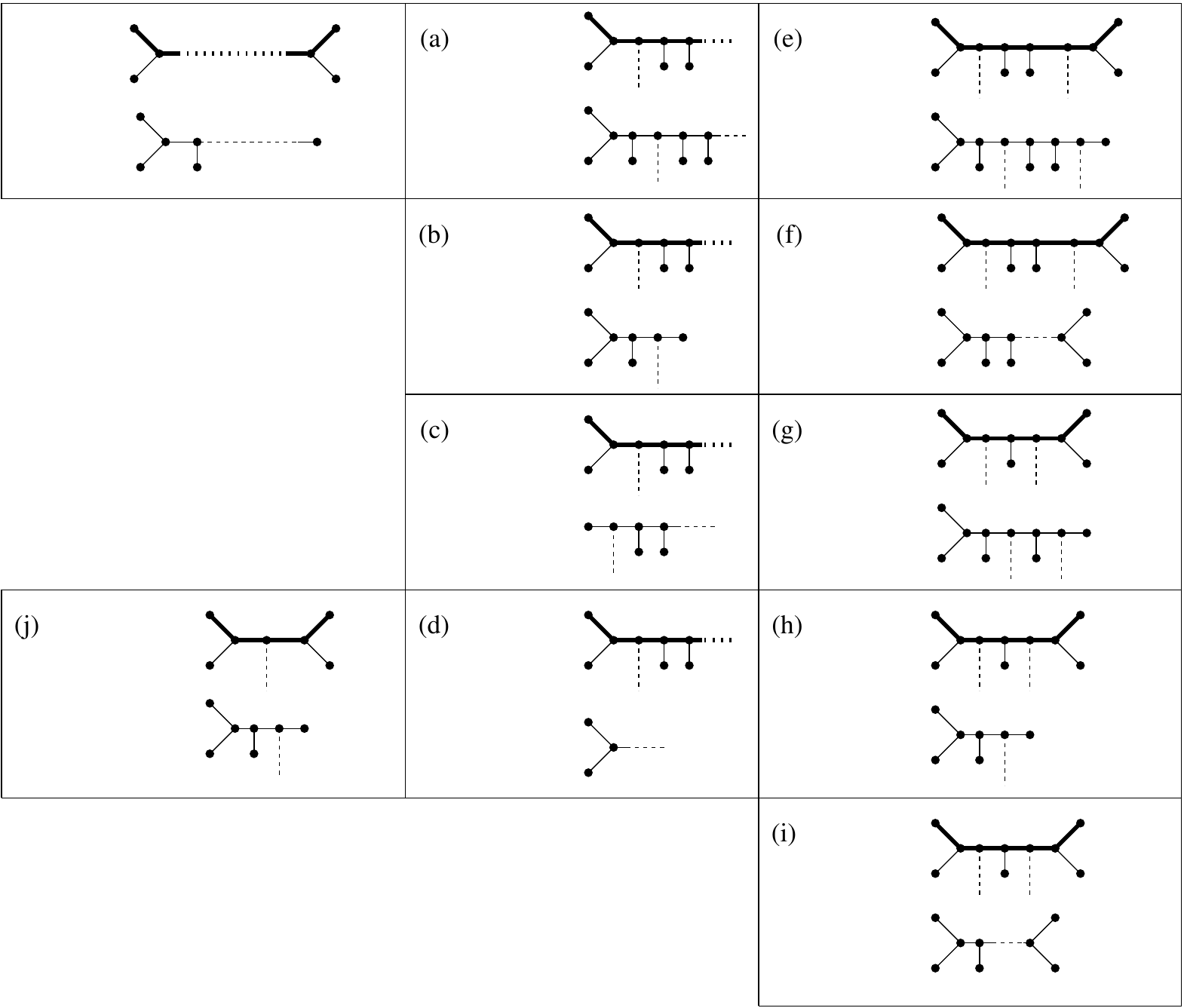_t}}
\caption{Tree topologies checked by Algorithm 2. Path $P$, as used in Theorems \ref{thm:211sparse} and \ref{thm:211multiedge},
is indicated in bold. Solid lines are edges. Dotted and dashed lines are subtrees that can be optionally present in the tree. Figures (a)--(d) correspond to the situation when the $2|1|1$ side in the underlying generator is a simple edge, and (e)--(i) to when it is a multi-edge. Figure (j) corresponds to Theorem \ref{thm:211loop}, which deals with the situation when the side in the underlying generator is a loop.}
\label{fig:algorithm2}
\end{figure}

\end{document}